\newcommand{\FNormS}[1]{\mbox{}\left\|#1\right\|_F^2}
\newcommand{\TNorm }[1]{\mbox{}\left\|#1\right\|_2  }
\newcommand{\TNormS}[1]{\mbox{}\left\|#1\right\|_2^2}
\newcommand{\VTTNorm }[1]{\mbox{}\left\|#1\right\|_2  }
\newcommand{\VTTNormS}[1]{\mbox{}\left\|#1\right\|_2^2}
\newcommand{\setlinespacing}[1]%
           {\setlength{\baselineskip}{#1 \defbaselineskip}}
\newcommand{\rank}[1]{{\bf rank}{\left(#1\right)}}
\newcommand{\abs }[1]{\left|#1\right|}
\newtheorem{definition}{Definition}
\newtheorem{lemma}{Lemma}
\newtheorem{theorem}{Theorem}
\newenvironment{Proof}{\noindent {\em Proof:}}{\\\hspace*{\fill}\mbox{$\diamond$}}
\long\def\killtext#1{}
\def\Prob{\hbox{\bf{Pr}}}
\def\Exp{\hbox{\bf{E}}}
\def\math#1{$#1$}
\def\mand#1{$$#1$$}
\def\frac#1#2{{#1\over #2}}
\def\mld#1{\begin{equation}
#1
\end{equation}}
\def\eqar#1{\begin{eqnarray}
#1
\end{eqnarray}}
\def\eqan#1{\begin{eqnarray*}
#1
\end{eqnarray*}}
\def\choose#1#2{\left({{#1}\atop{#2}}\right)}
\def\cl#1{{\cal #1}}
\def\norm#1{{\left\|#1\right\|}}
\def\dotfil{\leaders\hbox to 1.5mm{.}\hfill}
\newcounter{rmnum}
\def\RN#1{\setcounter{rmnum}{#1}\uppercase\expandafter{\romannumeral\value{rmnum}}}
\def\rn#1{\setcounter{rmnum}{#1}\expandafter{\romannumeral\value{rmnum}}}
\newcommand\remove[1]{}
\DeclareMathSymbol{\R}{\mathbin}{AMSb}{"52}
\newlength{\defbaselineskip}
\begin{document}

\title{
Fast approximation of matrix coherence and statistical leverage
}

\author{
Petros Drineas
\thanks{
%Department of Computer Science,
Dept. of Computer Science,
Rensselaer Polytechnic Institute,
%Troy, New York 12180.
Troy, NY 12180.
Email: drinep@cs.rpi.edu
}
\and
Malik Magdon-Ismail
\thanks{
%Department of Computer Science,
Dept. of Computer Science,
Rensselaer Polytechnic Institute,
%Troy, New York 12180.
Troy, NY 12180.
Email: magdon@cs.rpi.edu
}
\and
Michael W. Mahoney
\thanks{
%Department of Mathematics,
Dept. of Mathematics,
Stanford University,
%Stanford, California 94305.
Stanford, CA 94305.
Email: mmahoney@cs.stanford.edu
}
\and
David P. Woodruff
\thanks{
IBM Almaden Research Center,
650 Harry Road,
San Jose, CA 95120 USA.
Email: dpwoodru@us.ibm.com
}
}

\date{}
\maketitle

%{\bf
%\emph{
%\begin{center}
%Note:  This is still a rough draft - from \today. \\
%Please do NOT distribute.
%\end{center}
%}
%}

\begin{abstract}
The \emph{statistical leverage scores} of a matrix $A$ are the squared row-norms of the matrix containing its (top) left singular vectors and the \emph{coherence} is the largest leverage score.
These quantities are of interest in recently-popular problems such as 
matrix completion and Nystr\"{o}m-based low-rank matrix approximation as 
well as in large-scale statistical data analysis applications more 
generally; moreover, they are of interest since they define the key 
structural nonuniformity that must be dealt with in developing fast 
randomized matrix algorithms. 
Our main result is a randomized algorithm that takes as input an arbitrary $n \times d$ matrix $A$, with $n \gg d$, and that returns as output relative-error approximations to \emph{all} $n$ of the statistical leverage scores. The proposed algorithm runs (under assumptions on the precise values of $n$ and $d$) in $O(n d \log n)$  time, as opposed to the $O(nd^2)$ time required by the na\"{i}ve algorithm that involves computing an orthogonal basis for the range of~\math{A}. Our analysis may be viewed in terms of computing a relative-error approximation to an \emph{under}constrained least-squares approximation problem, or, relatedly, it may be viewed as an application of Johnson-Lindenstrauss type ideas. Several practically-important extensions of our basic result are also described, including the approximation of so-called cross-leverage scores, the extension of these ideas to matrices with $n \approx d$, and the extension to streaming environments.
\end{abstract} 

\section{Introduction}
\label{sxn:intro}

The concept of \emph{statistical leverage} measures the extent to which the
singular vectors of a matrix are correlated with the standard basis and
as such it has found usefulness recently in large-scale data analysis and
in the analysis of randomized matrix
algorithms~\cite{VW81,CUR_PNAS,DMM08_CURtheory_JRNL}.
A related notion is that of \emph{matrix coherence}, which has been of
interest in recently popular problems such as matrix completion and
Nystr\"{o}m-based low-rank matrix approximation~\cite{CR08_TR,TalRos10}.
Defined more precisely below, the statistical leverage scores may be
computed as the squared Euclidean norms of the rows of the matrix
containing the top left singular vectors and the coherence of the matrix is
the largest statistical leverage score.
Statistical leverage scores have a long history in statistical data
analysis, where they have been used for outlier detection in regression
diagnostics~\cite{HW78,CH86}.
Statistical leverage scores have also proved crucial recently in the
development of improved worst-case randomized matrix algorithms that are
also amenable to high-quality numerical implementation and that are useful
to domain scientists~\cite{DMM08_CURtheory_JRNL,CUR_PNAS,BMD09_CSSP_SODA,DMM06,Sarlos06,DMMS07_FastL2_NM10};
see~\cite{Mah-mat-rev_BOOK} for a detailed discussion.
The na\"{i}ve and best previously existing algorithm to compute these scores
would compute an orthogonal basis for the dominant part of the spectrum of
$A$, \emph{e.g.}, the basis provided by the Singular Value Decomposition (SVD)
or a basis provided by a QR decomposition~\cite{GVL96},
and then use that basis to compute diagonal elements of the projection
matrix onto the span of that basis.

We present a randomized algorithm to compute relative-error approximations
to every statistical leverage score in time qualitatively faster than the
time required to compute an orthogonal basis.
For the case of an arbitrary $n \times d$ matrix $A$, with $n \gg d$, our
main algorithm runs (under assumptions on the precise values of $n$ and $d$, see Theorem~\ref{thm:main_result} for an exact statement) in $O( n d \log n / \epsilon^2)$ time, as
opposed to the $\Theta(nd^2)$ time required by the na\"{i}ve algorithm.
As a corollary, our algorithm provides a relative-error approximation to the
coherence of an arbitrary matrix in the same time.
In addition, several practically-important extensions of the basic idea
underlying our main algorithm are also described in this paper.

\subsection{Overview and definitions}

We start with the following definition of the \emph{statistical leverage
scores} of a matrix.
\begin{definition}
\label{def:lev-scores}
Given an arbitrary $n \times d$ matrix $A$, with $n > d$, let $U$ denote
the $n \times d$ matrix consisting of the $d$ left singular vectors of $A$,
and let $U_{(i)}$ denote the $i$-th row of the matrix $U$ as a row vector.
Then, the \emph{statistical leverage scores} of the rows of $A$ are given by
\begin{equation}
\ell_i = \VTTNormS{U_{(i)}} ,
\end{equation}
for $i\in\{1,\ldots,n\}$;
the \emph{coherence} $\gamma$ of the rows of $A$ is
\begin{equation}
%% \label{eqn:coherence}
\gamma = \max_{i\in\{1,\ldots,n\}} \ell_i  ,
\end{equation}
i.e., %% \emph{i.e.},
it is the largest statistical leverage score of $A$;
%% In addition,
and the \emph{$(i,j)$-cross-leverage scores} $c_{ij}$ are
\begin{equation}
%% \label{eqn:cross-lev}
c_{ij} = \left\langle U_{(i)}, U_{(j)} \right\rangle ,
\end{equation}
i.e., %% \emph{i.e.},
they are the dot products between the $i^{th}$ row and the $j^{th}$ row of $U$.
\end{definition}

\noindent Although we have defined these quantities in terms of a particular basis, they clearly do not depend on that particular basis, but
only on the space spanned by that basis. To see this, let $P_A$ denote the projection matrix onto the span of the columns of $A$.
Then,
\begin{equation}
\ell_i = \VTTNormS{U_{(i)}}
       = \left(UU^T\right)_{ii}
       = \left(P_A\right)_{ii} .
\label{eqn:lev}
\end{equation}
That is, the statistical leverage scores of a matrix $A$ are equal to the
diagonal elements of the projection matrix onto the span of its columns.%
\footnote{In this paper, for simplicity of exposition, we consider the case 
that the matrix $A$ has rank equal to $d$, \emph{i.e.}, has full column 
rank.  Theoretically, the extension to rank-deficient matrices $A$ is 
straightforward---simply modify Definition~\ref{def:lev-scores} and thus 
Eqns.~(\ref{eqn:lev}) and~(\ref{eqn:crosslev}) to let $U$ be any orthogonal 
matrix (clearly, with fewer than $d$ columns) spanning the column space of 
$A$.  From a numerical perspective, things are substantially more subtle, 
and we leave this for future work that considers numerical implementations 
of our algorithms.}
Similarly, the $(i,j)$-cross-leverage scores are equal to the off-diagonal
elements of this projection matrix, \emph{i.e.},
\begin{equation}
c_{ij} = (P_A)_{ij} = \left\langle U_{(i)}, U_{(j)} \right\rangle  .
\label{eqn:crosslev}
\end{equation}
Clearly, $O(nd^2)$ time suffices to compute all the statistical leverage
scores exactly: simply perform the SVD or compute a QR decomposition of $A$
in order to obtain \emph{any} orthogonal basis for the range of $A$ and
then compute the Euclidean norm of the rows of the resulting matrix.
Thus, in this paper, we are interested in algorithms that run in $o(nd^2)$
time.

Several additional comments are worth making regarding this definition.
First, since $\sum_{i=1}^{n}\ell_i=\FNormS{U}=d$, we can define
a probability distribution over the rows of $A$ as \math{p_i=\ell_i/d}.
As discussed below, these probabilities have played an important role in
recent work on randomized matrix algorithms and an important algorithmic
question is the degree to which they are uniform or nonuniform.%
\footnote{Observe that if $U$ consists of $d$ columns from the identity,
then the leverage scores are extremely nonuniform: $d$ of them are equal to one
and the remainder are equal to zero. On the other hand, if $U$ consists of $d$ columns from a
normalized Hadamard transform (see Section~\ref{sxn:RHT} for a definition),
then the leverage scores are very uniform: all $n$ of them are equal to $d/n$.}
Second, one could also define leverage scores for the columns of a ``tall'' matrix
$A$, but clearly those are all equal to one unless $n<d$ or $A$ is
rank-deficient. Third, and more generally, given a rank parameter $k$, one can define the
\emph{statistical leverage scores relative to the best rank-$k$
approximation to $A$} to be the $n$ diagonal
elements of the projection matrix onto the span of $A_k$, the best rank-$k$
approximation to~$A$.

\subsection{Our main result}

Our main result is a randomized algorithm for computing relative-error approximations to every statistical leverage score, as well as an additive-error approximation to all of the large cross-leverage scores, of an arbitrary $n \times d$ matrix, with $n \gg d$, in time qualitatively faster
than the time required to compute an orthogonal basis for the range of that
matrix. Our main algorithm for computing approximations to the statistical leverage scores (see Algorithm~\ref{alg:Leverage_Scores} in
Section~\ref{sxn:mainalg}) will amount to constructing a ``randomized sketch'' of the input matrix and then computing the Euclidean norms of the
rows of that sketch. This sketch can also be used to compute approximations to the large cross-leverage scores (see Algorithm~\ref{alg:Off_Diagonal} of Section~\ref{sxn:mainalg}).

The following theorem provides our main quality-of-approximation and running time result for Algorithm~\ref{alg:Leverage_Scores}.
\begin{theorem}
\label{thm:main_result}
Let $A$ be a full-rank $n \times d$ matrix, with $n \gg d$; let $\epsilon \in (0,1/2]$ be an error parameter; and recall the definition of the statistical leverage scores $\ell_i$ from
Definition~\ref{def:lev-scores}. Then, there exists a randomized algorithm (Algorithm~\ref{alg:Leverage_Scores} of Section~\ref{sxn:mainalg} below) that returns values $\tilde{\ell}_i$, for all $i \in \{1,\ldots,n\}$, such that with probability at least~$0.8$,
\begin{equation}
\abs{\ell_i - \tilde{\ell}_i} \leq \epsilon \ell_i
\end{equation}
holds for all $i \in \{1,\ldots,n\}$. Assuming $d \leq n \leq e^d$, the running time of the algorithm is
\mand{O\left(nd\ln\left(d\epsilon^{-1}\right) + nd\epsilon^{-2}\ln n + d^3\epsilon^{-2}\left(\ln n\right)\left(\ln \left(d\epsilon^{-1}\right)\right)\right).}
\end{theorem}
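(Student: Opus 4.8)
The plan is to reduce the problem of approximating the leverage scores $\ell_i = \|U_{(i)}\|_2^2$ to a Johnson–Lindenstrauss type computation, while never forming $U$ explicitly. The key observation is that if $R$ is any $d\times d$ matrix such that $AR^{-1}$ has orthonormal columns (i.e.\ $R$ is an ``$R$-factor'' of a QR decomposition of $A$, or comes from computing an SVD of a small sketch of $A$), then $U = AR^{-1}$ and hence $\ell_i = \|(AR^{-1})_{(i)}\|_2^2 = \|e_i^T A R^{-1}\|_2^2$. Thus the leverage scores are exactly the squared row-norms of the $n\times d$ matrix $AR^{-1}$. Computing $AR^{-1}$ exactly still costs $\Theta(nd^2)$, so the second idea is to post-multiply by a Johnson–Lindenstrauss random projection $\Pi_2 \in \R^{d\times r_2}$ with $r_2 = O(\epsilon^{-2}\ln n)$ columns: by the JL lemma, with constant probability, simultaneously for all $n$ rows we have $(1-\epsilon)\|(AR^{-1})_{(i)}\|_2^2 \le \|(AR^{-1}\Pi_2)_{(i)}\|_2^2 \le (1+\epsilon)\|(AR^{-1})_{(i)}\|_2^2$, so we may take $\tilde\ell_i = \|(AR^{-1}\Pi_2)_{(i)}\|_2^2$.

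The two things that must be made fast are (i) obtaining a suitable $R$ and (ii) forming the product $A(R^{-1}\Pi_2)$. For (i), I would not compute an exact QR of $A$; instead I would first apply a structured randomized Hadamard/subsampling transform $\Pi_1$ (a Subsampled Randomized Hadamard Transform, as referenced in Section~\ref{sxn:RHT}) with roughly $O(d\ln n \cdot \poly(\epsilon^{-1}))$ rows to $A$, forming $\Pi_1 A$ in $O(nd\ln(d\epsilon^{-1}))$ time via the fast Hadamard transform, and then compute an exact QR (or SVD) of the small $\poly(d,\epsilon^{-1})\times d$ matrix $\Pi_1 A$ to get its $R$-factor; this costs $O(d^3 \epsilon^{-2}(\ln n)(\ln(d\epsilon^{-1})))$. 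The standard subspace-embedding guarantee for the SRHT ensures that $\Pi_1$ preserves the geometry of the column space of $A$ up to constant distortion, which is exactly what is needed to guarantee that $AR^{-1}$ has singular values bounded in $[1/\sqrt 2,\sqrt 2]$ or so — this is enough distortion control for the subsequent JL step to still yield relative-error bounds after adjusting constants. For (ii), the point is that $R^{-1}\Pi_2$ is a $d\times r_2$ matrix that can be computed in the small dimension (time $O(d^2 r_2)$, absorbed into the $d^3$ term up to logs), after which $A\cdot(R^{-1}\Pi_2)$ is an $(n\times d)\cdot(d\times r_2)$ product costing $O(n d r_2) = O(nd\epsilon^{-2}\ln n)$. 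Summing the three contributions gives exactly the claimed running time, and the hypothesis $d\le n\le e^d$ is used to simplify $\ln n$ versus $d$ in bookkeeping the sketch dimensions.

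For the correctness probability, I would track two failure events: the event that $\Pi_1$ fails to be a constant-distortion subspace embedding for $\mathrm{range}(A)$, and the event that $\Pi_2$ fails to preserve all $n$ row-norms of $AR^{-1}$ to within $1\pm\epsilon$; each can be driven below $0.1$ by choosing the constants in $r_1$ and $r_2$, so a union bound gives the overall success probability $0.8$. The composition step — showing that a $(1\pm c)$ distortion from $\Pi_1$ followed by a $(1\pm\epsilon)$ distortion from $\Pi_2$ yields a final $(1\pm\epsilon)$ relative-error bound on the $\ell_i$ — requires a little care: one shows that $AR^{-1} = U Q$ for some $d\times d$ matrix $Q$ with $\|Q\|_2, \|Q^{-1}\|_2$ close to $1$ because $R$ is the triangular factor of a near-isometric image of $A$; hence $\|(AR^{-1})_{(i)}\|_2^2 = \|U_{(i)}Q\|_2^2$ is within a $(1\pm O(c))$ factor of $\ell_i = \|U_{(i)}\|_2^2$ for \emph{every} $i$ simultaneously, and then one rescales $\epsilon$ by the appropriate constant before invoking JL. I expect this composition/distortion-tracking argument to be the main technical obstacle: the JL step and the fast-multiplication accounting are routine, but getting a clean \emph{relative}-error bound (rather than an additive one) uniformly over all $n$ leverage scores, out of only a constant-distortion embedding $\Pi_1$ plus a JL map $\Pi_2$, is exactly where the proof has to be done carefully, and it is what forces $\epsilon\in(0,1/2]$ and the particular dependence on $\epsilon$ in the sketch sizes.
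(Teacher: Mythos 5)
Your overall architecture is the same as the paper's: form $\Pi_1 A$ with an SRHT, extract $R^{-1}$ from its SVD (or a QR factorization), post-multiply $AR^{-1}$ by a JL matrix $\Pi_2$ with $r_2=O(\epsilon^{-2}\ln n)$ columns, read off squared row norms, and union-bound the two failure events; the running-time accounting also matches the paper's term by term. However, there is one genuine error in the correctness argument. You claim that a \emph{constant}-distortion subspace embedding $\Pi_1$ --- singular values of $AR^{-1}$ in $[1/\sqrt 2,\sqrt 2]$ --- ``is enough distortion control for the subsequent JL step to still yield relative-error bounds after adjusting constants.'' It is not. Writing $AR^{-1}=UQ$ as you do, the intermediate estimate is $\hat\ell_i=\|U_{(i)}Q\|_2^2$, and if $Q$ only satisfies $\|Q\|_2,\|Q^{-1}\|_2\le\sqrt 2$ then all one gets is $\ell_i/2\le\hat\ell_i\le 2\ell_i$; worse, the distortion factor depends on how $U_{(i)}$ aligns with the singular directions of $Q$, so it varies from row to row and cannot be removed by any global rescaling. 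The end result would be a constant-factor approximation, not $\abs{\ell_i-\tilde\ell_i}\le\epsilon\ell_i$.

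The fix --- and what the paper actually does --- is to require $\Pi_1$ to be an $\epsilon$-FJLT for $U$, \emph{i.e.}\ $\|I_d-U^T\Pi_1^T\Pi_1U\|_2\le\epsilon$, which is exactly what forces the $\epsilon^{-2}$ factor in $r_1$ that you correctly carry in your running-time bookkeeping but never use in the correctness argument. From this one gets $\|I_d-(\Pi_1U)^\dagger((\Pi_1U)^\dagger)^T\|_2=\|I_d-\Sigma_\Psi^{-2}\|_2\le\epsilon/(1-\epsilon)$ (Lemma~\ref{lemma:FJLT}), hence $\abs{\hat\ell_i-\ell_i}\le\frac{\epsilon}{1-\epsilon}\,\ell_i$ for every $i$ simultaneously (Lemma~\ref{lemma:thm1-a}); combining with the $2\epsilon$ relative error contributed by $\Pi_2$ and the triangle inequality gives $4\epsilon\ell_i$, and rescaling $\epsilon$ finishes. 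So the ``composition/distortion-tracking'' step you correctly flag as the main obstacle is resolved not by tolerating constant distortion in $\Pi_1$ but by making that distortion itself $O(\epsilon)$. (Two minor points: a matrix $AR^{-1}$ with orthonormal columns equals $UW$ for some orthogonal $W$, not $U$ itself, though the row norms agree; and the paper takes $\Pi_2$ to be a JLT for $n^2$ points --- the normalized rows plus their pairwise sums --- only because the same sketch is reused for the cross-leverage scores, so your $n$-point version suffices for Theorem~\ref{thm:main_result} alone.)
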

\noindent Algorithm~\ref{alg:Leverage_Scores} provides a relative-error approximation to all of the statistical leverage scores $\ell_i$ of~$A$ and, assuming \math{d \ln d=o\left(\frac{n}{\ln n}\right)}, \math{\ln n=o\left(d\right)}, and treating $\epsilon$ as a constant, its running time is \math{o(nd^2)}, as desired. 
As a corollary, the largest leverage score (and thus the coherence) is approximated to relative-error in $o(nd^2)$ time.

The following theorem provides our main quality-of-approximation and running time result for Algorithm~\ref{alg:Off_Diagonal}.
\begin{theorem} \label{thm:main_result-b}
Let $A$ be a full-rank $n \times d$ matrix, with $n \gg d$; let $\epsilon \in (0,1/2]$ be an error parameter; let $\kappa$ be a parameter; and recall the definition of the cross-leverage scores $c_{ij}$ from Definition~\ref{def:lev-scores}. Then, there exists a randomized algorithm (Algorithm~\ref{alg:Off_Diagonal} of Section~\ref{sxn:mainalg} below) that returns the pairs \math{\{(i,j)\}} together with estimates \math{\{\tilde c_{ij}\}} such that, with probability at least~$0.8$,
\begin{enumerate}[i.]
\item If \math{\displaystyle c_{ij}^2\ge \frac{d}{\kappa}+12\epsilon\ell_i\ell_j}, then \math{(i,j)} is returned; if \math{(i,j)} is  returned, then \math{\displaystyle
c_{ij}^2\ge \frac{d}{\kappa}-30\epsilon\ell_i\ell_j}.
\item For all pairs \math{(i,j)} that are returned, \math{\tilde c_{ij}^2-30\epsilon\ell_i\ell_j\le c_{ij}^2\le \tilde c_{ij}^2+12\epsilon\ell_i\ell_j.}
\end{enumerate}
This algorithm runs in \math{O(\epsilon^{-2}n\ln n+\epsilon^{-3}\kappa d\ln^2 n)} time.
\end{theorem}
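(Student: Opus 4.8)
The plan is to piggyback on the randomized sketch already built for Theorem~\ref{thm:main_result}. Recall that Algorithm~\ref{alg:Leverage_Scores} constructs an $n\times r_2$ matrix $\Omega$ with $r_2=\Theta(\epsilon^{-2}\ln n)$ columns, obtained by forming an approximate orthonormal basis $A\tilde R$ for the range of $A$ (via the fast subspace embedding used inside that algorithm) and then post-multiplying by a Johnson--Lindenstrauss matrix, so that $\tilde\ell_i=\|\Omega_{(i)}\|_2^2$. The estimator to analyze here is $\tilde c_{ij}=\langle\Omega_{(i)},\Omega_{(j)}\rangle$. Re-running the analysis behind Theorem~\ref{thm:main_result}, but now tracking all pairs of rows rather than only the diagonal entries, yields that with probability at least $0.9$, simultaneously for all $(i,j)$,
\[
\abs{\tilde c_{ij}-c_{ij}}=O\!\left(\epsilon\sqrt{\ell_i\ell_j}\right),
\]
by chaining (a) the perturbation $\langle U_{(i)},U_{(j)}\rangle\mapsto\langle(A\tilde R)_{(i)},(A\tilde R)_{(j)}\rangle$, controlled to within $O(\epsilon\sqrt{\ell_i\ell_j})$ by the subspace-embedding property, with (b) the Johnson--Lindenstrauss inner-product guarantee, applied by a union bound over the $O(n^2)$ pairs of rows of $A\tilde R$. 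Squaring this, using the Cauchy--Schwarz bound $\abs{c_{ij}}\le\sqrt{\ell_i\ell_j}$ (valid because $c_{ij}=\langle U_{(i)},U_{(j)}\rangle$ and $\|U_{(i)}\|_2^2=\ell_i$), and separating the one-sided error contributions gives part~(ii) with the constants $12$ and $30$; part~(i) then follows from part~(ii) once the detection rule is declared to be ``output $(i,j)$ if and only if $\tilde c_{ij}^2\ge d/\kappa$.''

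It remains to locate all pairs with $\tilde c_{ij}^2\ge d/\kappa$ without examining all $\binom n2$ of them, and the lever for this is a double sparsity bound. Since $\ell_i=\sum_j c_{ij}^2\ge c_{ii}^2+c_{ij}^2=\ell_i^2+c_{ij}^2$, any pair that is output forces $c_{ij}^2\le\ell_i$, hence $\ell_i\ge d/\kappa-O(\epsilon)$; because $\sum_i\ell_i=d$, only $O(\kappa)$ rows can occur as the first index of an output pair. Likewise, since $\sum_{i,j}c_{ij}^2=\FNormS{P_A}=d$ and each output pair contributes $\Theta(d/\kappa)$ to this sum, there are only $O(\kappa)$ output pairs in all. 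Computing the $\tilde\ell_i$ from $\Omega$ in $O(nr_2)=O(\epsilon^{-2}n\ln n)$ time, we restrict attention to the $O(\kappa)$ candidate rows whose $\tilde\ell_i$ exceeds a threshold of order $d/\kappa$; every pair that should be output has its first index among these.

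For a candidate row $i$, the vector $w^{(i)}=\Omega\,\Omega_{(i)}^T\in\R^n$ has $j$-th coordinate exactly $\tilde c_{ij}$, so identifying $i$'s heavy partners is an $\ell_2$ heavy-hitters problem for $w^{(i)}$ at threshold $d/\kappa$, and $\|w^{(i)}\|_2^2=\sum_j\tilde c_{ij}^2=O\!\left((1+\epsilon^2 d)\ell_i\right)$ bounds its mass. We therefore left-multiply $\Omega$ once by a CountSketch matrix $S$ (with $O(\ln n)$ repetitions and a bucket count calibrated to $\|w^{(i)}\|_2^2$ and the threshold) to form $S\Omega$, at cost $O(nr_2)=O(\epsilon^{-2}n\ln n)$; for each candidate row $i$ we then form the sketch $(S\Omega)\,\Omega_{(i)}^T$ of $w^{(i)}$, read off the coordinates that can exceed $d/\kappa$, and verify each recovered candidate $(i,j)$ exactly, reporting $\tilde c_{ij}$, in $O(r_2)$ further time. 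Choosing the per-row bucket count proportional to $\ell_i$ and charging the recovery and verification against $\sum_i\ell_i\le d$ makes this phase cost $O(\epsilon^{-3}\kappa d\ln^2 n)$. A union bound over the failure events of the subspace embedding, the Johnson--Lindenstrauss step, and the CountSketch recovery keeps the overall success probability at least $0.8$.

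I expect the approximation inequality to be essentially bookkeeping on top of what was already developed for Theorem~\ref{thm:main_result}, and the real work to lie in the algorithmic phase: one must choose the CountSketch dimensions so that, simultaneously over all $O(\kappa)$ candidate rows, every coordinate exceeding $d/\kappa$ is recovered while the few false positives are eliminated by exact verification, and so that the per-row cost scales with $\ell_i$ rather than with $n$ or with $\kappa$; this is exactly what the double sparsity bound, together with the bound on $\|w^{(i)}\|_2^2$, is there to support. A secondary but fiddly point is lining up the threshold comparisons so that the asymmetric constants $12$ and $30$ emerge exactly as in the statement.
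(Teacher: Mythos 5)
Your treatment of the approximation guarantees (parts i and ii) is correct and is essentially the paper's: the paper also starts from the additive bound \(\abs{\langle U_{(i)},U_{(j)}\rangle-\langle\Omega_{(i)},\Omega_{(j)}\rangle}\le\frac{3\epsilon}{1-\epsilon}\norm{U_{(i)}}_2\norm{U_{(j)}}_2\) (obtained by chaining its two inner-product lemmas), squares it, and uses \(\abs{c_{ij}}\le\sqrt{\ell_i\ell_j}\) to get exactly the constants \(12\) and \(30\). The divergence is in the search phase, and there the paper is far more elementary than your CountSketch scheme: by Cauchy--Schwarz every pair with \(\langle\Omega_{(i)},\Omega_{(j)}\rangle^2\ge\norm{\Omega^T\Omega}_F^2/\kappa'\) is also \emph{norm-heavy} (\(\norm{\Omega_{(i)}}_2^2\norm{\Omega_{(j)}}_2^2\ge\norm{\Omega^T\Omega}_F^2/\kappa'\)); a short singular-value argument shows there are at most \(\kappa' r_2\) norm-heavy pairs; sorting the row norms and running a two-pointer sweep enumerates all of them in \(O(nr_2+n\ln n+s)\) time; and each is verified by an exact inner product in \(O(r_2)\) time. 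With \(\kappa'=\kappa(1+30d\epsilon)\) this gives the stated running time, and it introduces no randomness (hence no extra failure probability) beyond \(\Pi_1\) and \(\Pi_2\).

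Your algorithmic phase has concrete gaps as written. First, the claim that there are only \(O(\kappa)\) output pairs does not follow from your packing argument: an output pair only satisfies \(c_{ij}^2\ge d/\kappa-30\epsilon\ell_i\ell_j\), which is vacuous when \(\ell_i\ell_j\gg d/\kappa\); the bound your own inequalities support is \(\sum_{i,j}\tilde c_{ij}^2\le d(1+30\epsilon d)\), hence \(O(\epsilon\kappa d)\) output pairs (compare the paper's \(\kappa' r_2\) norm-heavy candidates). Second, ``read off the coordinates that can exceed \(d/\kappa\)'' from a CountSketch of \(w^{(i)}\) is not an available operation: recovering the \emph{identities} of heavy coordinates requires either point-querying all \(n\) (or all \(O(\kappa)\) candidate) indices per candidate row --- \(O(\kappa^2)\) queries overall, far over budget when \(\kappa=n\ln n\) --- or a hierarchical/dyadic heavy-hitters structure, which you neither specify nor charge for in time or in the union bound. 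Third, a single matrix \(S\) applied once to \(\Omega\) cannot realize a ``per-row bucket count proportional to \(\ell_i\).'' None of this looks unfixable, but it is exactly the part you flag as ``the real work,'' and the paper's sort-and-sweep on norm-heavy pairs shows it can be sidestepped entirely.
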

\noindent Note that by setting \math{\kappa=n\ln n}, we can compute all the ``large'' cross-leverage scores, \emph{i.e.}, those satisfying \math{c_{ij}^2 \ge \frac{d}{n\ln n}}, to within additive-error in \math{O\left(nd\ln^3 n\right)} time (treating $\epsilon$ as a constant). If \math{\ln^3 n=o\left(d\right)} the overall running time is $o(nd^2)$, as desired.

\subsection{Significance and related work}

Our results are important for their applications to fast randomized
matrix algorithms, as well as their applications in numerical linear algebra
and large-scale data analysis more generally.

\textbf{Significance in theoretical computer science.}
The statistical leverage scores define the key structural nonuniformity
that must be dealt with (\emph{i.e.}, either rapidly approximated or rapidly
uniformized at the preprocessing step) in developing fast randomized
algorithms for matrix problems such as least-squares
regression~\cite{Sarlos06,DMMS07_FastL2_NM10} and low-rank matrix
approximation~\cite{PRTV00,Sarlos06,DMM08_CURtheory_JRNL,CUR_PNAS,BMD09_CSSP_SODA}.
Roughly, the best random sampling algorithms use these scores (or the
generalized leverage scores relative to the best rank-$k$ approximation to
$A$) as an importance sampling distribution to sample with respect to. On the other hand, the best random projection algorithms rotate to a basis where these
scores are approximately uniform and thus in which uniform sampling is
appropriate.
See~\cite{Mah-mat-rev_BOOK} for a detailed discussion.

As an example, the CUR decomposition of~\cite{DMM08_CURtheory_JRNL,CUR_PNAS}
essentially computes $p_i = \ell_i/k$, for all $i\in \left\{1,\ldots,n\right\}$ and for a rank
parameter $k$, and it uses these as an importance sampling distribution.
The computational bottleneck for these and related random sampling algorithms is the
computation of the importance sampling probabilities.
On the other hand, the
computational bottleneck for random projection algorithms is the
application of the random projection, which is sped up by using variants of
the Fast Johnson-Lindenstrauss Transform~\cite{AC06,AC06-JRNL09}. 
By our main result, the leverage scores (and thus these probabilities) can be
approximated in time that depends on an application of a Fast
Johnson-Lindenstrauss Transform.
In particular, the random sampling algorithms
of~\cite{DMM06,DMM08_CURtheory_JRNL,CUR_PNAS} for least-squares
approximation and low-rank matrix approximation now run in time that is
essentially the same as the best corresponding random projection algorithm
for those problems~\cite{Sarlos06}.

\textbf{Applications to numerical linear algebra.}
Recently, high-quality numerical implementations of variants of the basic
randomized matrix algorithms have proven superior to traditional
deterministic algorithms~\cite{RT08,RST09,AMT10}.
An important question raised by our main results is how these will compare
with an implementation of our main algorithm.
More generally, density functional theory~\cite{BKS07} and uncertainty
quantification~\cite{BCF10} are two scientific computing areas where
computing the diagonal elements of functions (such as a projection or
inverse) of very large input matrices is common.
For example, in the former case, ``heuristic'' methods based on using
Chebychev polynomials have been used in numerical linear algebra to compute
the diagonal elements of the projector~\cite{BKS07}.
Our main algorithm should have implications in both of these areas.

\textbf{Applications in large-scale data analysis.}
The statistical leverage scores and the scores relative to the best rank-$k$
approximation to $A$ are equal to the diagonal elements of the so-called
``hat matrix''~\cite{HW78,ChatterjeeHadi88}.
As such, they have a natural statistical interpretation in terms of the
``leverage'' or ``influence'' associated with each of the data
points~\cite{HW78,CH86,ChatterjeeHadi88}.
In the context of regression problems, the $i^{th}$ leverage score
quantifies the leverage or influence of the $i^{th}$ constraint/row of~$A$
on the solution of the overconstrained least squares optimization problem
$ \min_x \TNorm{Ax-b}$
and the $(i,j)$-th cross leverage score quantifies how much influence or
leverage the $i^{th}$ data point has on the $j^{th}$ least-squares
fit (see~\cite{HW78,CH86,ChatterjeeHadi88} for details).
When applied to low-rank matrix approximation problems, the leverage score
$\ell_{j}$ quantifies the amount of leverage or influence exerted by the
$j^{th}$ column of $A$ on its optimal low-rank approximation.
Historically, these quantities have been widely-used for outlier
identification in diagnostic regression
analysis~\cite{VW81,ChatterjeeHadiPrice00}.

More recently, these scores (usually the largest scores) often have an
interpretation in terms of the data and processes generating the data that
can be exploited.
For example, depending on the setting, they can have an interpretation in
terms of high-degree nodes in data graphs, very small clusters in noisy
data, coherence of information, articulation points between clusters, the
value of a customer in a network, space localization in sensor networks,
etc.~\cite{Bon87,RD02,newman05_betweenness,JLHB07,Mah-mat-rev_BOOK}.
In genetics, dense matrices of size thousands by hundreds of thousands
(a size scale at which even traditional deterministic QR algorithms fail
to run) constructed from DNA Single Nucleotide Polymorphisms (SNP) data are
increasingly common, and the statistical leverage scores can correlate
strongly with other metrics of genetic interest~\cite{Paschou07b,CUR_PNAS,Paschou10a,Paschou10b}.
Our main result will permit the computation of these scores and related
quantities for significantly larger SNP data sets than has been possible
previously~\cite{Paschou07b,Paschou10a,Paschou10b,Sayan11-unpub}.

\textbf{Remark.}
Lest there be any confusion, we should emphasize our main contributions.
First, note that statistical leverage and matrix coherence are important 
concepts in statistics and machine learning.
Second, recall that several random sampling algorithms for ubiquitous matrix 
problems such as least-squares approximation and low-rank matrix 
approximation use leverage scores in a crucial manner; but until now these 
algorithms were $\Omega(T_{SVD})$, where $T_{SVD}$ is the time required to 
compute a QR decomposition or a partial SVD of the input matrix.
Third, note that, in some cases, $o(T_{SVD})$ algorithms exist for these 
problems based on fast random projections.
But recall that the existence of 
those projection algorithms \emph{in no way implies} that it is easy or 
obvious how to compute the statistical leverage scores efficiently.
Fourth, one implication of our main result is that those random sampling 
algorithms can now be performed \emph{just as efficiently} as those 
random projection algorithms; thus, the solution for those matrix problems 
can now be obtained while preserving the identity of the rows.
That is, 
these problems can now be solved just as efficiently by using actual rows, 
rather than the arbitrary linear combinations of rows that are returned by 
random projections.
Fifth, we provide a generalization to ``fat'' matrices and to obtaining the 
cross-leverage scores.
Sixth, we develop algorithms that can compute leverage scores and related 
statistics even in streaming environments.

\subsection{Empirical discussion of our algorithms}
\label{sxn:intro-empirical}

Although the main contribution of our paper is to provide a rigorous 
theoretical understanding of fast leverage score approximation, our paper 
does analyze the theoretical performance of what is meant to be a 
practical algorithm.
Thus, one might wonder about the empirical performance of our 
algorithms---for example, whether hidden constants render the algorithms
useless for data of realistic size.
Not surprisingly, this depends heavily on the quality of the numerical 
implementation, whether one is interested in ``tall'' or more general 
matrices, etc.
We will consider empirical and numerical aspects of these algorithms in 
forthcoming papers, \emph{e.g.},~\cite{GM12_InPrep}.
We will, however, provide here a brief summary of several numerical issues
for the reader interested in these issues.

Empirically, the running time bottleneck for our main algorithm 
(Algorithm~\ref{alg:Leverage_Scores} of Section~\ref{sxn:mainalg}) applied 
to ``tall'' matrices is the application of the random projection $\Pi_1$.
Thus, empirically the running time is similar to the running time of random 
projection based methods for computing approximations to the least-squares
problem, which is also dominated by the application of the random 
projection.
The state of the art here is the Blendenpik algorithm of~\cite{AMT10} and
the LSRN algorithm of~\cite{MSM11_TR}.
In their Blendenpik paper, Avron, Maymounkov, and Toledo showed that their 
high-quality numerical implementation of a Hadamard-based random projection 
(and associated least-squares computation) ``beats \textsc{Lapack}'s%
\footnote{\textsc{Lapack} (short for Linear Algebra PACKage) is a
high-quality and widely-used software library of numerical routines for
solving a wide range of numerical linear algebra problems.}
direct dense least-squares solver by a large margin on essentially any 
dense tall matrix,'' and they concluded that their empirical results 
``suggest that random projection algorithms should be incorporated into 
future versions of \textsc{Lapack}''~\cite{AMT10}. 
The LSRN algorithm of Meng, Saunders, and Mahoney improves Blendenpik in 
several respects, \emph{e.g.}, providing better handling of sparsity and 
rank deficiency, but most notably the random projection underlying LSRN is 
particularly appropriate for solving large problems on clusters with high 
communication cost, \emph{e.g.}, it has been shown to scale well on Amazon 
Elastic Cloud Compute clusters.  
Thus, our main algorithm should extend easily to these environments with 
the use of the random projection underlying LSRN.
Moreover, for both Blendenpik and LSRN (when implemented with a 
Hadamard-based random projection), the hidden constants in the 
Hadamard-based random projection are so \emph{small} that the random 
projection algorithm (and thus the empirical running time of our main 
algorithm for approximating leverage scores) beats the traditional 
$O(nd^2)$ time algorithm for dense matrices as small as thousands of rows 
by hundreds of columns.

\subsection{Outline}

In Section~\ref{sxn:review}, we will provide a brief review of relevant notation and concepts from linear algebra. Then, in Sections~\ref{sxn:mainalg} and~\ref{sxn:mainproof}, we will present our main results: Section~\ref{sxn:mainalg} will contain our main algorithm and Section~\ref{sxn:mainproof} will contain the proof of our main theorem. Section~\ref{sxn:extensions} will then describe extensions of our main result to general ``fat'' matrices, \emph{i.e.}, those with $n \approx d$. Section~\ref{sxn:discussion} will conclude by describing the relationship of our main result with another related estimator for the statistical leverage scores, an application of our main algorithm to the under-constrained least-squares approximation problem, and extensions of our main algorithm to streaming environments. 

\section{Preliminaries on linear algebra and fast random projections}
\label{sxn:review}

\subsection{Basic linear algebra and notation}

Let $[n]$ denote the set of integers $\{1,2,\ldots,n\}$.
For any matrix $A \in \mathbb{R}^{n \times d}$, let $A_{(i)}$, $i \in [n]$,
denote the $i$-th row of $A$ as a row vector, and let $A^{(j)}$,
$j \in [d]$ denote the $j$-th column of $A$ as a column vector.
Let $\FNormS{A} = \sum_{i=1}^n \sum_{j=1}^d A_{ij}^2$ denote the square
of the Frobenius norm of $A$, and let
$\TNorm{A} = \sup_{\ \VTTNorm{x}=1} \VTTNorm{Ax}$ denote the spectral norm
of $A$.
Relatedly, for any vector $x \in \mathbb{R}^n$, its Euclidean norm (or
$\ell_2$-norm) is the square root of the sum of the squares of its elements.
The dot product between two vectors $x,y\in\mathbb{R}^{n}$ will be denoted
$\langle x,y \rangle$, or alternatively as $x^Ty$.
Finally, let $e_i \in \mathbb{R}^{n}$, for all $i \in [n]$, denote the
standard basis vectors for $\mathbb{R}^{n}$ and let $I_n$ denote the $n \times n$ identity matrix.

Let the rank of $A$ be $\rho \leq \min\{n,d\}$, in which case the
``compact'' or ``thin''
SVD of $A$ is denoted by $ A = U \Sigma V^T $,
where $U \in \mathbb{R}^{n \times \rho}$,
$\Sigma \in \mathbb{R}^{\rho \times \rho}$, and
$V \in \mathbb{R}^{d \times \rho}$.
(For a general matrix $X$, we will write $X=U_X\Sigma_XV_X^T$.)
Let $\sigma_i(A), i\in [\rho]$ denote the $i$-th singular value of $A$, and
let $\sigma_{\max}(A)$ and $\sigma_{\min}(A)$ denote the maximum and
minimum singular values of $A$, respectively. The Moore-Penrose pseudoinverse of $A$ is the $d \times n$ matrix defined
by $ A^{\dagger} = V \Sigma^{-1} U^T $~\cite{Nashed76}.
Finally, for any orthogonal matrix $U \in \mathbb{R}^{n \times \ell}$, let
$U^{\perp} \in \mathbb{R}^{n \times (n-\ell)}$ denote an orthogonal matrix
whose columns are an orthonormal basis spanning the subspace of
$\mathbb{R}^{n}$ that is orthogonal to the subspace spanned by the columns
of~$U$ (\emph{i.e.}, the range of $U$).
It is always possible to extend an orthogonal matrix $U$ to a full
orthonormal basis of $\mathbb{R}^{n}$ as $[U \quad U^{\perp} ]$.

The SVD is important for a number of reasons~\cite{GVL96}.
For example, the projection of the columns of $A$ onto the $k$ left
singular vectors associated with the top $k$ singular values gives the best
rank-$k$ approximation to $A$ in the spectral and Frobenius norms.
Relatedly, the solution to the least-squares (LS) approximation problem is
provided by the SVD:
given an $n \times d$ matrix $A$ and an $n$-vector $b$, the LS problem is to
compute the minimum $\ell_2$-norm vector $x$ such that
$
\VTTNorm{Ax-b}
$
is minimized over all vectors $x\in\mathbb{R}^{d}$. This optimal vector is given by $x_{opt}=A^{\dagger}b$.
We call a LS problem \emph{overconstrained} (or \emph{overdetermined}) if
$n>d$ and \emph{underconstrained} (or \emph{underdetermined}) if $n<d$.

\subsection{The Fast Johnson-Lindenstrauss Transform (FJLT)}
\label{sxn:FJLT}

Given \math{\epsilon>0} and a set of points
\math{x_1,\ldots,x_n} with \math{x_i\in\R^d},
a \math{\epsilon}-Johnson-Lindenstrauss Transform (\math{\epsilon}-JLT),
denoted
\math{\Pi\in\R^{r\times d}},  is a projection
of the points into \math{\R^r} such that
\begin{equation}
(1-\epsilon)\norm{x_i}_2^2\le \norm{\Pi x_i}_2^2\le
(1+\epsilon)\norm{x_i}_2^2.
%% \label{eq:JLT}
\end{equation}
To construct an \math{\epsilon}-JLT with high probability, simply choose
every entry of \math{\Pi} independently, equal to \math{\pm\sqrt{3/r}} with
probability \math{1/6} each and zero otherwise (with probability
\math{2/3})~\cite{Ach03_JRNL}.
Let~\math{\Pi_{JLT}} be a matrix drawn from such a distribution over
\math{r\times d} matrices.%
\footnote{When no confusion can arise, we will use \math{\Pi_{JLT}}
to refer to this distribution over matrices as well as to
a specific matrix drawn from this distribution.} 
Then, the following lemma holds.
\begin{lemma}[Theorem 1.1 of~\cite{Ach03_JRNL}]
\label{lemma:eJLT}
Let \math{x_1,\ldots,x_n} be an arbitrary
(but fixed) set of points, where \math{x_i\in\R^d} and let $0<\epsilon\leq 1/2$ be an accuracy parameter. If
\mand{r\ge \frac{1}{\epsilon^{2}}\left(12\ln n +6\ln\frac{1}{\delta}\right)}
then, with probability at least \math{1-\delta},
\math{\Pi_{JLT}\in\R^{r\times d}} is an \math{\epsilon}-JLT .
\end{lemma}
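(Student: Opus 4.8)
The simplest route is to invoke Theorem~1.1 of~\cite{Ach03_JRNL} directly, the only real work then being to translate the constants there into the form stated here; for completeness I sketch below the short Chernoff argument that the cited theorem rests on.

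First I would reduce to a single point. Since the defining inequality is homogeneous (and trivially true when $x_i=0$), we may assume $\norm{x_i}_2=1$ for each $i$, so it suffices to bound, for one fixed $i$, the probability that $Q_i:=\norm{\Pi_{JLT}x_i}_2^2$ lies outside $[1-\epsilon,1+\epsilon]$, and then take a union bound over $i\in\{1,\ldots,n\}$. Writing $\Pi_{JLT}=\frac{1}{\sqrt r}\,R$ with the entries $R_{aj}$ i.i.d.\ (equal to $\pm\sqrt 3$ with probability $1/6$ each and $0$ with probability $2/3$), one has $\E{R_{aj}}=0$ and $\E{R_{aj}^2}=1$; hence $Q_i=\frac1r\sum_{a=1}^{r}W_a^2$ where $W_a:=\sum_{j=1}^{d}R_{aj}(x_i)_j$ are i.i.d.\ with $\E{W_a}=0$ and $\E{W_a^2}=\norm{x_i}_2^2=1$, so that $\E{Q_i}=1$.

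The heart of the matter is the one-point concentration bound
\[
\Probab{\,\abs{Q_i-1}\ge \epsilon\,}\ \le\ 2\,e^{-\frac r2\left(\frac{\epsilon^2}{2}-\frac{\epsilon^3}{3}\right)}.
\]
To establish it, apply Markov's inequality to $e^{hrQ_i}$ for the upper tail and to $e^{-hrQ_i}$ for the lower tail, where $h>0$ is a free parameter, and factor the bound over the $r$ independent rows of $R$ as $\left(\E{e^{\pm hW_1^2}}\right)^{r}$ times the deterministic factor $e^{\mp(1\pm\epsilon)hr}$. It then remains to control $\E{e^{\pm hW_1^2}}$. The crux here is a bound of the form $\E{e^{hW_1^2}}\le (1-2h)^{-1/2}$ for $0\le h<1/2$ (and an analogous bound for the lower tail), i.e.\ the assertion that the weighted sum $W_1$ of these sparse $\pm\sqrt3$ variables is at least as tightly concentrated as a standard Gaussian; this in turn follows from the fact that the even moments of the entry distribution are dominated by those of a standard normal, $\E{R_{1j}^{2m}}=3^{m-1}\le(2m-1)!!$. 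This moment-domination argument is Achlioptas' key lemma and is the one step I expect to require genuine care; granting it, optimizing the Chernoff exponent over $h$ yields the exponent $\frac{\epsilon^2}{2}-\frac{\epsilon^3}{3}$ displayed above.

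Finally I would assemble the pieces. For $\epsilon\in(0,1/2]$ we have $\frac{\epsilon^2}{2}-\frac{\epsilon^3}{3}\ge\frac{\epsilon^2}{2}-\frac{\epsilon^2}{6}=\frac{\epsilon^2}{3}$, so the one-point failure probability is at most $2\,e^{-r\epsilon^2/6}$; a union bound over the $n$ points then shows that $\Pi_{JLT}$ fails to be an $\epsilon$-JLT with probability at most $2n\,e^{-r\epsilon^2/6}$. A routine calculation shows this is at most $\delta$ whenever $r\ge\frac{6}{\epsilon^2}\bigl(\ln n+\ln(1/\delta)+\ln 2\bigr)$, which for $n\ge2$ is implied by the hypothesis $r\ge\frac{1}{\epsilon^2}\bigl(12\ln n+6\ln(1/\delta)\bigr)$. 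This completes the proof.
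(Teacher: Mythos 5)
Your proposal is correct and matches the paper, which states this lemma purely as a citation of Theorem~1.1 of Achlioptas (2003) and offers no proof of its own; invoking that theorem and translating its constants is exactly what is intended. Your appended Chernoff sketch (reduction to unit vectors, row-wise MGF factorization, moment domination of the sparse $\pm\sqrt{3}$ entries by Gaussian moments, the exponent $\epsilon^2/2-\epsilon^3/3$, and the union bound over the $n$ points) is a faithful account of Achlioptas's argument and correctly recovers the stated bound on $r$ for $n\ge 2$.
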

\noindent For our main results, we will also need a stronger requirement than the simple \math{\epsilon}-JLT and so we will use a version of the Fast Johnson-Lindenstrauss Transform (FJLT), which was originally introduced in~\cite{AC06,AC06-JRNL09}. Consider an orthogonal matrix \math{U\in\R^{n\times d}}, viewed as \math{d} vectors in \math{\R^n}. A FJLT projects the vectors from \math{\R^n} to \math{\R^r}, while preserving the orthogonality of \math{U}; moreover, it does so very quickly. Specifically, given \math{\epsilon>0}, \math{\Pi\in\R^{r\times n}} is an \math{\epsilon}-FJLT for \math{U} if
\begin{itemize}
\item \math{\norm{I_d-U^T\Pi^T\Pi U}_2\le\epsilon}, and
\item given any \math{X\in\R^{n\times d}}, the matrix product \math{\Pi X} can be computed in \math{O(nd\ln r)} time.
\end{itemize}
The next lemma follows from the definition of an \math{\epsilon}-FJLT, and its proof can be found in~\cite{DMM06,DMMS07_FastL2_NM10}.
\begin{lemma}\label{lemma:FJLT}
Let \math{A} be any matrix in $\mathbb{R}^{n \times d}$ with $n \gg d$ and $\rank{A} = d$. Let the SVD of $A$ be $A=U\Sigma V^T$, let \math{\Pi} be an \math{\epsilon}-FJLT for \math{U} (with
\math{0<\epsilon\leq 1/2}) and let \math{\Psi=\Pi U=U_{\Psi} \Sigma_{\Psi}V^T_{\Psi}}. Then, all the following hold:
\eqar{\rank{\Pi A}&=&\rank{\Pi U}=\rank{U}=\rank{A}=d,\label{lemma:FJLTa}\\
\norm{I-\Sigma_{\Psi}^{-2}}_2&\le&
\epsilon/(1-\epsilon),\text{\ and}\label{lemma:FJLTb}\\
(\Pi A)^\dagger&=&V\Sigma^{-1}(\Pi U)^\dagger .
\label{lemma:FJLTc}
}
\end{lemma}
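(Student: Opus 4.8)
The plan is to reduce everything to a statement about the singular values of $\Psi = \Pi U$, extracted from the defining inequality of an $\epsilon$-FJLT. Since $\Psi^T\Psi = U^T\Pi^T\Pi U$, the FJLT property $\norm{I_d-U^T\Pi^T\Pi U}_2\le\epsilon$ is exactly $\norm{I_d-\Psi^T\Psi}_2\le\epsilon$. Writing the compact SVD $\Psi = U_\Psi\Sigma_\Psi V_\Psi^T$ gives $\Psi^T\Psi = V_\Psi\Sigma_\Psi^2 V_\Psi^T$ with $V_\Psi\in\mathbb{R}^{d\times d}$ orthogonal, so $\norm{I_d-\Psi^T\Psi}_2 = \norm{V_\Psi(I_d-\Sigma_\Psi^2)V_\Psi^T}_2 = \norm{I_d-\Sigma_\Psi^2}_2 \le \epsilon$. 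Hence every squared singular value $\sigma_i^2(\Psi)$ lies in $[1-\epsilon,1+\epsilon]$, and since $\epsilon\le 1/2$ these are strictly positive; in particular $\Psi$ has rank $d$ and $\Sigma_\Psi$ is invertible (this also retroactively justifies writing a $d$-column compact SVD for $\Psi$).

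First I would dispatch \eqref{lemma:FJLTa}: $\rank{\Pi U} = \rank{\Psi} = d$ was just shown; $\rank{U} = \rank{A} = d$ is the hypothesis; and $\Pi A = \Pi U\Sigma V^T = \Psi(\Sigma V^T)$ where $\Sigma V^T$ is a nonsingular $d\times d$ matrix (both $\Sigma$ and $V$ are square and nonsingular since $\rank{A}=d$), so right-multiplication by it preserves rank and $\rank{\Pi A} = \rank{\Psi} = d$. Next, for \eqref{lemma:FJLTb}, I would use $\norm{I_d-\Sigma_\Psi^{-2}}_2 = \max_i\abs{1-\sigma_i^{-2}(\Psi)}$ and bound each term by monotonicity: if $\sigma_i^2(\Psi)\le 1$ then $\abs{1-\sigma_i^{-2}(\Psi)} = \sigma_i^{-2}(\Psi)-1 \le \tfrac{1}{1-\epsilon}-1 = \tfrac{\epsilon}{1-\epsilon}$, while if $\sigma_i^2(\Psi)\ge 1$ then $\abs{1-\sigma_i^{-2}(\Psi)} = 1-\sigma_i^{-2}(\Psi) \le 1-\tfrac{1}{1+\epsilon} = \tfrac{\epsilon}{1+\epsilon} \le \tfrac{\epsilon}{1-\epsilon}$; taking the maximum gives the claim.

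For \eqref{lemma:FJLTc}, since $\Pi A = \Psi\Sigma V^T$ has full column rank $d$ by \eqref{lemma:FJLTa}, I would use the closed form $(\Pi A)^\dagger = ((\Pi A)^T\Pi A)^{-1}(\Pi A)^T$, expand $(\Pi A)^T\Pi A = V\Sigma\,\Psi^T\Psi\,\Sigma V^T$, and invert it factor by factor ($V$ and $\Sigma$ are square nonsingular, and $\Psi^T\Psi$ is nonsingular as shown), obtaining $(\Pi A)^\dagger = V\Sigma^{-1}(\Psi^T\Psi)^{-1}\Sigma^{-1}V^T\cdot V\Sigma\Psi^T = V\Sigma^{-1}(\Psi^T\Psi)^{-1}\Psi^T = V\Sigma^{-1}\Psi^\dagger$, where the last step again uses that $\Psi$ has full column rank so $\Psi^\dagger = (\Psi^T\Psi)^{-1}\Psi^T$. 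Since $\Psi = \Pi U$, this is \eqref{lemma:FJLTc}.

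The one place to be careful---and what I would call the main (minor) obstacle---is \eqref{lemma:FJLTc}: one should resist writing $(\Pi U\Sigma V^T)^\dagger = V\Sigma^{-1}(\Pi U)^\dagger$ via a blind ``reverse-order law'' $(BC)^\dagger = C^\dagger B^\dagger$, which is false for general factorizations. The identity holds here only because the trailing factor $\Sigma V^T$ is square and invertible while the leading factor $\Pi U$ has full column rank, which is exactly what the full-column-rank normal-equations computation above makes rigorous; everything else is routine bookkeeping with the singular values of $\Psi$.
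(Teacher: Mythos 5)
Your proposal is correct. The paper itself does not prove this lemma but defers it to~\cite{DMM06,DMMS07_FastL2_NM10}, and the argument given there is essentially the one you give: read off from $\norm{I_d-\Psi^T\Psi}_2\le\epsilon$ that all squared singular values of $\Psi=\Pi U$ lie in $[1-\epsilon,1+\epsilon]$ (hence the rank and $\Sigma_\Psi^{-2}$ claims), and then verify the pseudoinverse identity using the full-column-rank formula $(\Pi A)^\dagger=((\Pi A)^T\Pi A)^{-1}(\Pi A)^T$ together with the invertibility of $\Sigma V^T$, exactly as you do.
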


\subsection{The Subsampled Randomized Hadamard Transform (SRHT)}
\label{sxn:RHT}

One can use a Randomized Hadamard Transform (RHT) to construct, with high
probability, an \math{\epsilon}-FJLT. Our main algorithm will use this efficient construction in a crucial way.%
\footnote{Note that the RHT has also been crucial in the development of
$o(nd^2)$ randomized algorithms for the general overconstrained LS
problem~\cite{DMMS07_FastL2_NM10} and its variants have been used to
provide high-quality numerical implementations of such
randomized algorithms~\cite{RT08,AMT10}.}
Recall that the (unnormalized) $n \times n$ matrix of the Hadamard
transform $\hat H_n$ is defined recursively by
$$ \hat H_{2n} = \left[
\begin{array}{cc}
  \hat H_{n} & \hat H_{n} \\
 \hat  H_{n} & -\hat H_{n}
\end{array}\right] ,
$$
with \math{\hat H_1=1}. The $n \times n$ normalized matrix of the Hadamard transform is equal to $$H_n=\hat H_n/\sqrt{n}.$$ From now on, for simplicity and without loss of generality, we assume that \math{n} is a power of 2 and we will suppress \math{n} and just write \math{H}. 
(Variants of this basic construction that relax this assumption and that are
more appropriate for numerical implementation have been described and
evaluated in~\cite{AMT10}.)
Let $D \in \mathbb{R}^{n \times n}$ be a random diagonal matrix with independent diagonal entries $D_{ii}=+1$ with probability $1/2$ and $D_{ii}=-1$ with probability $1/2$. The product $HD$ is a RHT and it has three useful properties. First, when applied to a vector, it ``spreads out'' its energy. Second, computing the product $HDx$ for any vector $x \in \mathbb{R}^n$ takes $O(n\log_2 n)$ time. Third, if we only need to access $r$ elements in the transformed vector, then those $r$ elements can be computed in $O(n \log_2 r )$ time~\cite{AL08}. The Subsampled Randomized Hadamard Transform (SRHT) randomly samples (according to the uniform distribution) a set of \math{r} rows of a RHT.

Using the sampling matrix formalism described previously~\cite{dkm_matrix1,DMM06,DMM08_CURtheory_JRNL,DMMS07_FastL2_NM10}, we will represent the operation of randomly sampling $r$ rows of an
$n \times d$ matrix $A$ using an $r \times n$ linear sampling operator $S^T$. Let the  matrix $\Pi_{FJLT}=S^THD$ be generated using the SRHT.%
\footnote{Again, when no confusion can arise, we will use $\Pi_{FJLT}$ to denote a specific SRHT or the distribution on matrices implied by the randomized process for constructing an SRHT.} 
The most important property about the distribution $\Pi_{FJLT}$ is that if \math{r} is large enough, then, with high probability, $\Pi_{FJLT}$ generates an~\math{\epsilon}-FJLT. We summarize this discussion in the following lemma (which is essentially a combination of Lemmas~3 and~4 from~\cite{DMMS07_FastL2_NM10}, restated to fit our notation).
\begin{lemma}\label{thm:theorem7correct}
Let \math{\Pi_{FJLT} \in \mathbb{R}^{r \times n}} be generated using the SRHT as described above and let \math{U\in\R^{n\times d}} ($n\gg d$) be an (arbitrary but fixed) orthogonal matrix. If
\mand{r \geq \frac{14^2d\ln(40nd)}{\epsilon^2}\ln \left(\frac{30^2 d\ln(40nd)}{\epsilon^2}\right),}
then, with probability at least $0.9$, \math{\Pi_{FJLT}} is an \math{\epsilon}-FJLT for \math{U}.
\end{lemma}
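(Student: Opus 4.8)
The plan is to verify, in turn, the two defining properties of an $\epsilon$-FJLT for $U$. The running-time property is immediate from the structure $\Pi_{FJLT}=S^THD$: for any $X\in\R^{n\times d}$, forming $DX$ costs $O(nd)$, and only the $r$ rows of $HDX$ selected by $S$ are ever needed, so by the subsampled fast Hadamard transform of~\cite{AL08} those entries are produced in $O(nd\ln r)$ time. The real content of the lemma is therefore the spectral estimate $\norm{I_d-U^T\Pi_{FJLT}^T\Pi_{FJLT}U}_2\le\epsilon$. Since $HD$ is an $n\times n$ orthogonal matrix, $\tilde U:=HDU$ again has orthonormal columns and $U^T\Pi_{FJLT}^T\Pi_{FJLT}U=\tilde U^TSS^T\tilde U$, where I absorb the standard $\sqrt{n/r}$ rescaling into $S^T$ so that $\Exp[\tilde U^TSS^T\tilde U]=\tilde U^T\tilde U=I_d$ and the rescaled sampled rows are i.i.d.\ uniform. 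I would then split the argument into a preprocessing stage (controlling the coherence of $\tilde U$ over the choice of $D$) and a sampling stage (controlling $\norm{I_d-\tilde U^TSS^T\tilde U}_2$ over the choice of $S$, given that $\tilde U$ has small coherence).

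For the preprocessing stage, fix $i\in[n]$ and $k\in[d]$ and note that $(\tilde U)_{ik}=\sum_{j=1}^n H_{ij}D_{jj}U_{jk}$ is a Rademacher sum with $\sum_j H_{ij}^2U_{jk}^2=(1/n)\norm{U^{(k)}}_2^2=1/n$; Hoeffding's inequality gives $\Prob[\,|(\tilde U)_{ik}|\ge t\,]\le 2e^{-nt^2/2}$, and with $t^2=2\ln(40nd)/n$ a union bound over all $nd$ pairs $(i,k)$ shows that, with probability at least $0.95$ over $D$, $\norm{\tilde U_{(i)}}_2^2\le 2d\ln(40nd)/n$ simultaneously for all $i\in[n]$---i.e., $\tilde U$ has coherence $O(d\ln(nd)/n)$.

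For the sampling stage I would condition on this event, treat $\tilde U$ as fixed, and write $\tilde U^TSS^T\tilde U=\sum_{t=1}^rX_t$ with $X_t=(n/r)\,\tilde U_{(i_t)}^T\tilde U_{(i_t)}$, the $i_t$ being i.i.d.\ uniform on $[n]$: the $X_t$ are independent and positive semidefinite, $\sum_t\Exp[X_t]=I_d$, and $\norm{X_t}_2\le(n/r)\max_i\norm{\tilde U_{(i)}}_2^2\le 2d\ln(40nd)/r$. A matrix concentration inequality for sums of independent positive semidefinite rank-one matrices (equivalently, a Rudelson--Vershynin-type row-sampling bound, whose $\sqrt{\ln r/r}$ tail is what forces $r$ to appear inside its own logarithm) then bounds $\Prob[\,\norm{I_d-\sum_tX_t}_2\ge\epsilon\,]$ by a quantity that drops below $0.05$ precisely once $r\ge \frac{14^2 d\ln(40nd)}{\epsilon^2}\ln\left(\frac{30^2 d\ln(40nd)}{\epsilon^2}\right)$. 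A union bound over the two stages then yields total failure probability at most $0.1$, which is the claim.

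The step where genuine work is required is the sampling stage: extracting the exact logarithmic overhead in $r$ (and, with bookkeeping, the constants $14$ and $30$) from the matrix concentration bound while keeping the coherence dependence tight. Since these two stages are exactly Lemmas~3 and~4 of~\cite{DMMS07_FastL2_NM10}, a shorter alternative route is simply to quote those two lemmas and check that, after propagating both union-bound failure probabilities down to $0.05$ each, one lands on the stated threshold for $r$.
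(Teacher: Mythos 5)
Your proposal is correct and matches the paper's treatment: the paper gives no independent proof of this lemma, stating it as ``essentially a combination of Lemmas~3 and~4 from~\cite{DMMS07_FastL2_NM10},'' and your two-stage argument (Hoeffding plus a union bound to control the coherence of $HDU$ over the choice of $D$, then a Rudelson--Vershynin-type matrix concentration bound for the uniformly subsampled rows) is precisely the content of those two cited lemmas, with the failure probabilities correctly budgeted at $0.05$ each. Your closing remark that one may simply quote the two lemmas and verify the threshold on $r$ is exactly what the paper does.
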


\section{Our main algorithmic results}
\label{sxn:mainalg}

In this section, we will describe our main results for computing relative-error approximations to every statistical leverage score (see Algorithm~\ref{alg:Leverage_Scores}) as well as additive-error approximations to all of the large cross-leverage scores (see Algorithm~\ref{alg:Off_Diagonal}) of an arbitrary matrix $A \in \mathbb{R}^{n \times d}$, with $n \gg d$. Both algorithms make use of a ``randomized sketch'' of $A$ of the form $A(\Pi_1 A)^{\dagger}\Pi_2  $, where $\Pi_1$ is an \math{\epsilon}-FJLT and $\Pi_2$ is an \math{\epsilon}-JLT. We start with a high-level description of the basic ideas.

\subsection{Outline of our basic approach}
\label{sxn:mainalg-basic}

Recall that our first goal is to
approximate, for all $i \in [n]$, the quantities
\begin{equation}
\label{eqn:pidef-itoU}
\ell_i = \TNormS{U_{(i)}}
    = \TNormS{e_i^T U}   ,
\end{equation}
where $e_i$ is a standard basis vector.
The hard part of computing the scores $\ell_i$ according to
Eqn.~(\ref{eqn:pidef-itoU}) is computing an orthogonal matrix $U$ spanning
the range of $A$, which takes $O(nd^2)$ time.
Since $U U^T = A A^{\dagger}$,
it follows that
\begin{equation}
\label{eqn:pidef-itoA}
\ell_i = \TNormS{e_i^T U U^T}
    = \TNormS{e_i^T A A^{\dagger}}
    = \TNormS{ ( A A^{\dagger} )_{(i)} }   ,
\end{equation}
where the first equality follows from the orthogonality of (the columns of)
$U$. The hard part of computing the scores $\ell_i$ according to
Eqn.~(\ref{eqn:pidef-itoA}) is two-fold: first, computing the pseudoinverse;
and second, performing the matrix-matrix multiplication of $A$ and
$A^{\dagger}$.
Both of these procedures take $O(nd^2)$~time. As we will see, we can get around both of these bottlenecks by the judicious application of random projections to Eqn.~(\ref{eqn:pidef-itoA}).

To get around the bottleneck of $O(nd^2)$ time due to computing \math{A^\dagger} in Eqn.~(\ref{eqn:pidef-itoA}), we will compute the pseudoinverse of a ``smaller'' matrix that approximates $A$.
A necessary condition for such a smaller matrix is that it preserves rank. So, na\"{i}ve ideas such as uniformly sampling $r_1 \ll n$ rows from \math{A} and computing the pseudoinverse of this sampled matrix will not work well for an arbitrary $A$. For example, this idea will fail (with high probability) to return a meaningful approximation for matrices consisting of $n-1$ identical rows and a single row with a nonzero component in the direction perpendicular to that the identical rows; finding that ``outlying'' row is crucial to obtaining a relative-error approximation. This is where the SRHT enters, since it preserves important structures of \math{A}, in particular its rank, by first rotating \math{A} to a random basis and then uniformly sampling rows from the rotated matrix (see~\cite{DMMS07_FastL2_NM10} for more details). More formally, recall that the SVD of \math{A} is \math{U\Sigma V^T} and let $\Pi_1\in\R^{r_1\times n}$ be an \math{\epsilon}-FJLT for \math{U}
(using, for example, the SRHT of Lemma~\ref{thm:theorem7correct} with the appropriate choice for $r_1$).
Then, one could approximate the $\ell_i$'s of Eqn.~(\ref{eqn:pidef-itoA}) by
\begin{equation}\label{eqn:pihat}
\hat{\ell}_i = \TNormS{e_i^T A \left(\Pi_1 A\right)^{\dagger}}   ,
\end{equation}
where we approximated the $n \times d$ matrix $A$ by the $r_1 \times d$ matrix $\Pi_1 A$. Computing \math{A \left(\Pi_1 A\right)^{\dagger}} in this way takes \math{O\left(ndr_1\right)} time, which is not efficient because \math{r_1>d} (from Lemma~\ref{thm:theorem7correct}).

To get around this bottleneck, recall that we only need  the Euclidean norms of the rows of the matrix $A \left(\Pi_1A\right)^{\dagger} \in \mathbb{R}^{n \times r_1}$. Thus, we can further reduce the dimensionality of this matrix by using an \math{\epsilon}-JLT to reduce the dimension \math{r_1=\Omega(d)} to \math{r_2=O(\ln n)}. Specifically, let \math{\Pi_2^T\in\R^{r_2\times r_1}} be an
\math{\epsilon}-JLT for the rows of \math{A \left(\Pi_1A\right)^{\dagger}} (viewed as \math{n} vectors in \math{\R^{r_1}}) and consider the matrix $\Omega = A\left(\Pi_1 A\right)^{\dagger}\Pi_2$.
This $n \times r_2$ matrix $\Omega$ may be viewed as our ``randomized sketch'' of the rows of $AA^\dagger$. Then, we can compute and return
\begin{equation}
\tilde{\ell}_i
   = \TNormS{e_i^TA\left(\Pi_1 A\right)^{\dagger}\Pi_2}  ,
\end{equation}
for each $i\in[n]$, which is essentially what Algorithm~\ref{alg:Leverage_Scores} does. Not surprisingly, the sketch $A\left(\Pi_1 A\right)^{\dagger}\Pi_2$ can be used in other ways: for example, by considering the dot product between two different rows of this randomized sketching matrix (and some additional manipulations) Algorithm~\ref{alg:Off_Diagonal} approximates the large cross-leverage scores of $A$.

\subsection{Approximating all the statistical leverage scores}
\label{sxn:mainalg-diag}

%-----------------------------------------------------------------------
\begin{algorithm}[t]
\begin{framed}

\textbf{Input:} $A \in \mathbb{R}^{n \times d}$ (with SVD
\math{A=U\Sigma V^T}), error parameter $\epsilon \in (0,1/2]$.

\vspace{0.1in}

\textbf{Output:} $\tilde{\ell_i}, i \in [n]$.

\begin{enumerate}

\item Let \math{\Pi_1\in\R^{r_1\times n}} be an
\math{\epsilon}-FJLT for \math{U}, using Lemma~\ref{thm:theorem7correct} with
\mand{r_1 = \Omega\left(\frac{d\ln n}{\epsilon^2}\ln\left(\frac{d\ln n}{\epsilon^2}\right)\right).}
\item Compute \math{\Pi_1 A\in\R^{r_1\times d}} and its SVD,
\math{\Pi_1 A=U_{\Pi_1 A}\Sigma_{\Pi_1 A}V_{\Pi_1 A}^T}. Let
\math{R^{-1}=V_{\Pi_1 A}\Sigma_{\Pi_1 A}^{-1}\in\R^{d\times d}}.
\\
(Alternatively, \math{R} could be computed by a \math{QR} factorization of
\math{\Pi_1 A}.)

\item View the normalized rows of \math{AR^{-1}\in\R^{n\times d}} as \math{n} vectors in \math{\R^{d}}, and construct  \math{\Pi_2\in\R^{d\times r_2}} to be an \math{\epsilon}-JLT for \math{n^2} vectors (the aforementioned \math{n} vectors and their \math{n^2-n} pairwise sums), using Lemma~\ref{lemma:eJLT} with 
    \mand{r_2 = O\left(\epsilon^{-2}\ln n\right).}
    
%\mand{r_2 = \frac{12\ln \left(n^2\right)+6\ln10}{\epsilon^2}.}
\item Construct the matrix product \math{\Omega=AR^{-1}\Pi_2}.
\item \textbf{For all} $i \in [n]$ \textbf{compute and return}
$\tilde{\ell}_i=\TNormS{\Omega_{(i)}}$.
\end{enumerate}

\end{framed}
\caption{Approximating the (diagonal) statistical leverage
scores~\math{\ell_i}.}
\label{alg:Leverage_Scores}
\end{algorithm}
%-----------------------------------------------------------------------

Our first main result is Algorithm~\ref{alg:Leverage_Scores}, which takes as input an $n \times d$ matrix $A$ and an error parameter $\epsilon\in(0,1/2]$, and returns as output numbers $\tilde{\ell}_i$, $i\in [n]$. Although the basic idea to approximate \math{\norm{(AA^\dagger)_{(i)}}^2} was described in the previous section, we can improve the efficiency of our approach by avoiding the full sketch of the pseudoinverse. In particular, let \math{\hat A=\Pi_1 A} and let its SVD be \math{\hat A=U_{\hat A}\Sigma_{\hat A}V^T_{\hat A}}. Let \math{R^{-1}=V_{\hat A}\Sigma_{\hat A}^{-1}} and note that \math{R^{-1}\in\R^{d\times d}} is an orthogonalizer for \math{\hat A} since \math{U_{\hat A}=\hat A R^{-1}} is an orthogonal matrix.%
\footnote{This preprocessing is reminiscent of how~\cite{RT08,AMT10} preprocessed the input to provide numerical implementations of the fast relative-error algorithm~\cite{DMMS07_FastL2_NM10} for approximate LS approximation. From this perspective, Algorithm~\ref{alg:Leverage_Scores} can be viewed as specifying a particular basis $Q$, \emph{i.e.}, as choosing $Q$ to be the left singular vectors of $\Pi_1 A$.} 
In addition, note that \math{AR^{-1}} is approximately orthogonal. Thus, we can compute \math{AR^{-1}} and use it as an approximate orthogonal basis for \math{A} and then compute \math{\hat \ell_i} as the squared row-norms of \math{AR^{-1}}. The next lemma states that this is exactly what our main algorithm does; even more, we could get the same estimates by using any ``orthogonalizer'' of \math{\Pi_1 A}.
\begin{lemma}\label{lemma:AnyR}
Let \math{R^{-1}} be such that \math{Q=\Pi_1 A R^{-1}} is an orthogonal matrix with $\rank{Q} = \rank{\Pi_1 A}$. Then, \math{\norm{(AR^{-1})_{(i)}}_2^2=\hat \ell_i}.
\end{lemma}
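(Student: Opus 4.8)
The plan is to make the relationship between $AR^{-1}$ and the matrix $A(\Pi_1 A)^\dagger$ (whose squared row norms are the $\hat\ell_i$ of Eqn.~(\ref{eqn:pihat})) completely explicit, and then to observe that the two matrices differ by a right multiplication that preserves Euclidean row norms.

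First I would pin down the ranks. Since $Q$ is an orthogonal matrix it has orthonormal columns, so $Q^TQ=I_d$ and hence $\rank{Q}=d$; combined with the hypothesis $\rank{Q}=\rank{\Pi_1 A}$ this shows $\Pi_1 A$ has full column rank $d$. Applying the same count to $Q=\Pi_1 A R^{-1}$ forces $\rank{R^{-1}}=d$, so the $d\times d$ matrix $R^{-1}$ is invertible and the identity $Q=\Pi_1 A R^{-1}$ can be rewritten as $\Pi_1 A=QR$, a ``thin'' $QR$-type factorization of $\Pi_1 A$.

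Next I would compute the pseudoinverse. Because $Q$ has orthonormal columns (so $Q^\dagger=Q^T$) and $R$ is invertible (so $R^\dagger=R^{-1}$), the reverse-order law applies to $\Pi_1 A=QR$ and gives $(\Pi_1 A)^\dagger=(QR)^\dagger=R^{-1}Q^T$; if one prefers, this is a one-line verification of the four Moore--Penrose conditions using $Q^TQ=I_d$ and $R^{-1}R=I_d$. Hence $A(\Pi_1 A)^\dagger=AR^{-1}Q^T$. Finally, since $Q^TQ=I_d$, right multiplication by $Q^T$ preserves Euclidean norms of row vectors, so for every $i\in[n]$
$$\hat\ell_i=\TNormS{e_i^T A(\Pi_1 A)^\dagger}=\TNormS{e_i^T AR^{-1}Q^T}=e_i^T AR^{-1}\left(Q^TQ\right)\left(R^{-1}\right)^TA^Te_i=\TNormS{e_i^T AR^{-1}}=\norm{(AR^{-1})_{(i)}}_2^2,$$
which is the claim.

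I do not expect a genuine obstacle here; the only place that warrants a moment's care is the step $(QR)^\dagger=R^{-1}Q^T$, since the reverse-order law fails for arbitrary products and is legitimate here only because $Q$ has orthonormal columns, together with the implicit observation (needed both to make sense of the notation $R^{-1}$ and to apply that law) that the stated hypotheses already force $\Pi_1 A$ to have full column rank and $R^{-1}$ to be invertible.
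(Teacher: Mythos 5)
Your proposal is correct and follows essentially the same route as the paper: factor $\Pi_1 A = QR$, use $(\Pi_1 A)^\dagger = R^{-1}Q^T$ (justified by $Q$ having orthonormal columns and $R$ being invertible), and conclude because right multiplication by $Q^T$ preserves row norms. The only cosmetic difference is that you deduce the full column rank of $\Pi_1 A$ from the stated hypotheses rather than citing Lemma~\ref{lemma:FJLT}, which is a perfectly valid (arguably cleaner) alternative.
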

\begin{proof}
Since \math{\hat A = \Pi_1A} has rank \math{d} (by Lemma~\ref{lemma:FJLT}) and \math{R^{-1}} preserves this rank, \math{R^{-1}} is a \math{d\times d} invertible matrix. Using \math{\hat A=QR} and properties of the pseudoinverse, we get \math{\left(\hat A\right)^\dagger=R^{-1}Q^T}. Thus,
\mand{
\hat \ell_i=\norm{(A\left(\Pi_1 A\right)^\dagger)_{(i)}}_2^2
=
\norm{\left(AR^{-1}Q^T\right)_{(i)}}_2^2
=
\norm{\left(AR^{-1}\right)_{(i)}Q^T}_2^2
=
\norm{\left(AR^{-1}\right)_{(i)}}_2^2.
}
\end{proof}

\noindent
This lemma says that the $\hat{\ell}_i$ of Eqn.~(\ref{eqn:pihat}) can be 
computed with any QR decomposition, rather than with the SVD; but note that 
one would still have to post-multiply by $\Pi_2$, as in 
Algorithm~\ref{alg:Leverage_Scores}, in order to compute ``quickly'' the
approximations of the leverage scores.

\subsection{Approximating the large cross-leverage scores}
\label{sxn:mainalg-offdiag}

By combining Lemmas~\ref{lemma:thm1-a} and~\ref{lemma:thm1-b} (in
Section~\ref{sxn:mainproof-a} below) with the triangle inequality, one
immediately obtains the following lemma.
\begin{lemma}
\label{lemma:CrossAdditive}
Let $\Omega$ be either the sketching matrix constructed by
Algorithm~\ref{alg:Leverage_Scores}, \emph{i.e.}, $\Omega=AR^{-1}\Pi_2$,
or $\Omega=A\left(\Pi_1 A\right)^{\dagger}\Pi_2$ as described in
Section~\ref{sxn:mainalg-basic}.
Then, the pairwise dot-products of the rows of \math{\Omega} are
additive-error approximations to the leverage scores and cross-leverage
scores:
\mand{
\abs{
 \langle U_{(i)},U_{(j)}\rangle-\langle \Omega_{(i)},\Omega_{(j)}\rangle}
\le
\frac{3\epsilon}{1-\epsilon}\norm{U_{(i)}}_2\norm{U_{(j)}}_2.
}
\end{lemma}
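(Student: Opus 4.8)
The plan is to establish the bound in two steps — one controlling the error introduced by the FJLT $\Pi_1$ and one controlling the error introduced by the JLT $\Pi_2$ — and then to combine them by the triangle inequality; this is exactly the decomposition furnished by Lemmas~\ref{lemma:thm1-a} and~\ref{lemma:thm1-b}, which I would establish along the following lines. Throughout, everything holds deterministically on the event that $\Pi_1$ is an $\epsilon$-FJLT for $U$ and $\Pi_2^T$ is an $\epsilon$-JLT for the relevant normalized rows and their pairwise sums, which is what Algorithm~\ref{alg:Leverage_Scores} (via Lemmas~\ref{thm:theorem7correct} and~\ref{lemma:eJLT}) guarantees with the stated probability. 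A preliminary remark disposes of the two cases in the statement simultaneously: whether $\Omega = AR^{-1}\Pi_2$ or $\Omega = A(\Pi_1 A)^{\dagger}\Pi_2$, the rows being sketched by $\Pi_2$ have the same pairwise inner products — writing $\Pi_1 A = QR$ with $Q$ orthogonal as in the proof of Lemma~\ref{lemma:AnyR} gives $A(\Pi_1 A)^{\dagger} = AR^{-1}Q^T$ with $Q^TQ = I_d$, so $\langle (A(\Pi_1 A)^{\dagger})_{(i)},(A(\Pi_1 A)^{\dagger})_{(j)}\rangle = \langle (AR^{-1})_{(i)},(AR^{-1})_{(j)}\rangle$ — and in each case $\Pi_2^T$ is an $\epsilon$-JLT for the corresponding normalized rows and their pairwise sums, so the argument applies verbatim to both. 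I therefore work with $\Omega = AR^{-1}\Pi_2$.

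Step 1 (FJLT error; Lemma~\ref{lemma:thm1-a}). From $(\Pi_1 A)^{\dagger} = V\Sigma^{-1}(\Pi_1 U)^{\dagger}$ in Lemma~\ref{lemma:FJLT}, and writing $\Pi_1 U = U_{\Psi}\Sigma_{\Psi}V_{\Psi}^T$ with $V_{\Psi}$ a $d\times d$ orthogonal matrix (Lemma~\ref{lemma:FJLT} again), one gets $A(\Pi_1 A)^{\dagger} = UV_{\Psi}\Sigma_{\Psi}^{-1}U_{\Psi}^T$, hence $\langle (A(\Pi_1 A)^{\dagger})_{(i)},(A(\Pi_1 A)^{\dagger})_{(j)}\rangle = e_i^T U V_{\Psi}\Sigma_{\Psi}^{-2}V_{\Psi}^T U^T e_j$. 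Since $V_{\Psi}V_{\Psi}^T = I_d$, this differs from $\langle U_{(i)},U_{(j)}\rangle = e_i^T U V_{\Psi}V_{\Psi}^T U^T e_j$ by $e_i^T U V_{\Psi}(\Sigma_{\Psi}^{-2}-I)V_{\Psi}^T U^T e_j$, which in absolute value is at most $\norm{I-\Sigma_{\Psi}^{-2}}_2\norm{U_{(i)}}_2\norm{U_{(j)}}_2$, and hence at most $\frac{\epsilon}{1-\epsilon}\norm{U_{(i)}}_2\norm{U_{(j)}}_2$ by Eqn.~\r{lemma:FJLTb}. Specializing to $i=j$ also records $\hat\ell_i \le \frac{\ell_i}{1-\epsilon}$, which is used below.

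Step 2 (JLT error; Lemma~\ref{lemma:thm1-b}). Put $x_i = (AR^{-1})_{(i)}^T \in \R^d$, so that $\norm{x_i}_2^2 = \hat\ell_i$ by Eqn.~\r{eqn:pihat} and Lemma~\ref{lemma:AnyR}, and let $\hat x_i = x_i/\norm{x_i}_2$. By construction $\Pi_2^T$ is an $\epsilon$-JLT for every $\hat x_i$ and for every pairwise sum $\hat x_i + \hat x_j$. Using $2\langle u,v\rangle = \norm{u+v}_2^2 - \norm{u}_2^2 - \norm{v}_2^2$ with $u = \Pi_2^T\hat x_i$ and $v = \Pi_2^T\hat x_j$, and applying the $\epsilon$-JLT bound to each of $\hat x_i + \hat x_j$, $\hat x_i$ and $\hat x_j$ — all of norm $O(1)$ — shows that $\abs{\langle\Pi_2^T\hat x_i,\Pi_2^T\hat x_j\rangle - \langle\hat x_i,\hat x_j\rangle}$ is at most an absolute constant times $\epsilon$. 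Multiplying through by $\norm{x_i}_2\norm{x_j}_2$ and invoking $\norm{x_i}_2^2 = \hat\ell_i \le \frac{\ell_i}{1-\epsilon}$ from Step~1 turns this into a bound of the form $O(\epsilon/(1-\epsilon))\norm{U_{(i)}}_2\norm{U_{(j)}}_2$ on $\abs{\langle\Omega_{(i)},\Omega_{(j)}\rangle - \langle (AR^{-1})_{(i)},(AR^{-1})_{(j)}\rangle}$.

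Finally, combine Steps~1 and~2 by the triangle inequality, using the preliminary remark to identify $\langle (AR^{-1})_{(i)},(AR^{-1})_{(j)}\rangle$ with $\langle (A(\Pi_1 A)^{\dagger})_{(i)},(A(\Pi_1 A)^{\dagger})_{(j)}\rangle$: this yields an additive error $\frac{c\epsilon}{1-\epsilon}\norm{U_{(i)}}_2\norm{U_{(j)}}_2$ for some absolute constant $c$. I expect the fussiest point to be exactly the tracking of $c$ — getting the polarization identity and the two $O(\epsilon)$ contributions to combine into $c = 3$ rather than something larger — together with the companion check that $r_2 = O(\epsilon^{-2}\ln n)$ is large enough for Lemma~\ref{lemma:eJLT} to certify, through a single union bound, that $\Pi_2^T$ is simultaneously an $\epsilon$-JLT for all $n$ normalized rows and all $n^2-n$ of their pairwise sums. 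Everything else is the routine bookkeeping of Steps~1 and~2.
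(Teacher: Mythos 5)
Your proposal is correct and follows essentially the same route as the paper, which proves this lemma simply by combining Lemmas~\ref{lemma:thm1-a} and~\ref{lemma:thm1-b} with the triangle inequality (your Steps~1 and~2 reconstruct those two lemmas by the paper's own arguments, and your preliminary remark is exactly Lemma~\ref{lemma:AnyR}). The constant you flag does work out as you hope: Lemma~\ref{lemma:thm1-b} gives $2\epsilon\norm{\hat u_i}_2\norm{\hat u_j}_2$, and your observation $\hat\ell_i\le\ell_i/(1-\epsilon)$ converts this to $\frac{2\epsilon}{1-\epsilon}\norm{U_{(i)}}_2\norm{U_{(j)}}_2$, which added to the $\frac{\epsilon}{1-\epsilon}\norm{U_{(i)}}_2\norm{U_{(j)}}_2$ from Step~1 yields exactly $\frac{3\epsilon}{1-\epsilon}$.
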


\noindent
That is, if one were interested in obtaining an approximation to all the
cross-leverage scores to within additive error (and thus the diagonal
statistical leverage scores to relative-error), then the algorithm which
first computes \math{\Omega} followed by all the pairwise inner products
achieves this in time \math{T(\Omega)+O\left(n^2r_2\right)}, where \math{T(\Omega)} is
the time to compute \math{\Omega} from Section~\ref{sxn:mainalg-diag} and
\math{r_2= O(\epsilon^{-2}\ln n)}.%
\footnote{The exact algorithm which computes a basis first and then the
pairwise inner products requires \math{O(nd^2+n^2d)} time. Thus, by using the
sketch, we can already improve on this running time by a factor of
\math{d/\ln n}.} 
The challenge is to avoid the \math{n^2} computational complexity and
this can be done if one is interested only in the large cross-leverage scores.

Our second main result is provided by Algorithms~\ref{alg:Off_Diagonal} and~\ref{alg:heavy}. 
Algorithm~\ref{alg:Off_Diagonal} takes as input an $n \times d$ matrix $A$, a parameter $\kappa > 1$, and an error parameter $\epsilon\in(0,1/2]$, and returns as output a subset of  $[n] \times [n]$ and estimates \math{\tilde c_{ij}} satisfying Theorem~\ref{thm:main_result-b}. 
The first step of the algorithm is to compute the matrix $\Omega=AR^{-1}\Pi_2$ constructed by Algorithm~\ref{alg:Leverage_Scores}. 
Then, Algorithm~\ref{alg:Off_Diagonal} uses Algorithm~\ref{alg:heavy} as a subroutine to compute ``heavy hitter'' pairs of rows from a~matrix.

%-----------------------------------------------------------------------
\begin{algorithm}[h]
\begin{framed}

\textbf{Input:} $A \in \mathbb{R}^{n \times d}$ and parameters
\math{\kappa>1}, $\epsilon \in (0,1/2]$.

\vspace{0.1in}

\textbf{Output:} The set \math{\cl H} consisting of pairs \math{(i,j)} together with estimates \math{\tilde c_{ij}} satisfying Theorem~\ref{thm:main_result-b}.

\begin{enumerate}
\item
Compute the $n \times r_2$ matrix
$\Omega=AR^{-1}\Pi_2$ from
Algorithm~\ref{alg:Leverage_Scores}.
\item Use Algorithm~\ref{alg:heavy} with inputs \math{\Omega} and
\math{\kappa'=\kappa(1+30 d\epsilon)} to obtain the set
\math{\cl H} containing all the \math{\kappa'}-heavy pairs of
\math{\Omega}.
\item {\bf Return} the pairs in \math{\cl H} as the \math{\kappa}-heavy pairs
of \math{A}.
\end{enumerate}

\end{framed}
\caption{Approximating the large (off-diagonal) cross-leverage scores $c_{ij}$.}
\label{alg:Off_Diagonal}
\end{algorithm}
%-----------------------------------------------------------------------

%-----------------------------------------------------------------------
\begin{algorithm}[t]
\begin{framed}

\textbf{Input:} $X \in \mathbb{R}^{n \times r}$ with rows
\math{x_1,\ldots,x_n} and a parameter \math{\kappa>1}.

\vspace{0.1in}

\textbf{Output:} $\cl H=\{(i,j),\tilde c_{ij}\}$ containing all
heavy (unordered)
pairs. The pair
\math{(i,j),\tilde c_{ij}\in\cl H} if and only if
\math{\tilde c_{ij}^2=\langle x_i,x_j\rangle^2\ge \norm{X^TX}_F^2/\kappa}.

\begin{algorithmic}[1]
\STATE Compute the norms \math{\norm{x_i}_2} and sort the rows according
to norm, so that \math{\norm{x_1}_2\le\cdots\le\norm{x_n}_2}.
\STATE \math{\cl H\gets\{\}}; \math{z_1\gets n}; \math{z_2\gets 1}.
\WHILE{\math{z_2\le z_1}}
\WHILE{\math{\norm{x_{z_1}}_2^2\norm{x_{z_2}}_2^2<\norm{X^TX}_F^2/\kappa}}
\STATE \math{z_2\gets z_2+1}.
\IF{\math{z_2>z_1}}
\STATE {\bf return} \math{\cl H}.
\ENDIF
\ENDWHILE
\FOR{\textbf{each} pair \math{(i,j)} where \math{i=z_1} and \math{j\in\{z_2,z_2+1,\ldots,z_1\}}}
\STATE \math{\tilde c_{ij}^2=\langle x_{i},x_j\rangle^2}. 
%\IF{\math{\tilde c_{ij}^2\ge\norm{X^TX}_F^2}}
\IF{\math{\tilde c_{ij}^2\ge\norm{X^TX}_F^2/\kappa}}
\STATE add \math{(i,j)} and \math{\tilde c_{ij}} to \math{\cl H}.
\ENDIF
\STATE \math{z_1\gets z_1-1}.
\ENDFOR
\ENDWHILE
\STATE {\bf return} \math{\cl H}.
\end{algorithmic}
\end{framed}
\caption{Computing heavy pairs of a matrix.}
\label{alg:heavy}
\end{algorithm}
%-----------------------------------------------------------------------

\section{Proofs of our main theorems}
\label{sxn:mainproof}

\subsection{Sketch of the proof of Theorems ~\ref{thm:main_result}
and~\ref{thm:main_result-b}}
\label{sxn:proofsketch}

We will start by providing a sketch of the proof of Theorems~\ref{thm:main_result} and~\ref{thm:main_result-b}. A detailed proof is provided in the next two subsections. 
In our analysis, we will condition on the events that \math{\Pi_1\in\R^{r_1\times n}} is an \math{\epsilon}-FJLT for \math{U} and \math{\Pi_2\in\R^{r_1\times r_2}} is an \math{\epsilon}-JLT for \math{n^2} points in \math{\R^{r_1}}. 
Note that by setting $\delta = 0.1$ in Lemma~\ref{lemma:eJLT}, both events hold with probability at least $0.8$, which is equal to the success probability of Theorems ~\ref{thm:main_result} and~\ref{thm:main_result-b}. The algorithm estimates \math{\tilde\ell_i=\norm{\tilde u_i}_2^2}, where \math{\tilde u_i=e_i^TA(\Pi_1 A)^\dagger\Pi_2}. First, observe that the sole purpose of \math{\Pi_2} is to improve the running time while preserving pairwise inner products; this is achieved because \math{\Pi_2} is an \math{\epsilon}-JLT for \math{n^2} points. So, the results will follow if \mand{e_i^TA(\Pi_1 A)^\dagger((\Pi_1 A)^\dagger)^T A^Te_j\approx e_i^T U U^Te_j} and \math{(\Pi_1 A)^\dagger} can be computed efficiently.
Since \math{\Pi_1} is an \math{\epsilon}-FJLT for \math{U}, where
\math{A=U\Sigma V^T}, \math{(\Pi_1 A)^\dagger} can be computed
in \math{O(nd\ln r_1+r_1d^2)} time.
By Lemma~\ref{lemma:FJLT},  \math{(\Pi_1 A)^\dagger=V\Sigma^{-1}
(\Pi_1 U)^\dagger}, and so
\mand{
e_i^TA(\Pi_1 A)^\dagger((\Pi_1 A)^\dagger)^T A^Te_j
=
e_i^TU(\Pi_1 U)^\dagger{(\Pi_1 U)^\dagger}^T U^Te_j.}
Since \math{\Pi_1} is an \math{\epsilon}-FJLT for \math{U},
it follows that
\math{(\Pi_1 U)^\dagger{(\Pi_1 U)^\dagger}^T\approx I_d},
 \emph{i.e.},
that \math{\Pi_1 U} is approximately orthogonal. Theorem~\ref{thm:main_result} follows from this basic idea. However, in order to prove Theorem~\ref{thm:main_result-b}, having a sketch which preserves inner products alone is not sufficient. We also need a fast algorithm to identify the large inner products and to relate these to the actual cross-leverage scores. Indeed, it is possible to efficiently find pairs of rows in a general matrix with large inner products. Combining this with the fact that the inner products are preserved, we obtain Theorem~\ref{thm:main_result-b}.

\subsection{Proof of Theorem~\ref{thm:main_result}}
\label{sxn:mainproof-a}

We condition all our analysis on the events that
\math{\Pi_1\in\R^{r_1\times n}} is an \math{\epsilon}-FJLT for \math{U} and
\math{\Pi_2\in\R^{r_1\times r_2}}
is an \math{\epsilon}-JLT for \math{n^2} points in
\math{\R^{r_1}}.
Define
\eqan{\hat u_i&=&e_i^TA(\Pi_1 A)^\dagger, \mbox{\ and}\\
\tilde u_i&=&e_i^TA(\Pi_1 A)^\dagger\Pi_2.}
Then, \math{\hat \ell_i=\norm{\hat u_i}_2^2} and
\math{\tilde \ell_i=\norm{\tilde u_i}_2^2}.
The proof will follow from the following two lemmas.
\begin{lemma}\label{lemma:thm1-a}
For \math{i,j\in[n]},
\mld{\displaystyle
\abs{\langle U_{(i)},U_{(j)}\rangle  - \langle\hat u_i,\hat u_j\rangle}
\leq \frac{\epsilon}{1-\epsilon}\norm{U_{(i)}}_2\norm{U_{(j)}}_2.
\label{eqn:pp1}
}
\end{lemma}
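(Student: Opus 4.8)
The plan is to reduce the statement entirely to Lemma~\ref{lemma:FJLT}, which already encapsulates the relevant spectral consequences of $\Pi_1$ being an $\epsilon$-FJLT for $U$. The first step is to rewrite $\hat u_i$ in terms of $U$ alone. Since $A=U\Sigma V^T$ has full column rank, Eqn.~(\ref{lemma:FJLTc}) gives $(\Pi_1 A)^\dagger=V\Sigma^{-1}(\Pi_1 U)^\dagger$, hence $A(\Pi_1 A)^\dagger=U\Sigma V^TV\Sigma^{-1}(\Pi_1 U)^\dagger=U(\Pi_1 U)^\dagger$, using $V^TV=I_d$. Writing $\Psi=\Pi_1 U=U_\Psi\Sigma_\Psi V_\Psi^T$ as in Lemma~\ref{lemma:FJLT}, this means $\hat u_i=U_{(i)}\Psi^\dagger$, so that $\langle\hat u_i,\hat u_j\rangle=U_{(i)}\Psi^\dagger(\Psi^\dagger)^TU_{(j)}^T$.

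Next I would simplify $\Psi^\dagger(\Psi^\dagger)^T$. By Eqn.~(\ref{lemma:FJLTa}), $\Psi$ has rank $d$, i.e., full column rank, so $\Sigma_\Psi$ is invertible, $\Psi^\dagger=V_\Psi\Sigma_\Psi^{-1}U_\Psi^T$, and
$$\Psi^\dagger(\Psi^\dagger)^T=V_\Psi\Sigma_\Psi^{-1}U_\Psi^TU_\Psi\Sigma_\Psi^{-1}V_\Psi^T=V_\Psi\Sigma_\Psi^{-2}V_\Psi^T .$$
Also, since $V_\Psi$ is a $d\times d$ orthogonal matrix, $V_\Psi V_\Psi^T=I_d$, and therefore $\langle U_{(i)},U_{(j)}\rangle=U_{(i)}U_{(j)}^T=U_{(i)}V_\Psi V_\Psi^TU_{(j)}^T$. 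Subtracting the two expressions yields
$$\langle U_{(i)},U_{(j)}\rangle-\langle\hat u_i,\hat u_j\rangle=U_{(i)}V_\Psi\left(I_d-\Sigma_\Psi^{-2}\right)V_\Psi^TU_{(j)}^T .$$

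Finally I would take absolute values and apply the Cauchy--Schwarz inequality together with sub-multiplicativity of the spectral norm: the right-hand side is at most $\norm{U_{(i)}V_\Psi}_2\,\norm{I_d-\Sigma_\Psi^{-2}}_2\,\norm{V_\Psi^TU_{(j)}^T}_2$, which equals $\norm{U_{(i)}}_2\norm{U_{(j)}}_2\,\norm{I_d-\Sigma_\Psi^{-2}}_2$ because $V_\Psi$ is orthogonal. Eqn.~(\ref{lemma:FJLTb}) then bounds $\norm{I_d-\Sigma_\Psi^{-2}}_2\le\epsilon/(1-\epsilon)$, which is exactly the claimed inequality. There is no genuinely hard step here: the proof is essentially bookkeeping on top of Lemma~\ref{lemma:FJLT}. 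The only point requiring care is to invoke Eqn.~(\ref{lemma:FJLTa}) so that $\Psi=\Pi_1 U$ is known to have full column rank, justifying both the SVD manipulation of $\Psi^\dagger$ and the identity $V_\Psi V_\Psi^T=I_d$; everything else about the $\epsilon$-FJLT hypothesis enters solely through Eqn.~(\ref{lemma:FJLTb}).
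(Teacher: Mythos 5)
Your proposal is correct and follows essentially the same route as the paper: reduce via Eqn.~(\ref{lemma:FJLTc}) to $\langle\hat u_i,\hat u_j\rangle = U_{(i)}(\Pi_1U)^\dagger{(\Pi_1U)^\dagger}^TU_{(j)}^T$, use the full-rank SVD of $\Psi=\Pi_1U$ to write $\Psi^\dagger{\Psi^\dagger}^T=V_\Psi\Sigma_\Psi^{-2}V_\Psi^T$, and bound the error by $\TNorm{I_d-\Sigma_\Psi^{-2}}\TNorm{U_{(i)}}\TNorm{U_{(j)}}$ via Cauchy--Schwarz and unitary invariance, finishing with Eqn.~(\ref{lemma:FJLTb}). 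The only cosmetic difference is that you factor out $V_\Psi$ before applying Cauchy--Schwarz, whereas the paper first bounds by $\TNorm{I_d-\Psi^\dagger{\Psi^\dagger}^T}$ and then invokes unitary invariance.
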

\begin{lemma}\label{lemma:thm1-b}
For \math{i,j\in[n]},
\mld{\displaystyle
\abs{\langle\hat u_i,\hat u_j\rangle - \langle\tilde u_i,\tilde u_j\rangle}
\leq 2\epsilon\norm{\hat u_i}_2\norm{\hat u_j}_2.
\label{eqn:pp2}
}
\end{lemma}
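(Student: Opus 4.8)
The plan is to convert the norm-preservation guarantee of the \math{\epsilon}-JLT \math{\Pi_2} into an inner-product-preservation guarantee via a polarization identity, taking care to apply polarization to \emph{unit-normalized} copies of the rows \math{\hat u_i} so that the error term comes out proportional to \math{\norm{\hat u_i}_2\norm{\hat u_j}_2} rather than to \math{\norm{\hat u_i}_2^2+\norm{\hat u_j}_2^2}. We work on the (already conditioned) event that \math{\Pi_2} is an \math{\epsilon}-JLT for the relevant family of \math{n^2} vectors, which we take to be the unit vectors \math{w_i:=\hat u_i/\norm{\hat u_i}_2} together with all pairwise sums \math{w_i+w_j} and differences \math{w_i-w_j}; keeping the differences as well as the sums costs nothing in \math{r_2=O(\epsilon^{-2}\ln n)}, and (via Lemma~\ref{lemma:AnyR} and the identity \math{\langle (AR^{-1})_{(i)},(AR^{-1})_{(j)}\rangle=\langle\hat u_i,\hat u_j\rangle}) this is the same point set, up to an irrelevant orthogonal change of coordinates, as the one used in Algorithm~\ref{alg:Leverage_Scores}.

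First I would use homogeneity to reduce to the unit-vector case: if \math{\hat u_i=0} for some \math{i} then both sides of~(\ref{eqn:pp2}) vanish, and otherwise \math{\tilde u_i=\hat u_i\Pi_2=\norm{\hat u_i}_2\, w_i\Pi_2}, so that \math{\langle\hat u_i,\hat u_j\rangle-\langle\tilde u_i,\tilde u_j\rangle=\norm{\hat u_i}_2\norm{\hat u_j}_2\left(\langle w_i,w_j\rangle-\langle w_i\Pi_2,w_j\Pi_2\rangle\right)}; it therefore suffices to show \math{\abs{\langle w_i,w_j\rangle-\langle w_i\Pi_2,w_j\Pi_2\rangle}\le 2\epsilon} for unit vectors \math{w_i,w_j}. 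Next I would feed in the \math{\epsilon}-JLT bounds \math{\abs{\norm{w_i\Pi_2}_2^2-1}\le\epsilon}, \math{\abs{\norm{w_j\Pi_2}_2^2-1}\le\epsilon}, and \math{\abs{\norm{(w_i\pm w_j)\Pi_2}_2^2-\norm{w_i\pm w_j}_2^2}\le\epsilon\norm{w_i\pm w_j}_2^2}. Applying the polarization identity \math{\langle a,b\rangle=\frac12\left(\norm{a+b}_2^2-\norm{a}_2^2-\norm{b}_2^2\right)} to both \math{(w_i,w_j)} and \math{(w_i\Pi_2,w_j\Pi_2)}, subtracting, and bounding the three resulting error terms by the triangle inequality gives \math{\abs{\langle w_i,w_j\rangle-\langle w_i\Pi_2,w_j\Pi_2\rangle}\le\frac{\epsilon}{2}\left(\norm{w_i+w_j}_2^2+2\right)=\epsilon\left(2+\langle w_i,w_j\rangle\right)}, using \math{\norm{w_i+w_j}_2^2=2+2\langle w_i,w_j\rangle}; running the same computation with the ``minus'' identity \math{\langle a,b\rangle=\frac12\left(\norm{a}_2^2+\norm{b}_2^2-\norm{a-b}_2^2\right)} gives the complementary bound \math{\epsilon\left(2-\langle w_i,w_j\rangle\right)}. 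Since \math{\abs{\langle w_i,w_j\rangle}\le 1}, taking whichever of the two bounds is smaller yields \math{\abs{\langle w_i,w_j\rangle-\langle w_i\Pi_2,w_j\Pi_2\rangle}\le\epsilon\left(2-\abs{\langle w_i,w_j\rangle}\right)\le 2\epsilon}, and multiplying back by \math{\norm{\hat u_i}_2\norm{\hat u_j}_2} proves~(\ref{eqn:pp2}).

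The computations are routine; the step I expect to matter most is the pair of choices that make the constant come out to exactly \math{2\epsilon} rather than something larger: (i) normalizing before polarizing (skipping this gives only a bound of the form \math{\tfrac{3\epsilon}{2}(\norm{\hat u_i}_2^2+\norm{\hat u_j}_2^2)}, which is not of the required shape when \math{\norm{\hat u_i}_2} and \math{\norm{\hat u_j}_2} differ a lot), and (ii) having both pairwise sums and pairwise differences available in the JLT point set (using only sums would leave a worst case of \math{3\epsilon\,\norm{\hat u_i}_2\norm{\hat u_j}_2}, attained as \math{\langle w_i,w_j\rangle\to 1}). Everything else follows directly from the definition of an \math{\epsilon}-JLT and the polarization identity.
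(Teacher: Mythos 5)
Your proof is correct and follows essentially the same route as the paper's: normalize the rows to $x_i=\hat u_i/\norm{\hat u_i}_2$, invoke the $\epsilon$-JLT guarantee on these unit vectors and their pairwise combinations, polarize to convert norm preservation into inner-product preservation, and rescale by $\norm{\hat u_i}_2\norm{\hat u_j}_2$. The only substantive difference is that you also place the pairwise \emph{differences} $x_i-x_j$ in the JLT point set, which is what actually delivers the constant $2\epsilon$: the paper uses only the sums, and its ``after some algebra'' step in fact yields $\abs{\langle x_i\Pi_2,x_j\Pi_2\rangle-\langle x_i,x_j\rangle}\le\epsilon\left(2+\langle x_i,x_j\rangle\right)\le 3\epsilon$ in the worst case, so your variant is a harmless (the final $\epsilon$ is rescaled anyway) but genuine tightening of the stated bound.
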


\noindent
Lemma ~\ref{lemma:thm1-a} states that \math{\langle\hat u_i,\hat u_j\rangle}
is an additive error approximation to all the cross-leverage scores ($i \neq j$) and
a relative error approximation for the diagonals ($i =j$). Similarly,
Lemma~\ref{lemma:thm1-b} shows that these cross-leverage scores are preserved
by \math{\Pi_2}. Indeed, with \math{i=j},
from Lemma~\ref{lemma:thm1-a} we have
\math{|\hat \ell_i-\ell_i|\le\frac{\epsilon}{1-\epsilon}\ell_i},
and from Lemma~\ref{lemma:thm1-b} we have
\math{|\hat \ell_i-\tilde \ell_i|\le 2\epsilon\hat \ell_i}.
Using the triangle inequality and $\epsilon \leq 1/2$:
\mand{
\abs{\ell_i - \tilde{\ell}_i}
= \abs{\ell_i - \hat{\ell}_i + \hat{\ell}_i - \tilde{\ell}_i}
\leq \abs{\ell_i - \hat{\ell}_i} + \abs{\hat{\ell}_i - \tilde{\ell}_i}
\leq \left(\frac{\epsilon}{1-\epsilon}+2\epsilon\right) \ell_i
\le 4\epsilon\ell_i.
}
The theorem follows after rescaling \math{\epsilon}.

\paragraph{Proof of Lemma~\ref{lemma:thm1-a}.}
Let \math{A=U\Sigma V^T}.
Using this SVD of \math{A} and
Eqn.~(\ref{lemma:FJLTc})
in Lemma~\ref{lemma:FJLT},
\begin{eqnarray*}
\langle\hat u_i,\hat u_j\rangle
   &=& e_i^T U \Sigma V^T V \Sigma^{-1}  \left(\Pi_1U\right)^{\dagger}
{\left(\Pi_1U\right)^{\dagger}}^T\Sigma^{-1} V^T V \Sigma U^T e_j
   = e_i^T U \left(\Pi_1U\right)^{\dagger}
{\left(\Pi_1U\right)^{\dagger}}^T U^T e_j.
\end{eqnarray*}
By performing standard manipulations, we can now bound
\math{\abs{\langle U_{(i)},U_{(j)}\rangle  - \langle\hat u_i,\hat u_j\rangle}}:
\begin{eqnarray}
\nonumber
\abs{\langle U_{(i)},U_{(j)}\rangle  - \langle\hat u_i,\hat u_j\rangle}
\nonumber
   &=& {e_i^T UU^Te_j -
e_i^T U \left(\Pi_1U\right)^{\dagger}\left(\Pi_1U\right)^{\dagger T}U^Te_j}\\
\nonumber
   &=& {e_i^T U\left(I_d - \left(\Pi_1U\right)^{\dagger}\left(\Pi_1U
\right)^{\dagger T}\right)U^Te_j}\\
\nonumber
   &\leq& \TNorm{I_d - \left(\Pi_1U\right)^{\dagger}\left(\Pi_1U\right)^{\dagger T}}\TNorm{U_{(i)}}\TNorm{U_{(j)}}.
\end{eqnarray}
Let the SVD of \math{\Psi=\Pi_1U} be \math{\Psi=U_{\Psi}\Sigma_{\Psi} V_{\Psi}^T},
where \math{V_\Psi} is a full rotation in \math{d} dimensions (because
\math{\rank{A}=\rank{\Pi_1U}}).
Then, \math{\Psi^\dagger{\Psi^\dagger}^T=V_{\Psi}\Sigma_{\Psi}^{-2}
V_{\Psi}^T}. Thus,
\begin{eqnarray}
\nonumber
\abs{\langle U_{(i)},U_{(j)}\rangle  - \langle\hat u_i,\hat u_j\rangle}
 &\leq& \TNorm{I_d - V_{\Psi}\Sigma_{\Psi}^{-2}V_{\Psi}^T}\TNorm{U_{(i)}}\TNorm{U_{(j)}}\\
\nonumber &=& \TNorm{V_{\Psi}V_{\Psi}^T - V_{\Psi}\Sigma_{\Psi}^{-2}V_{\Psi}^T}\TNorm{U_{(i)}}\TNorm{U_{(j)}}\\
\nonumber &=& \TNorm{I_d - \Sigma_{\Psi}^{-2}}\TNorm{U_{(i)}}\TNorm{U_{(j)}},
\end{eqnarray}
where we used the fact that
$V_{\Psi}V_{\Psi}^T = V_{\Psi}^TV_{\Psi} = I_d$ and the unitary invariance of the spectral norm. Finally, using Eqn.~(\ref{lemma:FJLTb}) of
Lemma~\ref{lemma:FJLT} the result follows.

\paragraph{Proof of Lemma~\ref{lemma:thm1-b}.} 
%% WHAT IS CWXX %% (Similar to Theorem 4.2 of~\cite{CWXX} and~\cite{Sarlos06}.)
Since \math{\Pi_2} is an \math{\epsilon}-JLT for \math{n^2} vectors, it preserves the norms of an arbitrary (but fixed) collection of \math{n^2} vectors. Let \math{x_i=\hat u_i/\norm{\hat u_i}_2}.
Consider the following \math{n^2} vectors:
\eqan{
&x_i&\text{for }i\in[n], \mbox{\ and}\\
&x_i+x_j&\text{for }i,j\in[n], i\neq j.
}
By the \math{\epsilon}-JLT property of \math{\Pi_2} and the fact that
\math{\norm{x_i}_2=1},
\eqar{
&1-\epsilon\le \norm{x_i\Pi_2}_2^2\le 1+\epsilon&\text{for }i\in[n], \mbox{\ and}
\label{eq:eJLT-a}\\
&(1-\epsilon)\norm{x_i+x_j}_2^2\le \norm{x_i\Pi_2+x_j\Pi_2}_2^2\le
(1+\epsilon)\norm{x_i+x_j}_2^2&\text{for }i,j\in[n], i\not=j.
\label{eq:eJLT-b}
}
Combining Eqns.~(\ref{eq:eJLT-a}) and~(\ref{eq:eJLT-b}) after expanding the
squares using the identity \math{\norm{a+b}^2=\norm{a}^2+\norm{b}^2+2\langle a,b\rangle},
substituting \math{\norm{x_i}=1}, and after some algebra, we obtain
\mand{
\langle x_i,x_j\rangle-2\epsilon\le
\langle x_i\Pi_2,x_j\Pi_2\rangle\le
\langle x_i,x_j\rangle+2\epsilon.
}
To conclude the proof,
multiply throughout by \math{\norm{\hat u_i}\norm{\hat u_j}} and
use the homogeneity of the inner product, together with the
linearity of \math{\Pi_2}, to obtain:
\mand{
\langle \hat u_i,\hat u_j\rangle-2\epsilon\norm{\hat u_i}\norm{\hat u_j}\le
\langle \hat u_i\Pi_2,\hat u_j\Pi_2\rangle\le
\langle \hat u_i,\hat u_j\rangle+2\epsilon\norm{\hat u_i}\norm{\hat u_j}.
}

\paragraph{Running Times.} By Lemma~\ref{lemma:AnyR}, we can use \math{V_{\Pi_1A}\Sigma_{\Pi_1A}^{-1}} instead of \math{(\Pi_1 A)^\dagger} and obtain the same estimates. Since \math{\Pi_1} is an \math{\epsilon}-FJLT, the product \math{\Pi_1 A} can be computed in \math{O(nd\ln r_1)} while its SVD takes an additional \math{O(r_1d^2)} time to return \math{V_{\Pi_1A}\Sigma_{\Pi_1A}^{-1}\in\R^{d\times d}}. Since \math{\Pi_2\in\R^{d\times r_2}}, we obtain \math{V_{\Pi_1A}\Sigma_{\Pi_1A}^{-1}\Pi_2\in\R^{d\times r_2}} in an additional \math{O(r_2 d^2)} time. Finally, premultiplying by \math{A} takes \math{O(ndr_2)} time, and computing and returning the squared row-norms of \math{\Omega=AV_{\Pi_1A}\Sigma_{\Pi_1A}^{-1}\Pi_2\in\R^{n\times r_2}} takes \math{O\left(nr_2\right)} time. So, the total running time is the sum of all these operations, which is
\mand{O(nd\ln r_1+ndr_2+r_1d^2+r_2d^2).}
For our implementations of the \math{\epsilon}-JLTs and \math{\epsilon}-FJLTs ($\delta = 0.1$), \math{r_1=O\left(\epsilon^{-2}d\left(\ln n\right)\left(\ln\left(\epsilon^{-2}d\ln n\right)\right)\right)} and \math{r_2=O(\epsilon^{-2}\ln n)}. It follows that the asymptotic running time is
\mand{O\left(nd\ln\left(d\epsilon^{-1}\right) + nd\epsilon^{-2}\ln n + d^3\epsilon^{-2}\left(\ln n\right)\left(\ln \left(d\epsilon^{-1}\right)\right)\right).}
To simplify, suppose that \math{d\leq n\leq e^d} and treat \math{\epsilon} as a constant. Then, the asymptotic running time is
\mand{O\left(nd\ln n + d^3 \left(\ln n\right) \left(\ln d\right)\right).}

\subsection{Proof of Theorem~\ref{thm:main_result-b}}
\label{sxn:mainproof-b}

We first  construct an algorithm to estimate the large inner products among the rows of an arbitrary matrix \math{X\in\R^{n\times r}} with \math{n>r}. This general algorithm will be applied to the matrix \math{\Omega=A V_{\Pi_1 A}\Sigma_{\Pi_1 A}^{-1}\Pi_2}. Let $x_1,\ldots,x_n$ denote the rows of $X$; for a given \math{\kappa>1}, the pair \math{(i,j)} is \emph{heavy} if
\mand{\langle x_i,x_j\rangle^2\ge \frac{1}{\kappa}\norm{X^TX}_F^2.}
By the Cauchy-Schwarz inequality, this implies that
\mld{
\norm{x_i}^2_2\norm{x_j}^2_2
\ge \frac{1}{\kappa}\norm{X^TX}_F^2,\label{eq:superlarge}
}
so it suffices to find all the pairs \math{(i,j)} for which Eqn.~(\ref{eq:superlarge}) holds. We will call such pairs \emph{norm-heavy}. Let \math{s} be the number of norm-heavy pairs satisfying Eqn.~(\ref{eq:superlarge}). We first bound the number of
such pairs.
\begin{lemma}
Using the above notation, \math{s\le \kappa r}.
\end{lemma}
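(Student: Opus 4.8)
The plan is to bound the number $s$ of norm-heavy pairs by a double-counting / summation argument using the identity $\sum_i \norm{x_i}_2^2 = \FNormS{X}$ together with the observation that $\FNormS{X^TX} = \sum_{i,j}\langle x_i,x_j\rangle^2 \le \big(\sum_i \norm{x_i}_2^2\big)^2 = \FNormQ{X}$ (the last inequality is just $\FNorm{X^TX} \le \FNormS{X}$, since $X^TX$ has the same nonzero singular values squared as $X$, or more simply Cauchy--Schwarz as already invoked in the text). First I would sort the rows so that $\norm{x_1}_2 \le \cdots \le \norm{x_n}_2$ and write $w_i = \norm{x_i}_2^2$, so that $\sum_i w_i = \FNormS{X}$. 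A pair $(i,j)$ is norm-heavy precisely when $w_i w_j \ge \FNormS{X^TX}/\kappa$.

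Next I would partition the norm-heavy pairs according to the smaller index (equivalently the smaller norm). For a fixed $i$, the pair $(i,j)$ with $j \ge i$ is norm-heavy only if $w_j \ge \FNormS{X^TX}/(\kappa w_i)$; since the $w_j$ are sorted and $\sum_j w_j = \FNormS{X}$, there can be at most $\kappa w_i \FNormS{X} / \FNormS{X^TX}$ such indices $j$ (each contributing at least $\FNormS{X^TX}/(\kappa w_i)$ to the total mass $\FNormS{X}$). Summing over $i$, the total number of norm-heavy pairs is at most $\sum_i \kappa w_i \FNormS{X}/\FNormS{X^TX} = \kappa \FNormQ{X}/\FNormS{X^TX}$. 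Now apply $\FNormS{X^TX} \ge \FNormQ{X} / r$ — wait, that is the wrong direction; instead I use the reverse bound. Actually the clean way: since $\text{rank}(X^TX) \le r$, we have $\FNormS{X^TX} \ge \TNormS{X^TX}\cdot 1$ is not enough, so I instead note $\FNormS{X^TX} = \sum_k \sigma_k^4 \ge \frac{1}{r}\big(\sum_k \sigma_k^2\big)^2 = \frac{1}{r}\FNormQ{X}$ by Cauchy--Schwarz over the (at most $r$) nonzero singular values $\sigma_k$ of $X$. Hence $\FNormQ{X}/\FNormS{X^TX} \le r$, giving $s \le \kappa r$.

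The only mild subtlety — and the step I expect to need the most care — is the counting argument for a fixed $i$: one must be careful that the indices $j$ counted for different values of $i$ may overlap, so the naive per-$i$ bound must be summed rather than used to conclude something about distinct $j$'s; but since we are bounding ordered-or-unordered pairs $(i,j)$ (not the set of $j$'s), summing $\sum_i (\text{number of valid } j)$ is exactly the right quantity, and the potential double-count of a pair $(i,j)$ with both orderings only costs a harmless factor of $2$ which can be absorbed (or avoided entirely by always taking $i$ to be the index of smaller norm and counting $j \ge i$, as the sorted setup in Algorithm~\ref{alg:heavy} does). The inequality $\FNormS{X^TX} \ge \frac1r \FNormQ{X}$ is the other ingredient, and it is elementary once one observes $X^TX$ has at most $r$ nonzero eigenvalues. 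Putting the two bounds together yields $s \le \kappa r$.
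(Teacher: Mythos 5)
Your proof is correct and is essentially the paper's argument: both rest on the Markov-type bound $s\cdot\FNormS{X^TX}/\kappa \le \sum_{i,j}\VTTNormS{x_i}\VTTNormS{x_j}=\FNormQ{X}$ combined with the Cauchy--Schwarz inequality $\FNormS{X^TX}=\sum_k\sigma_k^4\ge\frac{1}{r}\bigl(\sum_k\sigma_k^2\bigr)^2=\frac{1}{r}\FNormQ{X}$ over the at most $r$ nonzero singular values. Your per-index-$i$ counting is just a slightly more roundabout route to the same intermediate bound $s\le\kappa\FNormQ{X}/\FNormS{X^TX}$ that the paper obtains in one line by summing $w_iw_j\ge\FNormS{X^TX}/\kappa$ over the heavy pairs.
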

\begin{proof}
Observe that
\mand{\sum_{i,j=1}^{n}\norm{x_i}_2^2\norm{x_j}_2^2
=\left(\sum_{i=1}^{n}\norm{x_i}_2^2\right)^2=\norm{X}_F^4
=\left(\sum_{i=1}^r\sigma_i^2\right)^2,
}
where \math{\sigma_1,\ldots,\sigma_r} are the singular values of
\math{X}.
To conclude, by the definition of a heavy pair,
\mand{\sum_{i,j}
\norm{x_i}_2^2\norm{x_j}_2^2\ge \frac{s}{\kappa}\norm{X^TX}_F^2=
\frac{s}{\kappa}\sum_{i=1}^r\sigma_i^4\ge\frac{s}{\kappa r}
\left(\sum_{i=1}^r\sigma_i^2\right)^2,}
where the last inequality follows by Cauchy-Schwarz.
\end{proof}
\noindent Algorithm~\ref{alg:heavy} starts by computing the norms \math{\norm{x_i}_2^2} for all $i \in [n]$ and sorting them (in \math{O\left(nr+n\ln n\right)} time) so that we can assume that \math{\norm{x_1}_2\le\cdots\le\norm{x_n}_2}. Then, we initialize the set of norm-heavy pairs to \math{\cl H=\{\}} and we also initialize two pointers \math{z_1=n} and \math{z_2=1}. The basic loop in the algorithm checks if \math{z_2>z_1} and stops if that is the case. Otherwise, we increment \math{z_2} to the first pair \math{(z_1,z_2)} that is norm-heavy. If none of pairs are norm heavy (\textit{i.e.,} \math{z_2 > z_1} occurs), then we stop and output \math{\cl H}; otherwise, we add \math{(z_1,z_2),(z_1,z_2+1),\ldots,(z_1,z_1)} to \math{\cl H}. This basic loop computes all pairs
\math{(z_1,i)} with \math{i\le z_1} that are norm-heavy. Next, we decrease \math{z_1} by one and if \math{z_1< z_2} we stop and output \math{\cl H}; otherwise, we repeat the basic loop. Note that in the basic loop \math{z_2} is always \emph{incremented}. This occurs whenever the pair \math{(z_1,z_2)} is not norm-heavy. Since \math{z_2} can be incremented at most \math{n} times, the number of times we check whether a pair is norm-heavy and fail is at most \math{n}. Every successful check results in the addition of at least one norm-heavy pair into \math{\cl H} and thus the number of times we check if a pair is norm heavy (a constant-time operation) is at most \math{n+s}. The number of pair additions into \math{\cl H} is exactly \math{s} and thus the total running time is \math{O(nr+n\ln n+s)}. Finally, we must check each norm-heavy pair to verify whether or not it is actually heavy by computing \math{s} inner products vectors in $\mathbb{R}^r$; this can be done in \math{O(sr)} time.
Using \math{s\le \kappa r} we get the following lemma.
\begin{lemma}
\label{lemma:heavyX}
Algorithm~\ref{alg:heavy} returns \math{\cl H} including all the heavy pairs of \math{X} in \math{O(nr+\kappa r^2+n\ln n)} time.
\end{lemma}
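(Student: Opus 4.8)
The plan is to analyze Algorithm~\ref{alg:heavy} in three parts: (i) that the two-pointer sweep enumerates exactly the norm-heavy pairs; (ii) that the explicit heaviness test then retains exactly the heavy pairs; and (iii) an accounting of the running time. Throughout I would work with the sorted order $\norm{x_1}_2\le\cdots\le\norm{x_n}_2$ produced in the first step, and I would invoke the bound $s\le\kappa r$ on the number of norm-heavy pairs proved in the preceding lemma.

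For part (i), the key observation is a monotonicity property of the sorted norms. Fix a value of the outer pointer $z_1$. Since the norms are nondecreasing, the quantity $\norm{x_{z_1}}_2^2\norm{x_j}_2^2$ is nondecreasing in $j$, so the indices $j\le z_1$ with $(z_1,j)$ satisfying Eqn.~(\ref{eq:superlarge}) form a suffix $\{z_2,z_2+1,\ldots,z_1\}$ of $\{1,\ldots,z_1\}$, where $z_2$ is the smallest index with $\norm{x_{z_1}}_2^2\norm{x_{z_2}}_2^2\ge\norm{X^TX}_F^2/\kappa$; the inner \texttt{while} loop advances $z_2$ to precisely this threshold (or halts when no such index exists). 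Moreover, when $z_1$ is decremented the factor $\norm{x_{z_1}}_2$ does not increase, so the threshold index cannot decrease; hence $z_2$ never decreases over the whole execution. Consequently each unordered pair is examined at most once, and across all values of $z_1$ the basic loop enumerates exactly the pairs satisfying Eqn.~(\ref{eq:superlarge}), i.e., the norm-heavy pairs.

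For part (ii), by Cauchy--Schwarz every heavy pair satisfies Eqn.~(\ref{eq:superlarge}) and is therefore norm-heavy, hence enumerated by the sweep; for each enumerated pair the explicit test $\tilde c_{ij}^2=\langle x_i,x_j\rangle^2\ge\norm{X^TX}_F^2/\kappa$ decides whether to retain it, keeping precisely the heavy pairs with the correct estimate $\tilde c_{ij}^2$ and discarding the rest. Thus $\cl H$ is exactly the set of heavy pairs, which in particular includes all of them. For part (iii), I would tally: computing the $n$ norms costs $O(nr)$ and sorting them $O(n\ln n)$; by the monotonicity of (i) the pointer $z_2$ is incremented at most $n$ times in total, so at most $n$ norm-heaviness checks fail, each $O(1)$; every successful check introduces at least one new norm-heavy pair, so there are at most $n+s$ checks and exactly $s$ insertions into $\cl H$; finally, verifying heaviness requires $s$ inner products in $\R^r$, costing $O(sr)$. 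Summing yields $O(nr+n\ln n+s+sr)=O(nr+n\ln n+sr)$, and substituting $s\le\kappa r$ gives the claimed $O(nr+\kappa r^2+n\ln n)$.

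The only real obstacle is the monotonicity argument of part (i): one must check carefully that the decrementing outer pointer and the incrementing inner pointer interact so that no norm-heavy pair is skipped and, crucially, so that the cumulative work spent on \emph{failed} norm-heaviness checks is $O(n)$ rather than $O(n^2)$. Once that is in place, the remainder is a routine tally of constant-time and $O(r)$-time operations.
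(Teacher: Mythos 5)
Your proposal is correct and follows essentially the same route as the paper: sort the norms, use the monotone two-pointer sweep so that failed norm-heaviness checks total at most $n$ and successful ones at most $s$, verify each of the $s$ norm-heavy candidates with an $O(r)$ inner product, and invoke $s\le\kappa r$ to get $O(nr+\kappa r^2+n\ln n)$. The paper states the pointer monotonicity more informally, but the substance of the argument is identical.
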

\noindent To complete the proof, we apply Algorithm~\ref{alg:heavy} with \math{\Omega=AV_{\Pi_1 A}\Sigma_{\Pi_1 A}^{-1}\Pi_2\in\R^{n\times r_2}}, where \math{r_2=O(\epsilon^{-2}\ln n)}. Let \math{\tilde u_1,\ldots,\tilde u_n} denote the rows of \math{\Omega} and recall that \math{A=U\Sigma V^T}. Let \math{u_1,\ldots,u_n} denote the rows of \math{U}; then, from Lemma~\ref{lemma:CrossAdditive},
\mld{
\langle  u_i, u_j\rangle-
\frac{3\epsilon}{1-\epsilon}\norm{u_i}\norm{u_j}
\le
\langle \tilde u_i,\tilde u_j\rangle
\le
\langle u_i, u_j\rangle+
\frac{3\epsilon}{1-\epsilon}\norm{u_i}\norm{u_j}.
\label{eq:approx}
}
Given \math{\epsilon,\kappa}, assume that for the pair of vectors $u_i$ and $u_j$
\mand{
\langle u_i,u_j\rangle^2
\ge
\frac{1}{\kappa}
\norm{U^TU}_F^2+
12\epsilon\norm{u_i}^2\norm{u_j}^2
=
\frac{d}{\kappa}+12\epsilon\norm{u_i}^2\norm{u_j}^2,
}
where the last equality follows from \math{\norm{U^TU}_F^2=\norm{I_d}_F^2 =d}. By Eqn.~(\ref{eq:approx}), after squaring and using \math{\epsilon<0.5},
\eqar{
\langle u_i,u_j\rangle^2-
12\epsilon\norm{u_i}^2\epsilon\norm{u_j}^2
\le
\langle \tilde u_i,\tilde u_j\rangle^2
\le
\langle u_i,u_j\rangle^2
+30\epsilon\norm{u_i}^2\norm{u_j}^2  .
\label{eq:approx1}
}
Thus, \math{\langle \tilde u_i,\tilde u_j\rangle^2\ge d/\kappa} and summing Eqn.~(\ref{eq:approx1}) over all \math{i,j} we get \math{\norm{\Omega^T\Omega}_F^2\le d+30\epsilon d^2}, or, equivalently,
\mand{
d\ge\frac{\norm{\Omega^T\Omega}_F^2}{1+30d\epsilon}.
}
We conclude that
\mld{
\langle u_i,u_j\rangle^2
\ge
\frac{d}{\kappa}
+
12\epsilon\norm{u_i}^2\norm{u_j}^2
\implies
\langle \tilde u_i,\tilde u_j\rangle^2
\ge
\frac{d}{\kappa}
\ge
\frac{\norm{\Omega^T\Omega}_F^2}{\kappa(1+30d\epsilon)}.
\label{eq:bound-d}
}
By construction, Algorithm~\ref{alg:heavy} is invoked with \math{\kappa'=\kappa\norm{\Omega^T\Omega}_F^2/d} and thus it finds all pairs with \math{\langle \tilde u_i,\tilde u_j\rangle^2\ge \norm{\Omega^T\Omega}_F^2/\kappa'=d/\kappa}. This set contains all pairs for which \mand{\langle u_i,u_j\rangle^2 \ge \frac{d}{\kappa} + 12\epsilon\norm{u_i}^2\norm{u_j}^2.} Further, since every pair
returned satisfies \math{\langle \tilde u_i,\tilde u_j\rangle^2\ge d/\kappa}, by Eqn.~(\ref{eq:approx1}), \math{c_{ij}\ge d/\kappa-30\epsilon\ell_i\ell_j}. This proves the first claim of the Theorem; the second claim follows analogously from Eqn.~(\ref{eq:approx1}).

Using Lemma~\ref{lemma:heavyX}, the running time of our approach is \math{O\left(nr_2+\kappa'r_2^2+n\ln n\right)}. Since \math{r_2=O\left(\epsilon^{-2}\ln n\right)}, and, by
Eqn.~(\ref{eq:bound-d}), \math{\kappa'=\kappa\norm{\Omega^T\Omega}_F^2/d\le \kappa(1+30d\epsilon)}, the overall running time is \math{O\left(\epsilon^{-2}n\ln n+\epsilon^{-3}\kappa d\ln^2n\right)}. 

\section{Extending our algorithm to general matrices} \label{sxn:extensions}

In this section, we will describe an important extension of our main result, namely the computation of the statistical leverage scores relative to the best rank-$k$ approximation to a general matrix $A$. More specifically, we consider the estimation of leverage scores for the case of general ``fat'' matrices, namely input matrices $A\in\mathbb{R}^{n \times d}$, where both $n$ and $d$ are large, \emph{e.g.}, when $d=n$ or $d = \Theta(n)$. Clearly, the leverage scores of any full rank $n\times n$ matrix are exactly uniform. The problem becomes interesting if one specifies a rank parameter $k \ll \min\{n,d\}$. This may arise when the numerical rank of $A$ is small (\emph{e.g.}, in some scientific computing applications, more than $99\%$ of the spectral norm of $A$ may be captured by some $k \ll \min\{n,d\}$ directions), or, more generally, when one is interested in some low rank approximation to $A$ (\emph{e.g.}, in some data analysis applications, a reasonable fraction or even the majority of the Frobenius norm of $A$ may be captured by some $k \ll \min\{n,d\}$ directions, where $k$ is determined by some exogenously-specified model selection criterion). Thus, assume that in addition to a general $n \times d$ matrix $A$, a rank parameter $k< \min\{n,d\}$ is specified. In this case, we wish to obtain the statistical leverage scores $\ell_i = \norm{(U_k)_{(i)}}_2^2$ for $A_k=U_k\Sigma_kV_k^T$, the best rank-$k$ approximation to $A$. Equivalently, we seek the normalized leverage scores
\begin{equation}\label{eqn:pp10}
p_i=\frac{\ell_i}{k}.
\end{equation}
Note that $\sum_{i=1}^n p_i = 1$ since $\sum_{i=1}^n \ell_i = \norm{U_k}_F^2 = k$.

Unfortunately, as stated, this is an ill-posed problem. Indeed, consider the degenerate case when $A=I_n$ (\emph{i.e.}, the $n \times n$ identity matrix). In this case, $U_k$ is not unique and the leverage scores are not well-defined. Moreover, for the obvious $\choose{n}{k}$ equivalent choices for $U_k$, the leverage scores defined according to any one of these choices do not provide a relative error approximation to the leverage scores defined according to any other choices. More generally, removing this trivial degeneracy does not help. Consider the matrix
$$
A=\begin{pmatrix}I_k&0\\0&(1-\gamma)I_{n-k} \end{pmatrix} \in \mathbb{R}^{n \times n}.
$$
In this example, the leverage scores for $A_k$ are well defined. However, as $\gamma\rightarrow0$, it is not possible to distinguish between the top-$k$ singular space and its complement. This example suggests that it should be possible to obtain some result conditioning on the spectral gap at the $k^{th}$ singular value. For example, one might assume that $\sigma^2_k-\sigma^2_{k+1}\ge \gamma>0$, in which case the parameter $\gamma$ would play an important role in the ability to solve this problem.
Any algorithm which cannot distinguish the singular values with an error less than $\gamma$ will confuse the $k$-th and $(k+1)$-th singular vectors and consequently will fail to get an accurate approximation to the leverage scores for $A_k$.

In the following, we take a more natural approach which leads to a clean problem formulation. To do so, recall that the leverage scores and the related normalized leverage scores of Eqn.~(\ref{eqn:pp10}) are used to approximate the matrix in some way, \emph{e.g.}, we might be seeking a low-rank approximation to the matrix with respect to the spectral~\cite{DMM08_CURtheory_JRNL} or the Frobenius~\cite{BMD09_CSSP_SODA} norm, or we might be seeking useful features or data points in downstream data analysis applications~\cite{Paschou07b,CUR_PNAS}, or we might be seeking to develop high-quality numerical implementations of low-rank matrix
approximation algorithms~\cite{HMT09_SIREV}, etc. In all these cases, we only care that the estimated leverage scores are a good approximation to the leverage scores of some ``good'' low-rank approximation to $A$. The following definition captures the notion of a set of rank-$k$ matrices that are good approximations to $A$.
\begin{definition}\label{def:se}
Given $A \in \mathbb{R}^{n \times d}$ and a rank parameter $k \ll \min\left\{n,d\right\}$, let $A_k$ be the best rank-$k$ approximation to $A$. Define the set $\cl S_{\epsilon}$ of rank-$k$ matrices that are good approximations to $A$ as follows (for $\xi = 2,F$):
\begin{equation}\label{eqn:def-of-S}
\cl S_{\epsilon} = \left\{ X\in\mathbb{R}^{n\times d}:\ \rank{X}=k\ \mbox{\ and \ }\norm{A-X}_{\xi}\le(1+\epsilon)\norm{A- A_k}_{\xi}\right\} .
\end{equation}
\end{definition}
\noindent We are now ready to define our approximations to the normalized leverage scores of any matrix $A \in \mathbb{R}^{n \times d}$ given a rank parameter $k \ll \min\left\{n,d\right\}$. Instead of seeking to approximate the $p_i$ of Eqn.~(\ref{eqn:pp10}) (a problem that is ill-posed as discussed above), we will be satisfied if we can approximate the normalized leverage scores of some matrix $X \in \cl S_{\epsilon}$. This is an interesting relaxation of the task at hand: all matrices $X$ that are sufficiently close to $A_k$ are essentially equivalent, since they can be used instead of $A_k$ in applications.
\begin{definition}\label{def:applev}
Given $A \in \mathbb{R}^{n \times d}$ and a rank parameter $k \ll \min\left\{n,d\right\}$, let $\cl S_{\epsilon}$ be the set of matrices of Definition~\ref{def:se}. We call the numbers $\hat{p}_i$ (for all $i \in [n]$) $\beta$-approximations to the normalized leverage scores of $A_k$ (the best rank-$k$ approximation to $A$) if, for some matrix $X \in \cl S_{\epsilon}$,
$$\hat{p}_i \geq \frac{\beta\norm{({U_X})_{(i)}}_2^2}{k} \qquad \mbox{and} \qquad \sum_{i=1}^n \hat{p}_i = 1.$$
Here $U_X \in \mathbb{R}^{n \times k}$ is the matrix of the left singular vectors of $X$.
\end{definition}
\noindent Thus, we will seek algorithms whose output is a set of numbers, with the requirement that those numbers are good approximations to the normalized leverage scores of some matrix $X\in\cl S_\epsilon$ (instead of $A_k$). This removes the ill-posedness of the original problem. Next, we will give two examples of algorithms that compute such $\beta$-approximations to the normalized leverage scores of a general matrix $A$ with a rank parameter $k$ for two popular norms, the spectral norm and the Frobenius norm.%
\footnote{Note that we will not compute $S_\epsilon$, but our algorithms 
will compute a matrix in that set.  Moreover, that matrix can be used for 
high-quality low-rank matrix approximation.  See the comments in 
Section~\ref{sxn:intro-empirical} for more details.}

\subsection{Leverage Scores for Spectral Norm Approximators}
Algorithm~\ref{alg:GeneralSpectral} approximates the statistical leverage scores of a general matrix $A$ with rank parameter $k$ in the spectral norm case. It takes as inputs a matrix $A \in \math{R}^{n \times d}$ with $\rank{A}=\rho$ and a rank parameter $k \ll \rho$, and outputs a set of numbers $\hat{p}_i$ for all $i \in [n]$, namely our approximations to the normalized leverage scores of $A$ with rank parameter $k.$

%-----------------------------------------------------------------------
\begin{algorithm}[t]
\begin{framed}

\textbf{Input:} $A \in \mathbb{R}^{n \times d}$ with $\rank{A}=\rho$ and a rank parameter $k \ll \rho$

\vspace{0.1in}

\textbf{Output:} $\hat{p_i}, i \in [n]$
\begin{enumerate}
\item Construct $\Pi\in\mathbb{R}^{d\times 2k}$ with entries drawn in i.i.d. trials from the normal distribution $\cl N(0,1)$.
\item Compute $B=\left(A A^T\right)^q A \Pi \in \mathbb{R}^{n \times 2k}$, with $q$ as in Eqn.~(\ref{eqn:defq}).
\item Approximately compute the statistical leverage scores of the ``tall'' matrix $B$ by calling Algorithm~\ref{alg:Leverage_Scores} with inputs $B$ and $\epsilon$; let $\hat{\ell}_i$ (for all $i \in [n]$) be the outputs of Algorithm~\ref{alg:Leverage_Scores}.
\item Return
\begin{equation}\label{eqn:ourapprox}
\hat{p}_i = \frac{\hat{\ell}_i}{\sum_{j=1}^n \hat{\ell}_j}
\end{equation}
for all $i \in [n]$.
\end{enumerate}
\end{framed}
\caption{Approximating the statistical leverage
scores of a general matrix $A$ (spectral norm case).}
\label{alg:GeneralSpectral}
\end{algorithm}
%-----------------------------------------------------------------------

The next lemma argues that there exists a matrix $X \in \mathbb{R}^{n \times d}$ of rank $k$ that is sufficiently close to $A$ (in particular, it is a member of $\cl S_{\epsilon}$ with constant probability) and, additionally, can be written as
$X = BY,$
where $Y\in \mathbb{R}^{2k \times d}$ is a matrix of rank $k$. 
A version of this lemma was essentially proven in~\cite{HMT09_SIREV}, but see 
also~\cite{RST09} for computational details; we will use the version of the lemma that appeared in~\cite{BDM11_TR}.  (See also the conference version~\cite{BDM11}, but in the remainder we refer to the technical report version~\cite{BDM11_TR} for consistency of numbering.)  Note that for our purposes in this section, the computation of $Y$ is not relevant and we defer the reader to~\cite{HMT09_SIREV, BDM11_TR} for details.
\begin{lemma}[Spectral Sketch]
\label{lem:specsketch}
Given $A\in\mathbb{R}^{n\times d}$ of rank $\rho$, a rank parameter $k$ such that $2\leq k < \rho$, and an error parameter $\epsilon$ such that $0 < \epsilon < 1$, let $\Pi\in\mathbb{R}^{d\times 2k}$ be a standard Gaussian matrix (with entries selected in i.i.d. trials from $\cl N(0,1)$). If $B=\left(A A^T\right)^q A \Pi$, where
\begin{equation}\label{eqn:defq}
q \geq \left\lceil \frac{\ln\left(1+\sqrt{\frac{k}{k-1}}+e\sqrt{\frac{2}{k}}\sqrt{\min\left\{n,d\right\}-k}\right)}{2\ln \left(1+\epsilon/10\right)-1/2} \right\rceil,
\end{equation}
then there exists a matrix $X \in \mathbb{R}^{n \times d}$ of rank $k$ satisfying $X = BY$ (with $Y\in \mathbb{R}^{2k \times d}$) such that
$$
\Exp\left[\norm{A-X}_2\right] \le \left(1+\frac{\epsilon}{10}\right)\norm{A-A_k}_2.
$$
The matrix $B$ can be computed in $O\left(ndkq\right)$ time.
\end{lemma}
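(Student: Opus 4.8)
This is a specialization of the relative-error, spectral-norm analysis of randomized subspace iteration (the ``power method'' randomized SVD), so the plan is to invoke the estimate of~\cite{HMT09_SIREV} in the rank-$k$, expected-error form recorded in~\cite{BDM11_TR} and then do the bookkeeping that produces the threshold~(\ref{eqn:defq}) and the $O(ndkq)$ running time. Write the SVD $A=U\Sigma V^T$ with $\Sigma=\diag(\Sigma_k,\Sigma_{\rho-k})$, $U=[U_k\ U_{\rho-k}]$, $V=[V_k\ V_{\rho-k}]$. The first observation is that
$$B=(AA^T)^qA\Pi=U\,\Sigma^{2q+1}\,V^T\Pi,$$
so $B$ shares the left singular vectors of $A$ but with the spectrum raised to the power $2q+1$, which is exactly what biases the column space of $B$ toward $\mathrm{range}(U_k)$. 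Since $\Pi\in\mathbb{R}^{d\times 2k}$ is Gaussian and $\rank{A}=\rho>k$, with probability $1$ we have $\rank{B}=\min\{2k,\rho\}\ge k+1$; hence the orthogonal projector $QQ^T$ onto the column space of $B$ is well defined, the best rank-$k$ approximation of $A$ whose column space lies in that of $B$ --- concretely $X=(QQ^TA)_k$ --- has rank exactly $k$, and its column space is contained in that of $B$, so $X=BY$ with $Y=B^\dagger X\in\mathbb{R}^{2k\times d}$, the claimed form.

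I would then quote the main estimate of~\cite{HMT09_SIREV,BDM11_TR}: for this $X$,
$$\Exp\left[\norm{A-X}_2\right]\le\left[\Big(1+\sqrt{\tfrac{k}{p-1}}\Big)\sigma_{k+1}^{2q+1}+\frac{e\sqrt{k+p}}{p}\Big(\sum_{j>k}\sigma_j^{2(2q+1)}\Big)^{1/2}\right]^{1/(2q+1)},$$
with oversampling parameter $p$, and here $p=k$ because $\Pi$ has $k+p=2k$ columns. The exponent $1/(2q+1)$ comes from the H\"older-type inequality for positive semidefinite matrices used in the power-iteration analysis: $\norm{A-X}_2^{2q+1}$ is controlled by a Gaussian-sketch error for $(AA^T)^qA=U\Sigma^{2q+1}V^T$, whose tail singular values are $\sigma_j^{2q+1}$, the two bracket terms arising from the in-subspace error and from the expected spectral and Frobenius norms of the pseudoinverses of the Gaussian blocks $V_k^T\Pi$ and $V_{\rho-k}^T\Pi$. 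To finish I would bound the tail crudely: there are at most $\min\{n,d\}-k$ nonzero singular values past the $k$-th, each at most $\sigma_{k+1}$, so $\sum_{j>k}\sigma_j^{2(2q+1)}\le(\min\{n,d\}-k)\,\sigma_{k+1}^{2(2q+1)}$; substituting $p=k$ (so $\sqrt{k/(p-1)}=\sqrt{k/(k-1)}$ and $e\sqrt{k+p}/p=e\sqrt{2k}/k=e\sqrt{2/k}$), the bracket is at most $C\,\sigma_{k+1}^{2q+1}$ with $C=1+\sqrt{k/(k-1)}+e\sqrt{2/k}\sqrt{\min\{n,d\}-k}$. Since $\sigma_{k+1}=\norm{A-A_k}_2$, this gives $\Exp[\norm{A-X}_2]\le C^{1/(2q+1)}\norm{A-A_k}_2$, and demanding $C^{1/(2q+1)}\le1+\epsilon/10$ --- i.e.\ $(2q+1)\ln(1+\epsilon/10)\ge\ln C$ --- and taking the least integer $q$ satisfying this yields exactly~(\ref{eqn:defq}). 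The hypotheses $k\ge2$ and $k<\rho$ are what make $\sqrt{k/(k-1)}$ and $\sigma_{k+1}$ meaningful.

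For the running time I would compute $B$ right to left: $A\Pi$ costs $O(nd\cdot2k)=O(ndk)$, and each of the $q$ subsequent multiplications by $AA^T$ is performed as $A\big(A^T(\cdot)\big)$, two block products of shapes $(d\times n)(n\times2k)$ and $(n\times d)(d\times2k)$, each $O(ndk)$; thus $B$ is formed in $O(ndkq)$ time. The one genuinely substantive step is the cited estimate itself --- that after $q$ power-iteration steps the rank-$k$ truncation incurs the $(2q+1)$-st-root error above with no spurious constant factor and with no spectral-gap hypothesis; this is where I expect the real work to lie, and rather than reprove it (which would need a deterministic structural bound controlling $\norm{A-X}_2$ by the misalignment between $\mathrm{range}(B)$ and $\mathrm{range}(U_k)$, the PSD H\"older inequality that converts power iteration into a $(2q+1)$-st power, and the standard expectation bounds for the Frobenius norm of the pseudoinverse and the spectral norm of a Gaussian matrix) I would cite~\cite{HMT09_SIREV,BDM11_TR} and treat the remainder as routine.
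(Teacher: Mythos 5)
Your proposal is correct and takes essentially the same route as the paper, which offers no self-contained proof but simply cites Lemma~32 of~\cite{BDM11_TR} (itself built on the power-iteration analysis of~\cite{HMT09_SIREV}); you likewise defer the substantive probabilistic estimate to those references while correctly supplying the bookkeeping they leave implicit --- setting the oversampling to $p=k$, bounding the tail by $(\min\{n,d\}-k)\,\sigma_{k+1}^{2(2q+1)}$, and solving $C^{1/(2q+1)}\le 1+\epsilon/10$ for $q$ --- which is exactly where the constant in Eqn.~(\ref{eqn:defq}) comes from. The right-to-left $O(ndkq)$ cost accounting also matches.
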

\noindent This version of the above lemma is proven in~\cite{BDM11_TR}.%
\footnote{More specifically, the proof may be found in Lemma~32 and in particular in Eqn.~(14) in Section A.2; note that for our purposes here we replaced $\epsilon/\sqrt{2}$ by $\epsilon/10$ after adjusting $q$ accordingly.}
Now, since $X$ has rank $k$, it follows that $\norm{A-X}_2 \geq \norm{A-A_k}_2$ and thus we can consider the non-negative random variable $\norm{A-X}_2-\norm{A-A_k}_2$ and apply Markov's inequality to get that
$$\norm{A-X}_2 - \norm{A-A_k}_2 \le \epsilon\norm{A-A_k}_2$$
holds with probability at least $0.9$. Thus, $X \in \cl S_{\epsilon}$ with probability at least $0.9$.

The next step of the proposed algorithm is to approximately compute the leverage scores of $B \in \mathbb{R}^{n \times 2k}$ via Algorithm~\ref{alg:Leverage_Scores}. Under the assumptions of Theorem~\ref{thm:main_result}, this step runs in $O\left(nk\epsilon^{-2}\ln n\right)$ time. Let $U_X \in \mathbb{R}^{n \times k}$ be the matrix containing the left singular vectors of the matrix $X$ of Lemma~\ref{lem:specsketch}. Then, since $X = BY$ by Lemma~\ref{lem:specsketch}, it follows that
$$U_B = \left[U_X\ \ U_{R}\right]$$
is a basis for the subspace spanned by the columns of $B$. Here $U_R\in \mathbb{R}^{n \times k}$ is an orthogonal matrix whose columns are perpendicular to the columns of $U_X$. Now consider the approximate leverage scores $\hat{\ell}_i$ computed by Algorithm~\ref{alg:Leverage_Scores} and note that (by Theorem~\ref{thm:main_result}),
$$\abs{\hat{\ell}_i - \norm{\left(U_B\right)_{(i)}}_2^2} \leq \epsilon \norm{\left(U_B\right)_{(i)}}_2^2$$
holds with probability at least $0.8$ for all $i \in [n]$. It follows that
$$\sum_{j=1}^n \hat{\ell}_j \leq \left(1+\epsilon\right)\sum_{j=1}^n \norm{\left(U_B\right)_{(j)}}_2^2=
\left(1+\epsilon\right)\sum_{j=1}^n \norm{U_B}_F^2 = 2\left(1+\epsilon\right)k.
$$
Finally,
\begin{eqnarray*}
\hat{p}_i = \frac{\hat{\ell}_i}{\sum_{j=1}^n \hat{\ell}_j} &\geq& \left(1-\epsilon\right) \frac{\norm{\left(U_B\right)_{(i)}}_2^2}{\sum_{j=1}^n \hat{\ell}_j}\\
&\geq& \left(1-\epsilon\right) \frac{\norm{\left(U_X\right)_{(i)}}_2^2 + \norm{\left(U_R\right)_{(i)}}_2^2}{\sum_{j=1}^n \hat{\ell}_j}\\
&\geq& \frac{1-\epsilon}{2} \frac{\norm{\left(U_X\right)_{(i)}}_2^2}{\sum_{j=1}^n \hat{\ell}_j}\\
&\geq& \frac{1-\epsilon}{2\left(1+\epsilon\right)} \frac{\norm{\left(U_X\right)_{(i)}}_2^2}{k}.
\end{eqnarray*}
Clearly, $\norm{\left(U_X\right)_{(i)}}_2^2/k$ are the normalized leverage scores of the matrix $X$. Recall that $X \in \cl S_{\epsilon}$ with probability at least $0.9$ and use Definition~\ref{def:applev} to conclude that the scores $\hat{p}_i$ of Eqn.~(\ref{eqn:ourapprox}) are $\left(\frac{1-\epsilon}{2\left(1+\epsilon\right)}\right)$-approximations to the normalized leverage scores of $A$ with rank parameter $k$. The following Theorem summarizes the above discussion:
\begin{theorem}\label{theorem:LevFatSpectral}
Given $A\in\mathbb{R}^{n\times d}$, a rank parameter $k$, and an accuracy parameter $\epsilon$, Algorithm~\ref{alg:GeneralSpectral} computes a set of normalized leverage scores $\hat{p}_i$ that are $\left(\frac{1-\epsilon}{2\left(1+\epsilon\right)}\right)$-approximations to the normalized leverage scores of $A$ with rank parameter $k$ with probability at least $0.7$. The proposed algorithm runs in $$O\left(ndk\frac{\ln\left(\min\{n,d\}\right)}{\ln\left(1+\epsilon\right)}+nk\epsilon^{-2}\ln n\right)$$ time.
\end{theorem}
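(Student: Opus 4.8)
The plan is to collect the pieces already assembled in the discussion preceding the theorem and to handle the probability and running-time bookkeeping. There are two independent sources of randomness: the Gaussian matrix $\Pi$ used to form $B=(AA^T)^qA\Pi$, and the internal randomness of Algorithm~\ref{alg:Leverage_Scores} when invoked on $B$. By Lemma~\ref{lem:specsketch} and Markov's inequality applied to the nonnegative random variable $\norm{A-X}_2-\norm{A-A_k}_2$, the rank-$k$ matrix $X=BY$ lies in $\cl S_\epsilon$ with probability at least $0.9$; by Theorem~\ref{thm:main_result}, the values $\hat\ell_i$ returned by Algorithm~\ref{alg:Leverage_Scores} satisfy $\abs{\hat\ell_i-\norm{(U_B)_{(i)}}_2^2}\le\epsilon\norm{(U_B)_{(i)}}_2^2$ for all $i\in[n]$ with probability at least $0.8$. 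A union bound then places us on the intersection of these two events with probability at least $1-0.1-0.2=0.7$, which matches the claimed success probability.

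Conditioned on that event, I would reproduce the displayed computation. Since $X=BY$ has rank $k$, the matrix $U_B=[U_X\ \ U_R]$ (with $U_R\in\mathbb{R}^{n\times k}$ orthonormal and orthogonal to $U_X$) is an orthonormal basis for the column space of $B$, so $\norm{(U_B)_{(i)}}_2^2=\norm{(U_X)_{(i)}}_2^2+\norm{(U_R)_{(i)}}_2^2\ge\norm{(U_X)_{(i)}}_2^2$ and $\sum_j\norm{(U_B)_{(j)}}_2^2=\norm{U_B}_F^2=2k$. The relative-error bound then gives $\sum_j\hat\ell_j\le 2(1+\epsilon)k$, and chaining this with $\hat\ell_i\ge(1-\epsilon)\norm{(U_B)_{(i)}}_2^2\ge(1-\epsilon)\norm{(U_X)_{(i)}}_2^2$ and the normalization $\hat p_i=\hat\ell_i/\sum_j\hat\ell_j$ of Eqn.~(\ref{eqn:ourapprox}) yields $\hat p_i\ge\frac{1-\epsilon}{2(1+\epsilon)}\cdot\frac{\norm{(U_X)_{(i)}}_2^2}{k}$, while $\sum_i\hat p_i=1$ by construction. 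Since $\norm{(U_X)_{(i)}}_2^2/k$ are exactly the normalized leverage scores of $X\in\cl S_\epsilon$, Definition~\ref{def:applev} yields the claimed $\bigl(\frac{1-\epsilon}{2(1+\epsilon)}\bigr)$-approximation.

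For the running time I would account for the three nontrivial steps. Generating $\Pi$ and the final normalization take $O(dk)$ and $O(n)$ time and are negligible. Forming $B$ costs $O(ndkq)$ by Lemma~\ref{lem:specsketch}; bounding the numerator of the expression for $q$ in Eqn.~(\ref{eqn:defq}) by $O(\ln\min\{n,d\})$---its dominant term inside the logarithm being $\Theta(\sqrt{\min\{n,d\}})$---gives $q=O\!\left(\ln\min\{n,d\}/\ln(1+\epsilon)\right)$ and hence cost $O\!\left(ndk\,\ln\min\{n,d\}/\ln(1+\epsilon)\right)$. Running Algorithm~\ref{alg:Leverage_Scores} on the ``tall'' $n\times 2k$ matrix $B$ costs $O(nk\epsilon^{-2}\ln n)$ by Theorem~\ref{thm:main_result} (the cubic-in-dimension term there is a lower-order $k^3$ term). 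Summing gives the stated bound. Since essentially all the analytic content is already in place, there is no real obstacle here; the only point needing care is the bookkeeping of the two separate uses of $\epsilon$---Lemma~\ref{lem:specsketch} is stated with slack $\epsilon/10$ precisely so that Markov's inequality lands us in $\cl S_\epsilon$ rather than a smaller set, while the final factor $\frac{1-\epsilon}{2(1+\epsilon)}$ arises entirely from the rank doubling of $B$ and the relative error of Algorithm~\ref{alg:Leverage_Scores}---together with confirming the independence of $\Pi$ from Algorithm~\ref{alg:Leverage_Scores}'s randomness that legitimizes the union bound.
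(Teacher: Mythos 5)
Your proposal is correct and follows essentially the same argument as the paper: Markov's inequality on Lemma~\ref{lem:specsketch} to place $X=BY$ in $\cl S_\epsilon$ with probability $0.9$, Theorem~\ref{thm:main_result} applied to the tall matrix $B$ with probability $0.8$, a union bound giving $0.7$, the $U_B=[U_X\ \ U_R]$ decomposition with $\norm{U_B}_F^2=2k$ yielding the $\frac{1-\epsilon}{2(1+\epsilon)}$ factor, and the same accounting of $O(ndkq)$ plus the cost of Algorithm~\ref{alg:Leverage_Scores}. No gaps.
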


\subsection{Leverage Scores for Frobenius Norm Approximators.}
Algorithm~\ref{alg:GeneralFrobenius} approximates the statistical leverage scores of a general matrix $A$ with rank parameter $k$ in the Frobenius norm case. It takes as inputs a matrix $A \in \math{R}^{n \times d}$ with $\rank{A}=\rho$ and a rank parameter $k \ll \rho$, and outputs a set of numbers $\hat{p}_i$ for all $i \in [n]$, namely our approximations to the normalized leverage scores of $A$ with rank parameter $k.$
%
%-----------------------------------------------------------------------
\begin{algorithm}[t]
\begin{framed}

\textbf{Input:} $A \in \mathbb{R}^{n \times d}$ with $\rank{A}=\rho$ and a rank parameter $k \ll \rho$

\vspace{0.1in}

\textbf{Output:} $\hat{p_i}, i \in [n]$
\begin{enumerate}
\item Let $r$ be as in Eqn.~(\ref{eqn:defr}) and construct $\Pi\in\mathbb{R}^{d\times r}$ whose entries are drawn in i.i.d. trials from the normal distribution $\cl N(0,1)$.
\item Compute $B=A \Pi \in \mathbb{R}^{n \times r}$.
\item Compute a matrix $Q \in \mathbb{R}^{n \times r}$ whose columns form an orthonormal basis for the column space of $B$.
\item Compute the matrix $Q^TA \in \mathbb{R}^{r \times d}$ and its left singular vectors $U_{Q^TA} \in \mathbb{R}^{r \times d}$.
\item Let $U_{Q^TA,k} \in \mathbb{R}^{r \times k}$ denote the top $k$ left singular vectors of the matrix $Q^TA$ (the first $k$ columns of $U_{Q^TA}$) and compute, for all $i \in [n]$,
\begin{equation}\label{eqn:ourapprox2}
\hat{\ell}_i = \norm{\left(QU_{Q^TA,k}\right)_{(i)}}_2^2.
\end{equation}
\item Return $\hat{p}_i = \hat{\ell}_i/k$ for all $i \in [n]$.
\end{enumerate}
\end{framed}
\caption{Approximating the statistical leverage
scores of a general matrix $A$ (Frobenius norm case).}
\label{alg:GeneralFrobenius}
\end{algorithm}
%-----------------------------------------------------------------------
%
It is worth noting that $\sum_{i=1}^n \hat{\ell}_i = \norm{QU_{Q^TA,k}}_F^2 = \norm{U_{Q^TA,k}}_F^2 = k$ and thus the $\hat{p}_i$ sum up to one. The next lemma argues that there exists a matrix $X \in \mathbb{R}^{n \times d}$ of rank $k$ that is sufficiently close to $A$ (in particular, it is a member of $\cl S_{\epsilon}$ with constant probability). Unlike the previous section (the spectral norm case), we will now be able to provide a closed-form formula for this matrix $X$ and, more importantly, the normalized leverage scores of $X$ will be \textit{exactly equal} to the $\hat{p}_i$ returned by our algorithm. Thus, in the parlance of Definition~\ref{def:applev}, we will get a 1-approximation to the normalized leverage scores of $A$ with rank parameter $k$.

\begin{lemma}[Frobenius Sketch]
\label{lem:Frobeniussketch}
Given $A\in\mathbb{R}^{n\times d}$ of rank $\rho$, a rank parameter $k$ such that $2\leq k < \rho$, and an error parameter $\epsilon$ such that $0 < \epsilon < 1$, let $\Pi\in\mathbb{R}^{d\times r}$ be a standard Gaussian matrix (with entries selected in i.i.d. trials from $\cl N(0,1)$) with
\begin{equation}\label{eqn:defr}
r \geq k+\left\lceil \frac{10k}{\epsilon} + 1 \right\rceil.
\end{equation}
Let $B=A \Pi$ and let $X$ be as in Eqn.~(\ref{eqn:defX}). Then,
$$
\Exp\left[\norm{A-X}_F^2\right] \le \left(1+\frac{\epsilon}{10}\right)\norm{A-A_k}_F^2.
$$
The matrix $B$ can be computed in $O\left(ndk\epsilon^{-1}\right)$ time.
\end{lemma}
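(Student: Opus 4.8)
The plan is to relate the Frobenius-optimal rank-$k$ approximation of $A$ whose column space lies inside $\mathrm{range}(B)$ to $\norm{A-A_k}_F$ by exhibiting an explicit rank-$k$ competitor built directly from $A\Pi$, and then to bound the competitor's error in expectation using two standard Gaussian-matrix facts. Write the full SVD of $A$ as $A = U_k\Sigma_k V_k^T + U_{\rho-k}\Sigma_{\rho-k}V_{\rho-k}^T$, so that $A_k=U_k\Sigma_kV_k^T$ and $\norm{A-A_k}_F^2=\norm{\Sigma_{\rho-k}}_F^2$; set $G_1 = V_k^T\Pi\in\mathbb{R}^{k\times r}$ and $G_2 = V_{\rho-k}^T\Pi$; and recall from Algorithm~\ref{alg:GeneralFrobenius} and Eqn.~(\ref{eqn:defX}) that $X = Q(Q^TA)_k$, where the columns of $Q$ are an orthonormal basis of the column space of $B=A\Pi$. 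First I would record the elementary identity that, for every $W$, $\norm{A-QW}_F^2 = \norm{A-QQ^TA}_F^2+\norm{Q^TA-W}_F^2$ — the cross term vanishes because $(A-QQ^TA)^TQ=0$ and $Q$ has orthonormal columns — so that $X=Q(Q^TA)_k$ is the Frobenius-optimal rank-$k$ matrix with column space in $\mathrm{range}(B)$; in particular $\norm{A-X}_F\le\norm{A-Y}_F$ for every rank-$\le k$ matrix $Y$ with $\mathrm{range}(Y)\subseteq\mathrm{range}(B)$.

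The heart of the argument is the choice of competitor. On the probability-one event that $G_1$ has full row rank (a Gaussian $\Pi$ with $r\ge k$), let $\tilde X = B\,G_1^{\dagger}V_k^T$; it has rank at most $k$ and column space inside $\mathrm{range}(B)$. Expanding $B=A\Pi = U_k\Sigma_k G_1 + U_{\rho-k}\Sigma_{\rho-k}G_2$ and using $G_1G_1^\dagger=I_k$ gives $\tilde X = A_k + U_{\rho-k}\Sigma_{\rho-k}G_2G_1^\dagger V_k^T$, hence $A-\tilde X = U_{\rho-k}\Sigma_{\rho-k}\bigl(V_{\rho-k}^T - G_2G_1^\dagger V_k^T\bigr)$. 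Since $V_{\rho-k}^TV_k=0$, the two terms inside the parentheses have mutually orthogonal row spaces even after left-multiplication by $\Sigma_{\rho-k}$, so the Frobenius norm splits additively: $\norm{A-\tilde X}_F^2 = \norm{\Sigma_{\rho-k}}_F^2 + \norm{\Sigma_{\rho-k}G_2G_1^\dagger}_F^2 = \norm{A-A_k}_F^2 + \norm{\Sigma_{\rho-k}G_2G_1^\dagger}_F^2$. Combined with the optimality of $X$, this yields the deterministic bound $\norm{A-X}_F^2 \le \norm{A-A_k}_F^2 + \norm{\Sigma_{\rho-k}G_2G_1^\dagger}_F^2$.

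Finally I would take expectations over $\Pi$. Because $[\,V_k\ \ V_{\rho-k}\,]$ has orthonormal columns and $\Pi$ is standard Gaussian, $G_1$ and $G_2$ are independent standard Gaussian matrices; taking the expectation over $G_2$ first via $\Exp\left[\norm{CGD}_F^2\right]=\norm{C}_F^2\norm{D}_F^2$ for fixed $C,D$ and a standard Gaussian $G$ (cf.~\cite{HMT09_SIREV}), then the expectation over $G_1$ via the standard Wishart identity $\Exp\left[\norm{G_1^\dagger}_F^2\right] = k/(r-k-1)$ (valid for $r\ge k+2$, which holds here), gives $\Exp\left[\norm{\Sigma_{\rho-k}G_2G_1^\dagger}_F^2\right] = \norm{\Sigma_{\rho-k}}_F^2\cdot\frac{k}{r-k-1}$ and hence $\Exp\left[\norm{A-X}_F^2\right]\le\bigl(1+\frac{k}{r-k-1}\bigr)\norm{A-A_k}_F^2$. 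Plugging in $r\ge k+\lceil 10k/\epsilon+1\rceil$ makes $r-k-1\ge 10k/\epsilon$, so $\frac{k}{r-k-1}\le\epsilon/10$, which is the stated bound; the running-time claim is immediate since $B=A\Pi$ with $\Pi\in\mathbb{R}^{d\times r}$ and $r=O(k/\epsilon)$ costs $O(ndr)=O(ndk\epsilon^{-1})$. The step I expect to be the main obstacle — and the reason the naive route fails — is the second one: bounding $\norm{A-X}_F^2$ crudely by $\norm{A-QQ^TA}_F^2+\norm{Q^T(A-A_k)}_F^2$ costs a factor of two, so getting the clean $(1+\epsilon/10)$ constant requires the explicit competitor $\tilde X$ tailored to $\mathrm{range}(A\Pi)$ together with the orthogonal-row-space splitting.
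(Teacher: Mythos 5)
Your proof is correct and is essentially the argument behind the result the paper cites rather than proves (the paper defers to Lemma~33 of~\cite{BDM11_TR}, which in turn follows the Halko--Martinsson--Tropp analysis): the optimality of $X=Q(Q^TA)_k$ among rank-$k$ matrices with range in $\mathrm{range}(B)$, the explicit competitor $BG_1^\dagger V_k^T$, the orthogonal splitting of the residual, and the two Gaussian expectation identities are exactly the standard route. All steps check out, including the requirement $r\ge k+2$ for the inverse-Wishart identity and the final arithmetic giving $k/(r-k-1)\le\epsilon/10$.
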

\noindent Let
\begin{equation}\label{eqn:defX}
X = Q\left(Q^TA\right)_k \in \mathbb{R}^{n \times d},
\end{equation}
where $\left(Q^TA\right)_k$ is the best rank-$k$ approximation to the matrix $Q^TA$; from standard linear algebra, $\left(Q^TA\right)_k = U_{Q^TA,k} U_{Q^TA,k}^T Q^TA$. Then, the above lemma is proven in~\cite{BDM11_TR}.%
\footnote{More specifically, the proof may be found in Lemma~33 in Section A.3; note that for our purposes here we set $p = \left\lceil \frac{10k}{\epsilon} + 1 \right\rceil$.} 
Now, since $X$ has rank $k$, it follows that $\norm{A-X}_F^2 \geq \norm{A-A_k}_F^2$ and thus we can consider the non-negative random variable $\norm{A-X}_F^2-\norm{A-A_k}_F^2$ and apply Markov's inequality to get that
$$\norm{A-X}_F^2 - \norm{A-A_k}_F^2 \le \epsilon\norm{A-A_k}_F^2$$
holds with probability at least $0.9$. Rearranging terms and taking square roots of both sides implies that
$$\norm{A-X}_F \leq \sqrt{1+\epsilon}\norm{A-A_k}_F \leq \left(1+\epsilon\right) \norm{A-A_k}_F.$$
Thus, $X \in \cl S_{\epsilon}$ with probability at least $0.9$. To conclude our proof, recall that $Q$ is an orthonormal basis for the columns of $B$.
From Eqn.~(\ref{eqn:defX}),
$$X = Q\left(Q^TA\right)_k = QU_{Q^TA,k} U_{Q^TA,k}^T Q^TA = QU_{Q^TA,k} \Sigma_{Q^TA,k} V_{Q^TA,k}^T.$$
In the above, $\Sigma_{Q^TA,k}\in \mathbb{R}^{k \times k}$ is the diagonal matrix containing the top $k$ singular values of $Q^TA$ and $V_{Q^TA,k}^T \in \mathbb{R}^{k \times d}$ is the matrix whose rows are the top $k$ right singular vectors of $Q^TA$. Thus, the left singular vectors of the matrix $X$ are exactly equal to the columns of the orthogonal matrix $QU_{Q^TA,k}$; it now follows that the $\hat{\ell}_i$ of Eqn.~(\ref{eqn:ourapprox2}) are the leverage scores of the matrix $X$ and, finally, that the $\hat{p}_i$ returned by the proposed algorithm are the normalized leverage scores of the matrix $X$.

We briefly discuss the running time of the proposed algorithm. First, we can compute $B$ in $O(ndr)$ time. Then, the computation of $Q$ takes $O(nr^2)$ time. The computation of $Q^TA$ takes $O(ndr)$ time and the computation of $U_{Q^TA}$ takes $O(dr^2)$ time. Thus, the total time is equal to $O\left(ndr + (n+d)r^2 \right)$. The following Theorem summarizes the above discussion.
\begin{theorem}\label{theorem:LevFatFrobenius}
Given $A\in\mathbb{R}^{n\times d}$, a rank parameter $k$, and an accuracy parameter $\epsilon$, Algorithm~\ref{alg:GeneralFrobenius} computes a set of normalized leverage scores $\hat{p}_i$ that are $1$-approximations to the normalized leverage scores of $A$ with rank parameter $k$ with probability at least $0.7$. The proposed algorithm runs in $O\left(ndk\epsilon^{-1} + (n+d)k^2\epsilon^{-2}\right)$ time.
\end{theorem}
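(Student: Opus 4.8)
The plan is to reduce the theorem to the randomized Frobenius-norm low-rank approximation guarantee of Lemma~\ref{lem:Frobeniussketch}, together with an elementary linear-algebra observation showing that Algorithm~\ref{alg:GeneralFrobenius} in fact returns the \emph{exact} normalized leverage scores of the approximant $X$ produced by that lemma. Thus there is essentially nothing to approximate beyond the low-rank sketch itself, and the ``$1$-approximation'' claim of Definition~\ref{def:applev} becomes an equality claim.

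First I would invoke Lemma~\ref{lem:Frobeniussketch} with the value of $r$ from Eqn.~(\ref{eqn:defr}) and the matrix $X = Q(Q^TA)_k$ of Eqn.~(\ref{eqn:defX}): it gives $\Exp\left[\norm{A-X}_F^2\right] \le (1+\epsilon/10)\norm{A-A_k}_F^2$. Since $X$ has rank $k$, the quantity $\norm{A-X}_F^2 - \norm{A-A_k}_F^2$ is a nonnegative random variable, so Markov's inequality yields $\norm{A-X}_F^2 - \norm{A-A_k}_F^2 \le \epsilon\norm{A-A_k}_F^2$ with probability at least $0.9$. Rearranging, taking square roots, and using $\sqrt{1+\epsilon}\le 1+\epsilon$ for $\epsilon\in(0,1)$ gives $\norm{A-X}_F \le (1+\epsilon)\norm{A-A_k}_F$, i.e. $X\in\cl S_\epsilon$ with probability at least $0.9$ (in particular with probability at least the $0.7$ claimed).

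Next I would verify that the numbers returned by the algorithm coincide exactly with the normalized leverage scores of this $X$. Writing the thin SVD of the best rank-$k$ approximation to $Q^TA$ as $(Q^TA)_k = U_{Q^TA,k}\Sigma_{Q^TA,k}V_{Q^TA,k}^T$, Eqn.~(\ref{eqn:defX}) becomes $X = QU_{Q^TA,k}\Sigma_{Q^TA,k}V_{Q^TA,k}^T$. Because $Q$ and $U_{Q^TA,k}$ each have orthonormal columns, so does their product $QU_{Q^TA,k}$; hence this factorization is a genuine (thin) SVD of $X$, and the columns of $QU_{Q^TA,k}$ are precisely the left singular vectors $U_X$ of $X$. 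Therefore the $\hat\ell_i = \norm{(QU_{Q^TA,k})_{(i)}}_2^2$ of Eqn.~(\ref{eqn:ourapprox2}) equal $\norm{(U_X)_{(i)}}_2^2$, and since $\sum_{i=1}^n \hat\ell_i = \norm{QU_{Q^TA,k}}_F^2 = \norm{U_{Q^TA,k}}_F^2 = k$, the returned $\hat p_i = \hat\ell_i/k$ form a probability distribution identical to the normalized leverage scores of $X$. On the event $X\in\cl S_\epsilon$, Definition~\ref{def:applev} with $\beta = 1$ then certifies the $\hat p_i$ as $1$-approximations to the normalized leverage scores of $A$ with rank parameter $k$; the only source of failure is the (at most $0.3$, indeed at most $0.1$) probability that $X\notin\cl S_\epsilon$.

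Finally I would tally the running time: forming $B = A\Pi$ costs $O(ndr)$; an orthonormal basis $Q$ for its column space costs $O(nr^2)$; forming $Q^TA$ costs $O(ndr)$; the SVD of the $r\times d$ matrix $Q^TA$ costs $O(dr^2)$; and forming $QU_{Q^TA,k}$ together with its squared row-norms costs $O(nrk)$, which is $O(nr^2)$. Summing gives $O(ndr + (n+d)r^2)$, and substituting $r = k + \lceil 10k/\epsilon + 1\rceil = O(k/\epsilon)$ produces $O(ndk\epsilon^{-1} + (n+d)k^2\epsilon^{-2})$, as claimed. The only step requiring any care is the identification above---confirming that $QU_{Q^TA,k}$ really is a matrix of left singular vectors of $X$---while the genuinely substantive ingredient, the sketching guarantee of Lemma~\ref{lem:Frobeniussketch}, is imported from prior work; so there is no real obstacle beyond careful bookkeeping.
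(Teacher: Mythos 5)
Your proposal is correct and follows essentially the same route as the paper: invoke Lemma~\ref{lem:Frobeniussketch}, apply Markov's inequality to the nonnegative variable $\norm{A-X}_F^2-\norm{A-A_k}_F^2$ to place $X$ in $\cl S_\epsilon$ with probability at least $0.9$, observe that $QU_{Q^TA,k}$ is exactly the matrix of left singular vectors of $X$ so the returned scores are the exact normalized leverage scores of $X$, and tally the running time with $r=O(k/\epsilon)$. Your explicit check that $QU_{Q^TA,k}$ has orthonormal columns is a small but welcome addition of rigor over the paper's terser statement.
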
 

\section{Discussion}\label{sxn:discussion}

We will conclude with a discussion of our main results in a broader context: 
understanding the relationship between our main algorithm and a related 
estimator for the statistical leverage scores; applying our main algorithm 
to solve under-constrained least squares problems; and implementing variants 
of the basic algorithm in streaming environments.

\subsection{A related estimator for the leverage scores}

%Malik 
Magdon-Ismail in~\cite{Malik10_TR} presented the following algorithm to estimate the statistical leverage scores: given as input an $n \times d$ matrix $A$, with $n \gg d$, the algorithm proceeds as follows.
\begin{itemize}
\item
Compute $\Pi A$, where the
$O\left( \frac{n\ln d }{\ln^2 n } \right) \times n $ matrix $\Pi$ is a
SRHT or another FJLT.
\item
Compute $X = (\Pi A)^{\dagger} \Pi$.
\item
For $t=1,\ldots,n$, compute the estimate $\tilde{w}_t = A_{(t)}^TX^{(t)}$
and set $w_t = \max\left\{\frac{d\ln^2 n }{4n},\tilde{w}_t\right\}$.
\item
Return the quantities $\tilde{p}_i = w_i / \sum_{i'=1}^{n} w_{i'} $, for
$i \in [n]$.
\end{itemize}
\noindent \cite{Malik10_TR} argued that
the output $\tilde{p}_i$ achieves an $O(\ln^2n)$ approximation to all of the
(normalized) statistical leverage scores of $A$ in roughly $O(nd^2/\ln n)$
time. (To our knowledge, prior to our work here, this is the only known estimator that obtains
any nontrivial provable approximation to the leverage scores of a matrix in $o(nd^2)$ time.)
To see the relationship between this estimator and our main result, recall
that
$$
\ell_i = e_i^TUU^Te_i = e_i^TAA^\dagger e_i = x_i^Ty_i ,
$$
where the vector $x_i^T=e_i^TA$ is cheap to compute and the vector $y_i = A^\dagger e_i$ is expensive to compute. The above algorithm effectively approximates $y_i = A^\dagger e_i$ via a random
projection as $\tilde{y}_i=(\Pi A)^\dagger \Pi e_i$, where $\Pi$ is a SRHT or another FJLT. Since the estimates $x_i^T\tilde{y}_i$ are not necessarily positive, a
truncation at the negative tail, followed by a renormalization step, must be performed in order to arrive at the final estimator returned by the algorithm. This truncation-renormalization step has the effect of inflating the estimates of the small leverage scores by an $O(\ln^2 n )$ factor. By way of comparison, Algorithm~\ref{alg:Leverage_Scores} essentially computes a sketch of $AA^\dagger$ of the form $ A(\Pi A)^\dagger\Pi^{T} $ that maintains positivity for each of the row norm~estimates.

Although both Algorithm~\ref{alg:Leverage_Scores} and the algorithm of this subsection estimate $AA^\dagger$ by a matrix of the form $A(\Pi A)^\dagger\Pi^{T}$, there are notable differences. The algorithm of this subsection does not actually compute or approximate $AA^T$ directly; instead, it separates the matrix into two parts and computes the dot product between $e_i^TA$ and $(\Pi A)^\dagger\Pi e_i$. 
Positivity is sacrificed and this leads to some complications in the estimator; however, the truncation step is interesting, since, despite the fact that the estimates are ``biased'' (in a manner somewhat akin to what is obtained with ``thresholding'' or ``regularization'' procedures), we still obtain provable approximation guarantees. The algorithm of this subsection is simpler (since it uses an application of only one random projection), albeit at the cost of weaker theoretical guarantees and a worse running time than our main algorithm. A direction of considerable practical interest is to evaluate empirically the performance of these two estimators, either for estimating all the leverage scores or (more interestingly) for estimating the largest leverage
scores for data matrices for which the leverage scores are quite nonuniform.

\subsection{An application to under-constrained least-squares problems}

Consider the following under-constrained least-squares problem:
\begin{equation}\label{eqn:underLS}
\min_{x \in \mathbb{R}^d}\TNorm{Ax - b},
\end{equation}
where $A \in \mathbb{R}^{n \times d}$ has much fewer rows than columns, \emph{i.e.}, $n \ll d$. It is well-known that we can solve this problem exactly in $O(n^2d)$ time and that the minimal $\ell_2$-norm solution is given by
$ x_{opt} = A^{\dagger} b. $
For simplicity, let's assume that the input matrix  $A$ has full rank (\emph{i.e.}, $\rank{A}=n$) and thus $\TNorm{Ax_{opt} - b} = 0$.

In this section, we will argue that Algorithm~\ref{alg:alg_sample_fast} computes a simple, accurate estimator $\tilde{x}_{opt}$ for $x_{opt}$. In words, Algorithm~\ref{alg:alg_sample_fast} samples a small number of columns from $A$ (note that the columns of $A$ correspond to variables in our under-constrained problem) and uses the sampled columns to compute $\tilde{x}_{opt}$. However, in order to determine which columns will be included in the sample, the algorithm will make use of the statistical leverage scores of the matrix $A^T$; more specifically, columns (and thus variables) will be chosen with probability proportional to the corresponding statistical leverage score. We will state Algorithm~\ref{alg:alg_sample_fast} assuming that these probabilities are parts of the input; the following theorem is our main quality-of-approximation result for Algorithm~\ref{alg:alg_sample_fast}.
\begin{theorem}\label{thm:underLS}
Let $A \in \mathbb{R}^{n \times d}$ be a full-rank matrix with $n \ll d$; let $\epsilon \in (0,0.5]$ be an accuracy parameter; let $\delta \in (0,1)$ be a failure probability; and let $x_{opt} = A^{\dagger} b$ be the minimal $\ell_2$-norm solution to the least-squares problem of Eqn.~(\ref{eqn:underLS}). Let $p_i \geq 0$, $i \in [d]$, be a set of probabilities satisfying $\sum_{i=1}^d p_i = 1$ and
\begin{equation}\label{eqn:beta_underLS}
p_i \geq \frac{\beta\TNormS{V_{(i)}}}{n}
\end{equation}
for some constant $\beta \in (0,1]$. (Here $V \in \mathbb{R}^{d \times n}$ is the matrix of the right singular vectors of $A$.) If $\tilde{x}_{opt}$ is computed via Algorithm~\ref{alg:alg_sample_fast} then, with probability at least $1-\delta$,
$$\TNorm{x_{opt}-\tilde{x}_{opt}}\leq 2\epsilon \TNorm{x_{opt}}.$$
Algorithm~\ref{alg:alg_sample_fast} runs in $O\left(n^3\epsilon^{-2}\beta^{-1}\ln \left(n/ \epsilon \beta \delta\right)+nd\right)$ time.
\end{theorem}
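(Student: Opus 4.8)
The plan is to reduce the whole statement to a single spectral-norm concentration fact about the column-sampling matrix, after which everything is deterministic linear algebra. First I would fix the SVD $A=U\Sigma V^T$ with $U\in\mathbb{R}^{n\times n}$ orthogonal, $\Sigma\in\mathbb{R}^{n\times n}$ invertible, and $V\in\mathbb{R}^{d\times n}$ with orthonormal columns (all possible since $\rank{A}=n$), and record that $A^\dagger=A^T(AA^T)^{-1}=V\Sigma^{-1}U^T$, so that $x_{opt}=A^\dagger b=Vz$ with $z=\Sigma^{-1}U^Tb$ and $\TNorm{x_{opt}}=\TNorm{z}$. Next I would write down what Algorithm~\ref{alg:alg_sample_fast} actually computes: it draws $c$ columns of $A$ i.i.d. with probabilities $p_i$, rescales them (encode this by a matrix $S\in\mathbb{R}^{d\times c}$, so that $ASS^TA^T=(AS)(AS)^T$ is the usual unbiased estimator of $AA^T$), solves the sampled system $(ASS^TA^T)w=b$, and returns $\tilde x_{opt}=A^Tw=A^T(ASS^TA^T)^{-1}b$. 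Substituting the SVD, $AS=U\Sigma W$ with $W=V^TS$, hence $ASS^TA^T=U\Sigma M\Sigma U^T$ with $M=V^TSS^TV$, and therefore $\tilde x_{opt}=VM^{-1}z$ as soon as $M$ is invertible. It is worth flagging here why the final multiplication by $A^T$ is essential: the cheaper ``solve the downsampled problem and lift'' estimator $S(AS)^\dagger b$ equals $SS^TVM^{-1}z$, whose error carries a component $(I-VV^T)SS^TVz$ outside $\operatorname{range}(V)$ that is generically of order $\TNorm{x_{opt}}$; multiplying by $A^T$ forces the estimate into the row space of $A$, where $x_{opt}$ actually lives.

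The technical core is to show that, under the hypothesis $p_i\ge\beta\TNormS{V_{(i)}}/n$, one has $\TNorm{I_n-V^TSS^TV}\le\epsilon$ with probability at least $1-\delta$ once $c=\Omega\!\left(n\epsilon^{-2}\beta^{-1}\ln(n/\epsilon\beta\delta)\right)$, which is the sample size the algorithm uses. I would view $M=V^TSS^TV=\tfrac1c\sum_{t=1}^{c}\tfrac{1}{p_{i_t}}V_{(i_t)}^TV_{(i_t)}$ as an average of $c$ i.i.d. symmetric PSD rank-one matrices, each with mean $V^TV=I_n$ and spectral norm at most $\max_i \TNormS{V_{(i)}}/p_i\le n/\beta$, and then invoke a matrix Chernoff/Bernstein bound of the type used in~\cite{DMMS07_FastL2_NM10}; the boundedness parameter $n/\beta$ and the target accuracy $\epsilon$ combine to give exactly the stated sample size. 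This is the step I expect to be the main obstacle — not because it is deep, but because getting the constants and the precise $\ln(n/\epsilon\beta\delta)$ dependence to match requires care in choosing which concentration inequality to apply.

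Conditioning on that event, $M$ is invertible with $\TNorm{M^{-1}}\le 1/(1-\epsilon)$, so $ASS^TA^T=U\Sigma M\Sigma U^T$ is invertible, the estimator is well-defined, and $\tilde x_{opt}=VM^{-1}z$. Then
\mand{
\TNorm{x_{opt}-\tilde x_{opt}}=\TNorm{V(I_n-M^{-1})z}\le\TNorm{I_n-M^{-1}}\,\TNorm{z}=\TNorm{M^{-1}(M-I_n)}\,\TNorm{x_{opt}}\le\frac{\epsilon}{1-\epsilon}\TNorm{x_{opt}}\le 2\epsilon\,\TNorm{x_{opt}},
}
using $\TNorm{V}=1$ in the first inequality and $\epsilon\le 1/2$ in the last; this is the claimed error bound, and the failure probability is exactly the $\delta$ of the concentration step.

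For the running time I would account for each stage: sampling the $c$ indices from the (given) distribution and forming the rescaled sample $AS$ costs $O(d+nc)$; forming $ASS^TA^T=(AS)(AS)^T$ costs $O(n^2c)$; solving the resulting $n\times n$ positive-definite system for $w$ costs $O(n^3)$; and the final product $A^Tw$ costs $O(nd)$. With $c=\Theta\!\left(n\epsilon^{-2}\beta^{-1}\ln(n/\epsilon\beta\delta)\right)$ the dominant contributions are $O(n^2c)=O\!\left(n^3\epsilon^{-2}\beta^{-1}\ln(n/\epsilon\beta\delta)\right)$ and $O(nd)$, which is the stated bound.
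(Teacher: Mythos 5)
Your proposal is correct and follows essentially the same route as the paper: the same reduction via the SVD to showing $\TNorm{I_n - V^TSS^TV}\le\epsilon$ (which the paper also obtains by citing the approximate matrix-multiplication concentration bound of~\cite{DMMS07_FastL2_NM10} for exactly this sample size), the same identification of the estimator as $VM^{-1}z$ versus $x_{opt}=Vz$, and the same $\epsilon/(1-\epsilon)\le 2\epsilon$ perturbation bound; the only cosmetic difference is that you bound $\TNorm{I_n-M^{-1}}$ via $\TNorm{M^{-1}}\TNorm{M-I_n}$ while the paper bounds it via the singular values of $V^TS$.
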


\begin{Proof}
Let the singular value decomposition of the full-rank matrix $A$ be $A = U\Sigma V^T$, with $U \in \mathbb{R}^{n \times n}$, $\Sigma \in \mathbb{R}^{n \times n}$, and $V \in \mathbb{R}^{d \times n}$; note that all the diagonal entries of $\Sigma$ are strictly positive since $A$ has full rank. We can now apply Theorem 4 of Section 6.1 of~\cite{DMMS07_FastL2_NM10} to get\footnote{We apply Theorem 4 of Section 6.1 of~\cite{DMMS07_FastL2_NM10} with $A = V^T$ and note that $\FNormS{V^T} = n \geq 1$, $\TNorm{V^T}=1$, and $\left(V^T\right)^{(i)} = V_{(i)}$.} that
\begin{equation}
\TNorm{I_n - V^TSS^TV} = \TNorm{V^TV - V^TSS^TV} \leq \epsilon
\end{equation}
for our choice of $r$ with probability at least $1-\delta$. Note that $V^TS \in \mathbb{R}^{n \times r}$ (with $r \geq n$) and let $\sigma_{i}\left(V^TS\right)$ denote the singular values of $V^TS$ for all $i \in [n]$; the above inequality implies that for all $i \in [n]$
$$\abs{1-\sigma_i^2\left(V^TS\right)}\leq \TNorm{I_n - V^TSS^TV} \leq \epsilon \leq 0.5.$$
Thus, all the singular values of $V^TS$ are strictly positive and hence $V^TS$ has full rank equal to $n$. Also, using $\epsilon \leq 0.5$,
\begin{equation}\label{eqn:possvs}
\abs{1-\sigma_i^{-2}\left(V^TS\right)}\leq \frac{\epsilon}{1-\epsilon} \leq 2\epsilon.
\end{equation}
We are now ready to prove our theorem:
\begin{eqnarray*}
\TNorm{x_{opt}-\tilde{x}_{opt}} &=& \TNorm{A^T\left(AS\right)^{\dagger T}\left(AS\right)^{\dagger}b-A^{\dagger} b}\\
&=& \TNorm{V\Sigma U^T\left(U \Sigma V^T S\right)^{\dagger T}\left(U \Sigma V^T S\right)^{\dagger}b-V\Sigma^{-1}U^T b}\\
&=& \TNorm{\Sigma U^T U \Sigma^{-1} \left(V^T S\right)^{\dagger T}\left(V^T S\right)^{\dagger}\Sigma^{-1} U^T b-\Sigma^{-1}U^T b}\\
&=& \TNorm{\left(V^T S\right)^{\dagger T}\left(V^T S\right)^{\dagger}\Sigma^{-1}U^T b-\Sigma^{-1}U^T b}.
\end{eqnarray*}
In the above derivations we substituted the SVD of $A$, dropped terms that do not change unitarily invariant norms, and used the fact that $V^TS$ and $\Sigma$ have full rank in order to simplify the pseudoinverse. Now let $\left(V^T S\right)^{\dagger T}\left(V^T S\right)^{\dagger} = I_n + E$ and note that Eqn.~(\ref{eqn:possvs}) and the fact that $V^TS$ has full rank imply
$$\TNorm{E} = \TNorm{I_n - \left(V^T S\right)^{\dagger T}\left(V^T S\right)^{\dagger}} = \max_{i \in [n]} \abs{1-\sigma_{i}^{-2}\left(V^TS\right)} \leq 2\epsilon.$$
Thus, we conclude our proof by observing that
\begin{eqnarray*}
\TNorm{x_{opt}-\tilde{x}_{opt}} &=& \TNorm{\left(I_n+E\right)\Sigma^{-1}U^T b - \Sigma^{-1}U^T b}\\
&=& \TNorm{E\Sigma^{-1}U^T b}\\
&\leq& \TNorm{E}\TNorm{\Sigma^{-1}U^T b}\\
&\leq& 2\epsilon\TNorm{x_{opt}}.
\end{eqnarray*}
In the above we used the fact that $\TNorm{x_{opt}}=\TNorm{A^{\dagger} b} = \TNorm{V\Sigma^{-1}U^T b} = \TNorm{\Sigma^{-1}U^T b}$. The running time of the algorithm follows by observing that $AS$ is an $n \times r$ matrix and thus computing its pseudoinverse takes $O(n^2r)$ time; computing $x_{opt}$ takes an additional $O(nr + dn)$ time.
\end{Proof}

%-----------------------------------------------------------------------
\begin{algorithm}[h]
\begin{framed}

\textbf{Input:} $A \in \mathbb{R}^{n \times d}$, $b \in
\mathbb{R}^n$, error parameter $\epsilon \in (0,.5]$, failure probability $\delta$, and a set of probabilities $p_i$ (for all $i \in [d]$) summing up to one and satisfying Eqn.~(\ref{eqn:beta_underLS}).

\vspace{0.1in}

\textbf{Output:} $\tilde{x}_{opt} \in \mathbb{R}^d$.

\begin{enumerate}

\item Let $r = \frac{96n}{\beta \epsilon^2}\ln \left(\frac{96n}{\beta \epsilon^2 \sqrt{\delta}}\right)$.

\item Let $S \in \mathbb{R}^{d \times r}$ be an all-zeros matrix.

\item \textbf{For} $t=1,\ldots,r$ \textbf{do}

\begin{itemize}

\item Pick $i_t \in [d]$ such that $\Prob\left(i_t = i\right) = p_i$.

\item $S_{i_t t} = 1/\sqrt{rp_{i_t}}$.

\end{itemize}

\item Return $\tilde{x}_{opt} = A^T \left(AS\right)^{\dagger T} \left(AS\right)^{\dagger}b $.
\end{enumerate}

\end{framed}
\caption{Approximately solving under-constrained least squares problems.} \label{alg:alg_sample_fast}
\end{algorithm}
%-----------------------------------------------------------------------

We conclude the section with a few remarks. First, assuming that $\epsilon$, $\beta$, and $\delta$ are constants and $n \ln n = o(d)$, it immediately follows that Algorithm~\ref{alg:alg_sample_fast} runs in $o(n^2d)$ time. It should be clear that we  can use Theorem~\ref{thm:main_result} and the related Algorithm~\ref{alg:Leverage_Scores} to approximate the statistical leverage scores, thus bypassing the need to exactly compute them. Second, instead of approximating the statistical leverage scores needed in Algorithm~\ref{alg:alg_sample_fast}, we could use the randomized Hadamard transform (essentially post-multiply $A$ by a randomized Hadamard transform to make all statistical leverage scores uniform). The resulting algorithm could be theoretically analyzed following the lines of~\cite{DMMS07_FastL2_NM10}. It would be interesting to evaluate experimentally the performance of the two approaches in real data. 

\subsection{Extension to streaming environments}

In this section, we consider the estimation of the leverage scores and of related
statistics when the input data set is so large that an appropriate way to
view the data is as a data stream~\cite{Muthu05}.
In this context, one is interested in computing statistics of the data
stream while making one pass (or occasionally a few additional passes) over
the data from external storage and using only a small amount of additional
space.
For an $n \times d$ matrix $A$, with $n \gg d$, small additional space
means that the space complexity only depends \emph{logarithmically} on the
high dimension $n$ and \emph{polynomially} on the low dimension $d$. When
we discuss bits of space, we assume that the entries of $A$ can be discretized
to $O(\log n)$ bit integers, though all of our results can be generalized to
arbitrary word sizes.
The general strategy behind our algorithms is
as follows.
\begin{itemize}
\item
As the data streams by, compute $TA$, for an appropriate problem-dependent
linear sketching matrix $T$, and also compute $\Pi A$, for a random
projection matrix $\Pi$.%
\footnote{In the offline setting, one would use an SRHT or another FJLT,
while in the streaming setting one could use either of the following.
If the stream is such that one sees each entire column of $A$ at once, then
one could do an FJLT on the column.
Alternatively, if one see updates to the individual entries of $A$ in an
arbitrary order, then one could apply any sketching matrix, such as those
of~\cite{Ach03_JRNL} or of~\cite{DKT10}.}
\item
After the first pass over the data, compute the matrix $R^{-1}$, as
described in Algorithm~\ref{alg:Leverage_Scores}, corresponding to $\Pi A$
(or compute the pseudoinverse of $\Pi A$ or the $R$ matrix from any other
QR decomposition of $A$).
\item
Compute $T A R^{-1} \Pi_2$, for a random projection matrix
$\Pi_2$, such as the one used by Algorithm~\ref{alg:Leverage_Scores}.
\end{itemize}
With the procedure outlined above, the matrix $T$ is effectively applied to
the rows of $AR^{-1}\Pi_2$, \emph{i.e.}, to the sketch of $A$ that
has rows with Euclidean norms approximately equal to the row norms of $U$,
and pairwise inner products approximately equal to those in $U$.
Thus statistics related to $U$ can be extracted.

%%%\subsubsection{Large Leverage Scores}
\paragraph{Large Leverage Scores.}
Given any $n \times d$ matrix $A$ in a streaming setting, it is known how
to find the indices of all rows $A_{(i)}$ of $A$ for which
$\|A_{(i)}\|_2^2 \geq \tau \|A\|_F^2$, for a parameter $\tau$, and in
addition it is known how to compute a $(1+\epsilon)$-approximation to
$\|A_{(i)}\|_2^2$ for these large rows. The basic idea is to use the notion
of $\ell_2$-sampling on matrix $A$, namely, to sample random entries $A_{ij}$
with probability $A_{ij}^2/\|A\|_F^2$. A single entry can be sampled from this
distribution in a single pass using $O(\epsilon^{-2}\log^3(nd))$ bits of space \cite{MW10,AKO10-TR}.
More precisely, these references demonstrate that there is a distribution over
$O(d \epsilon^{-2}\log^3 (nd)) \times n$ matrices $T$ for which for any fixed
matrix $A \in \mathbb{R}^{n \times d}$, there is a procedure which given $TA$,
outputs a sample $(i,j) \in [n] \times [d]$ with probability
$(1 \pm \epsilon) \frac{A_{i,j}^2}{\|A\|_F^2} \pm n^{-O(1)}$. Technically, these references concern
sampling from vectors rather than matrices, so $T(A)$ is a linear operator
which treats $A$ as a length-$nd$ vector and applies the algorithm of \cite{MW10,AKO10-TR}. However,
by simply increasing the number of rows in $T$ by a factor of the small dimension $d$, we can assume
$T$ is left matrix multiplication. By considering the marginal along $[n]$, the
probability that $i = a$, for any $a \in [n]$, is
$$
(1 \pm \epsilon)\frac{\|U_{(a)}\|_2^2}{\|U\|_F^2} \pm (nd)^{-O(1)}  .
$$
By the coupon collector problem, running $O(\tau^{-1} \log \tau^{-1})$ independent
copies is enough to find a set containing all rows $A_{(i)}$ for which
$\|A_{(i)}\|_2^2 \geq \tau \|A\|_F^2$, and no rows $A_{(i)}$ for which
$\|A_{(i)}\|_2^2 < \frac{\tau}{2} \|A\|_F^2$ with probability at least $0.99$.

When applied to our setting, we can apply a random
projection matrix $\Pi$ and a linear sketching matrix $T$ which has
$O(d\tau^{-1}\epsilon^{-2} \log^3(n) \log \tau^{-1})$ rows in the following manner.
First, $TA$ and $\Pi A$ are computed in the first pass over the data;
then, at the end of the first pass, we compute $R^{-1}$; and finally, we
compute $TAR^{-1} \Pi_2$, for a random projection matrix
$\Pi_2$.
This procedure effectively applies the matrix $T$ to the rows of
$AR^{-1}\Pi_2$, which have norms equal to the row norms of $U$, up
to a factor of $1+\epsilon$. The multiplication at the end by $\Pi_2$ serves only
to speed up the time for processing $TAR^{-1}$.
Thus, by the results of~\cite{MW10,AKO10-TR}, we can find all the leverage scores
$\|U_{(i)}\|_2^2$ that are of magnitude at least $\tau \|U\|_F^2$ in small
space and a single pass over the data. By increasing the space by a factor of $O(\epsilon^{-2} \log n)$,
we can also use the $\ell_2$-samples to estimate the norms $\|U_{(i)}\|_2^2$ for the row indices
$i$ that we find.

%%%\subsubsection{Entropy}
%%%\subsubsection{Number of Non-zero Leverage Scores}
\paragraph{Entropy.}
Given a distribution $\rho$, a statistic of $\rho$ of interest
is the entropy of
this distribution, where the entropy is defined as
$H(\rho) = \sum_i \rho(i) \log_2 ( 1/\rho(i) ) $.
This statistic can be approximated in a streaming setting.
Indeed, it is known that estimating $H(\rho)$ up to an additive $\epsilon$
can be reduced to
$(1+\tilde{\epsilon})$-approximation of the $\ell_p$-norm of the vector
$ \left( \rho(1), \ldots, \rho(n) \right)  $,
%% $ \left( \|(U_A)_{(1)}\|_2^2, \ldots, \|(U_A)_{(n)}\|_2^2 \right)  , $
for $O(\log 1/\epsilon)$ different $p \in (0,1)$~\cite{HNO08}. Here
$\tilde{\epsilon} = \epsilon/(\log^3 1/\epsilon \cdot \log n)$.
When applied to our setting, the distribution of interest is
%% one can define a distribution $\rho$ on $[n]$ as
$\rho(i) = \frac{1}{d}\|U_{(i)}\|_2^2$.
To compute the entropy of this distribution, there exist
sketching matrices $T$ for providing $(1+\epsilon)$-approximations to the
quantity $F_p(F_2)$ of an $n \times d$ matrix $A$, where $F_p(F_2)$ is
defined as $\sum_{i=1}^n \|A_{(i)}\|_2^{2p}$, using $O(\epsilon^{-4}\log^2 n \log 1/\epsilon)$
bits of space~(see Theorem 1 of \cite{GBD08}).
Thus, to compute the entropy of the leverage score distribution, we can
do the following.
First, maintain $TA$ and $\Pi A$ in the first pass over the data, where
$T$ is a sketching matrix for $F_p(F_2)$, $p \in (0,1)$.
At the end of the first pass, compute $R^{-1}$;
and finally, compute $TAR^{-1}\Pi_2$, which effectively applies
the $F_p(F_2)$-estimation matrix $T$ to the rows of the matrix
$AR^{-1}\Pi_2$.
Therefore, by the results of~\cite{HNO08,GBD08}, we can compute an
estimate $\phi$ which is within an additive $\epsilon$ of $H(\rho)$ using
$O(d \epsilon^{-4} \log^6 n \log^{14} 1/\epsilon)$ bits of space and a single pass.
We note that it is also possible to estimate $H(\rho)$ up to a multiplicative
$1+\epsilon$ factor using small, but more, space; see, \emph{e.g.}, \cite{HNO08}.

\paragraph{Sampling Row Identities.}
Another natural problem is that of obtaining samples of rows of $A$
proportional to their leverage score importance sampling probabilities.
To do so, we use $\ell_2$-sampling ~\cite{MW10,AKO10-TR} as used
above for finding the large leverage scores. First, compute $TA$ and $\Pi A$ in the first pass over the data stream;
then, compute $R^{-1}$; and
finally, compute $TAR^{-1}$. Thus, by applying the procedures of~\cite{AKO10-TR} a total of $s$
times independently, we obtain $s$ samples $i_1, \ldots, i_s$, with
replacement, of rows of $A$ proportional to
$\|U_{(i_1)}\|_2^2, \ldots, \|U_{(i_s)}\|_2^2$, \emph{i.e.}, to
their leverage score.
The algorithm requires $O(sd \epsilon^{-2} \log^4 n)$ bits of space and runs in a
single pass.
To obtain more than just the row identities $i_1, \ldots, i_s$, \emph{e.g.},
to obtain the actual samples, one can read off these rows from $A$ in a
second pass over the matrix.

%%\bibliographystyle{plain}
%%\bibliography{communities,mwmbib_jrnl,mwmbib_proc,mwmbib_book,mwmbib_misc,mwmbib_drft,communities}

\begin{thebibliography}{10}

\bibitem{Ach03_JRNL}
D.~Achlioptas.
\newblock Database-friendly random projections: Johnson-lindenstrauss with
  binary coins.
\newblock {\em Journal of Computer and System Sciences}, 66(4):671--687, 2003.

\bibitem{AC06}
N.~Ailon and B.~Chazelle.
\newblock Approximate nearest neighbors and the fast {J}ohnson-{L}indenstrauss
  transform.
\newblock In {\em Proceedings of the 38th Annual ACM Symposium on Theory of
  Computing}, pages 557--563, 2006.

\bibitem{AC06-JRNL09}
N.~Ailon and B.~Chazelle.
\newblock The fast {J}ohnson-{L}indenstrauss transform and approximate nearest
  neighbors.
\newblock {\em SIAM Journal on Computing}, 39(1):302--322, 2009.

\bibitem{AL08}
N.~Ailon and E.~Liberty.
\newblock Fast dimension reduction using {R}ademacher series on dual {BCH}
  codes.
\newblock In {\em Proceedings of the 19th Annual ACM-SIAM Symposium on Discrete
  Algorithms}, pages 1--9, 2008.

\bibitem{AKO10-TR}
A.~Andoni, R.~Krauthgamer, and K.~Onak.
\newblock Streaming algorithms from precision sampling.
\newblock Technical report.
\newblock Preprint: arXiv:1011.1263 (2010).

\bibitem{AMT10}
H.~Avron, P.~Maymounkov, and S.~Toledo.
\newblock Blendenpik: Supercharging {LAPACK}'s least-squares solver.
\newblock {\em SIAM Journal on Scientific Computing}, 32:1217--1236, 2010.

\bibitem{BCF10}
C.~Bekas, A.~Curioni, and I.~Fedulova.
\newblock Low cost high performance uncertainty quantification.
\newblock In {\em Proceedings of the 2nd Workshop on High Performance
  Computational Finance}, page Article No.: 8, 2009.

\bibitem{BKS07}
C.~Bekas, E.~Kokiopoulou, and Y.~Saad.
\newblock An estimator for the diagonal of a matrix.
\newblock {\em Applied Numerical Mathematics}, 57:1214--1229, 2007.

\bibitem{Bon87}
P.~Bonacich.
\newblock Power and centrality: A family of measures.
\newblock {\em The American Journal of Sociology}, 92(5):1170--1182, 1987.

\bibitem{BDM11_TR}
C.~Boutsidis, P.~Drineas, and M.~Magdon-Ismail.
\newblock Near-optimal column-based matrix reconstruction.
\newblock Technical report.
\newblock Preprint: arXiv:1103.0995 (2011).

\bibitem{BDM11}
C.~Boutsidis, P.~Drineas, and M.~Magdon-Ismail.
\newblock Near-optimal column-based matrix reconstruction.
\newblock In {\em Proceedings of the 52nd Annual IEEE Symposium on Foundations
  of Computer Science}, pages 305--314, 2011.

\bibitem{BMD09_CSSP_SODA}
C.~Boutsidis, M.W. Mahoney, and P.~Drineas.
\newblock An improved approximation algorithm for the column subset selection
  problem.
\newblock In {\em Proceedings of the 20th Annual ACM-SIAM Symposium on Discrete
  Algorithms}, pages 968--977, 2009.

\bibitem{CR08_TR}
E.J. Candes and B.~Recht.
\newblock Exact matrix completion via convex optimization.
\newblock Technical report.
\newblock Preprint: arXiv:0805.4471 (2008).

\bibitem{CH86}
S.~Chatterjee and A.S. Hadi.
\newblock Influential observations, high leverage points, and outliers in
  linear regression.
\newblock {\em Statistical Science}, 1(3):379--393, 1986.

\bibitem{ChatterjeeHadi88}
S.~Chatterjee and A.S. Hadi.
\newblock {\em Sensitivity Analysis in Linear Regression}.
\newblock John Wiley \& Sons, New York, 1988.

\bibitem{ChatterjeeHadiPrice00}
S.~Chatterjee, A.S. Hadi, and B.~Price.
\newblock {\em Regression Analysis by Example}.
\newblock John Wiley \& Sons, New York, 2000.

\bibitem{DKT10}
A.~Dasgupta, R.~Kumar, and T.~Sarl\'{o}s.
\newblock A sparse {J}ohnson-{L}indenstrauss transform.
\newblock In {\em Proceedings of the 42nd Annual ACM Symposium on Theory of
  Computing}, pages 341--350, 2010.

\bibitem{dkm_matrix1}
P.~Drineas, R.~Kannan, and M.W. Mahoney.
\newblock Fast {Monte Carlo} algorithms for matrices {I}: Approximating matrix
  multiplication.
\newblock {\em SIAM Journal on Computing}, 36:132--157, 2006.

\bibitem{Paschou10a}
P.~Drineas, J.~Lewis, and P.~Paschou.
\newblock Inferring geographic coordinates of origin for {E}uropeans using
  small panels of ancestry informative markers.
\newblock {\em PLoS ONE}, 5(8):e11892, 2010.

\bibitem{DMM06}
P.~Drineas, M.W. Mahoney, and S.~Muthukrishnan.
\newblock Sampling algorithms for $\ell_2$ regression and applications.
\newblock In {\em Proceedings of the 17th Annual ACM-SIAM Symposium on Discrete
  Algorithms}, pages 1127--1136, 2006.

\bibitem{DMM08_CURtheory_JRNL}
P.~Drineas, M.W. Mahoney, and S.~Muthukrishnan.
\newblock Relative-error {CUR} matrix decompositions.
\newblock {\em SIAM Journal on Matrix Analysis and Applications}, 30:844--881,
  2008.

\bibitem{DMMS07_FastL2_NM10}
P.~Drineas, M.W. Mahoney, S.~Muthukrishnan, and T.~Sarl\'{o}s.
\newblock Faster least squares approximation.
\newblock {\em Numerische Mathematik}, 117(2):219--249, 2010.

\bibitem{GBD08}
S.~Ganguly, M.~Bansal, and S.~Dube.
\newblock Estimating hybrid frequency moments of data streams.
\newblock In {\em Proceedings of the 2nd Annual International Workshop on
  Frontiers in Algorithmics}, pages 55--66, 2008.

\bibitem{Sayan11-unpub}
S.~Georgiev and S.~Mukherjee.
\newblock Unpublished results, 2011.

\bibitem{GM12_InPrep}
A.~Gittens and M.~W. Mahoney.
\newblock In preparation. (2012).

\bibitem{GVL96}
G.H. Golub and C.F.~Van Loan.
\newblock {\em Matrix Computations}.
\newblock Johns Hopkins University Press, Baltimore, 1996.

\bibitem{HMT09_SIREV}
N.~Halko, P.-G. Martinsson, and J.~A. Tropp.
\newblock Finding structure with randomness: Probabilistic algorithms for
  constructing approximate matrix decompositions.
\newblock {\em SIAM Review}, 53(2):217--288, 2011.

\bibitem{HNO08}
N.J.A. Harvey, J.~Nelson, and K.~Onak.
\newblock Sketching and streaming entropy via approximation theory.
\newblock In {\em Proceedings of the 49th Annual IEEE Symposium on Foundations
  of Computer Science}, pages 489--498, 2008.

\bibitem{HW78}
D.C. Hoaglin and R.E. Welsch.
\newblock The hat matrix in regression and {ANOVA}.
\newblock {\em The American Statistician}, 32(1):17--22, 1978.

\bibitem{JLHB07}
E.~A. Jonckheere, M.~Lou, J.~Hespanha, and P.~Barooah.
\newblock Effective resistance of {G}romov-hyperbolic graphs: Application to
  asymptotic sensor network problems.
\newblock In {\em Proceedings of the 46th IEEE Conference on Decision and
  Control}, pages 1453--1458, 2007.

\bibitem{Malik10_TR}
M.~Magdon-Ismail.
\newblock Row sampling for matrix algorithms via a non-commutative {B}ernstein
  bound.
\newblock Technical report.
\newblock Preprint: arXiv:1008.0587 (2010).

\bibitem{Mah-mat-rev_BOOK}
M.~W. Mahoney.
\newblock {\em Randomized algorithms for matrices and data}.
\newblock Foundations and Trends in Machine Learning. NOW Publishers, Boston,
  2011.
\newblock Also available at: arXiv:1104.5557.

\bibitem{CUR_PNAS}
M.W. Mahoney and P.~Drineas.
\newblock {CUR} matrix decompositions for improved data analysis.
\newblock {\em Proc. Natl. Acad. Sci. USA}, 106:697--702, 2009.

\bibitem{MSM11_TR}
X.~Meng, M.~A. Saunders, and M.~W. Mahoney.
\newblock {LSRN}: A parallel iterative solver for strongly over- or
  under-determined systems.
\newblock Technical report.
\newblock Preprint: arXiv:1109.5981 (2011).

\bibitem{MW10}
M.~Monemizadeh and D.~P. Woodruff.
\newblock $1$-pass relative-error $l_p$-sampling with applications.
\newblock In {\em Proceedings of the 21st Annual ACM-SIAM Symposium on Discrete
  Algorithms}, pages 1143--1160, 2010.

\bibitem{Muthu05}
S.~Muthukrishnan.
\newblock {\em Data Streams: Algorithms and Applications}.
\newblock Foundations and Trends in Theoretical Computer Science. Now
  Publishers Inc, Boston, 2005.

\bibitem{Nashed76}
M.Z. Nashed, editor.
\newblock {\em Generalized Inverses and Applications}.
\newblock Academic Press, New York, 1976.

\bibitem{newman05_betweenness}
M.E.J. Newman.
\newblock A measure of betweenness centrality based on random walks.
\newblock {\em Social Networks}, 27:39--54, 2005.

\bibitem{PRTV00}
C.H. Papadimitriou, P.~Raghavan, H.~Tamaki, and S.~Vempala.
\newblock Latent semantic indexing: a probabilistic analysis.
\newblock {\em Journal of Computer and System Sciences}, 61(2):217--235, 2000.

\bibitem{Paschou10b}
P.~Paschou, J.~Lewis, A.~Javed, and P.~Drineas.
\newblock Ancestry informative markers for fine-scale individual assignment to
  worldwide populations.
\newblock {\em Journal of Medical Genetics}, page doi:10.1136/jmg.2010.078212,
  2010.

\bibitem{Paschou07b}
P.~Paschou, E.~Ziv, E.G. Burchard, S.~Choudhry, W.~Rodriguez-Cintron, M.W.
  Mahoney, and P.~Drineas.
\newblock {PCA}-correlated {SNP}s for structure identification in worldwide
  human populations.
\newblock {\em PLoS Genetics}, 3:1672--1686, 2007.

\bibitem{RD02}
M.~Richardson and P.~Domingos.
\newblock Mining knowledge-sharing sites for viral marketing.
\newblock In {\em Proceedings of the 8th Annual ACM SIGKDD Conference}, pages
  61--70, 2002.

\bibitem{RST09}
V.~Rokhlin, A.~Szlam, and M.~Tygert.
\newblock A randomized algorithm for principal component analysis.
\newblock {\em SIAM Journal on Matrix Analysis and Applications},
  31(3):1100--1124, 2009.

\bibitem{RT08}
V.~Rokhlin and M.~Tygert.
\newblock A fast randomized algorithm for overdetermined linear least-squares
  regression.
\newblock {\em Proc. Natl. Acad. Sci. USA}, 105(36):13212--13217, 2008.

\bibitem{Sarlos06}
T.~Sarl\'{o}s.
\newblock Improved approximation algorithms for large matrices via random
  projections.
\newblock In {\em Proceedings of the 47th Annual IEEE Symposium on Foundations
  of Computer Science}, pages 143--152, 2006.

\bibitem{TalRos10}
A.~Talwalkar and A.~Rostamizadeh.
\newblock Matrix coherence and the {N}ystr\"{o}m method.
\newblock In {\em Proceedings of the 26th Conference in Uncertainty in
  Artificial Intelligence}, 2010.

\bibitem{VW81}
P.F. Velleman and R.E. Welsch.
\newblock Efficient computing of regression diagnostics.
\newblock {\em The American Statistician}, 35(4):234--242, 1981.

\end{thebibliography}

\end{document}